\documentclass[onefignum,onetabnum]{siamonline171218}

\usepackage{graphicx}
\usepackage{amsmath,amssymb,amsfonts}
\usepackage{mathrsfs,mathtools}
\usepackage{subfig}
\usepackage[shortlabels]{enumitem}
\usepackage{algorithmic}
\usepackage[algo2e]{algorithm2e}
\usepackage{bm,bbm}




\newsiamremark{remark}{Remark}
\newsiamremark{hypothesis}{Hypothesis}
\crefname{hypothesis}{Hypothesis}{Hypotheses}
\newsiamthm{claim}{Claim}


\setcounter{topnumber}{3}
\setcounter{bottomnumber}{2}

\newtheorem{assumption}{Assumption}


\headers{Adaptive particle-based approximations of the Gibbs posterior for inverse problems}{Z. Zou}

\title{Adaptive particle-based approximations of the Gibbs posterior for inverse problems\thanks{
\funding{HA was partially supported by NSF grants DMS-1818772 
and DMS-1913004 and Air Force Office of Scientific Research under Award 
NO: FA9550-19-1-0036. WA and ZZ were partially supported by DARPA EQUiPS program, grant SNL 014150709. SM was partially supported
by NSF grants DMS 1613261, DEB-1840223, DMS 1713012 and HFSP RGP0051/2017.}}}

\author{Zilong Zou\thanks{Department of Civil and Environmental Engineering, Duke University, NC 
  (\email{zilong.zou@duke.edu})}
\and
Sayan Mukherjee\thanks{Departments of Statistical Science, Mathematics, Computer Science, Biostatistics \& Bioinformatics, Duke University, NC 
	(\email{sayan@stat.duke.edu}).}
\and
Harbir Antil\thanks{Department of Mathematical Sciences, George Mason University, VA 
  (\email{hantil@gmu.edu}).}
\and
  Wilkins Aquino\thanks{Department of Mechanical Engineering and Materials Science, Duke University, NC 
	(\email{wilkins.aquino@duke.edu}).} 
}

\begin{document}
\pagenumbering{arabic}

\maketitle
\begin{abstract}
In this work, we adopt a general framework based on the Gibbs posterior to update belief distributions for inverse problems governed by partial differential equations (PDEs). 
The Gibbs posterior formulation is a generalization of standard Bayesian inference that only relies on a loss function connecting the unknown parameters to the data. 
It is particularly useful when the true data generating mechanism (or noise distribution) is unknown or difficult to specify. 
The Gibbs posterior coincides with Bayesian updating when a true likelihood function is known and the loss function corresponds to the negative log-likelihood,  yet provides subjective inference in more general settings.   \\

We employ a sequential Monte Carlo (SMC) approach to approximate the Gibbs posterior using particles. 

To manage the computational cost of propagating increasing numbers of particles through the loss function, 
we employ a recently developed local reduced basis method to build an efficient surrogate loss function that is used in the Gibbs update formula in place of the true loss. 
We derive error bounds for our approximation and propose an adaptive approach to construct the surrogate model in an efficient manner. 
We demonstrate the efficiency of our approach through several numerical examples. 
\end{abstract}

\begin{keywords}
  Gibbs posterior, inverse problems with PDEs, Bayesian framework, local reduced basis, 
  adaptive Sequential Monte Carlo (SMC), particle method, error analysis.
\end{keywords}

\begin{AMS}
  49N45,  	
  45Q05,  	
  34A55,  	
  70F17,  	
  31B20,  	
  31A25,  	
  62C10,  	
  62F15,  	
  49K20,  	
  35S15,  	
  86A22,  	
  65N30.  	
\end{AMS}

\section{Introduction}

In Bayesian inverse problems, we need to infer some unknown system parameters from the noisy measurement of the system response. 
Such problems are ubiquitous in many application areas including medical imaging \cite{franck2016sparse, koutsourelakis2012novel}, heat conduction \cite{wang2004bayesian}, 
geosciences \cite{bui2013computational}, atmospheric and oceanic sciences \cite{bennett2005inverse}. 
The Bayesian approach has been a foundation for performing such inference from noisy or incomplete observations while allowing us to quantify the uncertainty of the inverse solution due to the inexactness of data \cite{cotter2010approximation, stuart2010inverse}.
The solution of a Bayesian inverse problem is a probability distribution over the parameter space, which is referred to as the posterior distribution. 
Except for very limited settings, e.g., a linear model with Gaussian prior and Gaussian noise, the analytical form of the posterior distribution is rarely tractable. 
In most cases, we can only approximate the posterior either through sampling or parametrization. 

The main contribution of this work is a novel method to solve Bayesian inverse problems, which employs adaptive reduced bases to efficiently explore the posterior at a fraction of the cost of existing methods.  Moreover, we develop this approach in the context of the Gibbs posterior; a concept that offers very attractive features for inverse problems (e.g. use of a loss function instead of a likelihood).

Markov chain Monte Carlo (MCMC) \cite{gamerman2006markov} is the most well-known and versatile method for Bayesian inverse problems \cite{stuart2010inverse}. 
MCMC requires only pointwise evaluation of the likelihood function to generate a stream of samples from the posterior distribution that can be subsequently used to compute the statistics of the posterior. 
In MCMC, however, the generation of each new sample requires multiple evaluations of the likelihood function, which can be extremely expensive depending on the underlying model for the system response.
In this work, for example, we assume the system is modeled by PDEs that are expensive to evaluate. 

To accelerate MCMC, one popular approach is to employ an inexpensive surrogate model to approximate the likelihood evaluation in the sampling procedure. 
Plenty of research efforts have been devoted to construct efficient surrogate models for such a purpose. 
For example, the stochastic spectral method is used in \cite{marzouk2007stochastic}, Gaussian process regression is used in \cite{kennedy2001bayesian} and projection-based model reduction is employed in \cite{cui2015data, manzoni2016accurate}. 
The surrogate models are typically constructed to be accurate over the support of the prior distribution 
\cite{frangos2010surrogate, manzoni2016accurate, marzouk2009stochastic, marzouk2009dimensionality, marzouk2007stochastic} and are thought to be ``globally accurate".
However, thanks to the information contained in the data, the posterior distribution typically concentrate on a much                              smaller portion of the support of the prior.
In this respect, requiring a ``globally accurate" surrogate model seems unnecessary and inefficient. 
Several recent studies have exploited such information (or posterior) and build adaptive 
and data-driven surrogate models that are more efficient and accurate on the support of the posterior \cite{ conrad2016accelerating, conrad2014accelerating, cui2015data, li2014adaptive}.
Significant computational savings can be realized through such adaptive methods. 

Recently, sequential Monte Carlo (SMC) methods, or particle filters \cite{bernardo2007sequential, del2006sequential, doucet2001introduction}, have been applied in the setting of Bayesian inverse problems by a few researchers \cite{beskos2015sequential, kantas2014sequential}.
In SMC, weighted samples, or particles, are generated and evolved to approximate a sequence of probability distributions which interpolate from the prior to the posterior. 
In \cite{kantas2014sequential} in particular, the authors employed a novel SMC method with a dimension-independent MCMC sampler \cite{cotter2013mcmc, law2014proposals} as the mutation kernel to invert the initial conditions for Navier-Stokes equations.
In \cite{beskos2015sequential}, the authors enhanced the SMC method in \cite{kantas2014sequential} and provided a proof of the dimension-independent convergence property of the SMC methods for inverse problems. 
Both works have demonstrated the computational efficiency of SMC methods for high-dimensional inverse problems. 
In addition, the versatility and self-adaptivity of SMC methods provide a natural framework for the adaptive construction of surrogate models that can be used to further speedup the computations for inverse problems. 

The majority of Bayesian methods for inverse problems rely on an exact noise model, typically assumed to be i.i.d. Gaussian, to perform inference. 
It is desirable to extend such inference to more general settings where a noise model is unavailable or modeling the data generating mechanism is challenging. 
The Gibbs posterior provides a way to update belief distributions in such general setting without the need of an explicit likelihood function. 
Instead, the Gibbs posterior are applicable where the unknown parameters are only connected to the data through a loss function \cite{alquier2016properties, bissiri2016general, syring2016robust}. 
In many inverse problems or inference problems, it can be a simpler task to specify a loss function than the true data generating mechanism, i.e., an explicit likelihood function, 
which is the biggest advantage of using the Gibbs posterior over the usual Bayesian approach. 

In this work, we employ a particle-based approach to approximate the Gibbs posterior for inverse problems.
To manage the computational cost of propagating increasing number of particles through the loss function, 
we employ a recently developed local reduced basis method (local RB) \cite{zou2017adaptive, zou2018adaptive} to build an efficient surrogate model for the loss function that is used in the Gibbs update formula in place of the true loss. 
Based on a sequential Monte Carlo (SMC) framework developed in \cite{kantas2014sequential}, 
we present a method to progressive approximate the Gibbs posterior by simultaneously evolving the particles and adapting the local RB surrogate model in a sequential manner.
The emphasis of the local RB surrogate is navigated to a small fraction of the parameter space, i.e., the support of the posterior, automatically by the evolving particles that progressively cluster over the support of the posterior.
Computational savings are achieved thanks to the local accuracy and the efficiency of the local RB method \cite{zou2018adaptive}. 
Indeed, once the local RB surrogate becomes accurate enough (specified by a parameter representing the approximation accuracy) over the local support of the posterior, further evolution of the particles takes minimal cost.
We derive error bounds for our approximation and demonstrate the computational efficiency of our approach through several numerical examples including advection-diffusion problems and elasticity imaging problems.

\section{Gibbs Posterior for inverse problems}\label{sec:gibbsPos}
We assume we have a system (governed by a PDE) with unknown deterministic parameters $\xi^*\in\Xi\subseteq \mathbb{R}^M$. For the sake of simplicity, we will work in finite dimensional spaces arising from a discretization of the underlying PDE model.  We further assume that we can observe noisy output $d\in\mathbb{R}^D$ from the system as
$$d = \mathcal{F}(\xi^*) + \epsilon,$$
where $ \mathcal{F}: \Xi \rightarrow \mathbb{R}^D$ is a model representing the system that maps each parameter to an observation. In our case, each evaluation of the map requires a numerical solution to a PDE. 
Moreover, $\epsilon\in\mathbb{R}^D$ is the random noise.
We define $d^* = \mathcal{F}(\xi^*)$ as the true data without noise.
We further assume we have a prior belief about $\xi^*$, which can be expressed as a prior distribution $\rho_0(\cdot):\Xi\rightarrow\mathbb{R}$. 
Without imposing strong assumptions or an explicit distribution model on $\epsilon$, we would like to integrate the information contained in the data into our belief about $\xi^*$. 

In a Gibbs posterior formulation, we do not need a likelihood function, instead, we have access to a given loss function $l(\cdot, \cdot):\Xi\times\mathbb{R}^D\rightarrow\mathbb{R}$. 
A loss function measures the discrepancy between the predicted output at a parameter to the  observation. For example, we could use
$$l(\xi, d) = \|\mathcal{F}(\xi)-d\|_{l_2}^2$$
as the loss function. 
Unlike a likelihood function that requires exact knowledge of the data generating mechanism (or noise model), loss functions are typically easier to specify for inverse problems, 
which is the biggest advantage of Gibbs posterior. 

When a set of observations $d_i,\ i=1,2,\dots, n$ are given,  we update our belief according to the following optimization problem
\begin{equation}\label{eq:gibbsOpt}
\rho(\xi)=\underset{\hat{\rho}\in P}{\text{argmin}} \ \int_{\Xi} W \sum_{i=1}^nl(\xi, d_i)\hat{\rho}(\xi)d\xi+D_{KL}(\hat{\rho}\| \rho_{0}).
\end{equation}
where $W$ is a weight for the loss that is yet to be specified. For now, we assume $W$ is a fixed positive constant and will describe possible methods to prescribe $W$ later on. Furthermore,
$D_{KL}(\rho\| \rho_{0})$ is the Kullback-Leibler (KL) divergence between the posterior and prior distributions and 
$P$ is the space of candidate posterior distributions of $\xi$. 
If we allow $P$ to contain all distributions over $\Xi$, we have an explicit update formula for $\rho(\xi)$ \cite{bissiri2016general} as
\begin{equation}\label{eq:gibbsPosFormula}
\rho(\xi)= \frac{\exp(-W\sum_{i=1}^nl(\xi, d_i))\rho_{0}(\xi)}{\int_{\Xi} \exp(-W\sum_{i=1}^nl(\xi, d_i))\rho_{0}(\xi)d\xi}.
\end{equation}
This is a coherent update formula in the sense that the use of sequential  data in a sequential manner yields the same distribution as if we used all the data simultaneously as in \eqref{eq:gibbsPosFormula}.
In addition, we can see from \eqref{eq:gibbsOpt} that the Gibbs posterior minimizes the expected loss with an added requirement that this posterior be close to the prior in the sense of the KL divergence. Also, notice that $W$ weights our relative belief between the information provided by the data versus the information in the prior.

Also, we can see that the usual Bayes rule is a special case of \eqref{eq:gibbsPosFormula} by using the negative log-likelihood as the loss function with $W=1$. 
Indeed, if we use $l(\xi, d_i) =-\log\left(\pi(d_i|\xi)\right)$ where $\pi(d_i|\xi)$ is the likelihood function, we get
$$\rho(\xi)= \frac{\prod_{i=1}^n\pi(d_i|\xi)\rho_{0}(\xi)}{\int_{\Xi} \prod_{i=1}^n \pi(d_i|\xi)\rho_{0}(\xi)d\xi}$$
which is the conventional Bayes rule.

Integrating more data points into the Gibbs update requires just a summation of the individual losses to form a cumulative loss $l(\xi, \{d_i\}_{i=1}^n) := \sum_{i=1}^nl(\xi, d_i)$. So, the dependence of the loss function on data is straightforward. Hence, without loss of generality, we denote a generic loss function $l(\xi)$ in the sequel for the sake of notation simplicity. In this case, the Gibbs update formula becomes
\begin{equation}\label{eq:gibbsPosFormula_L}
\rho(\xi)= \frac{\exp(-Wl(\xi))\rho_{0}(\xi)}{\int_{\Xi} \exp(-Wl(\xi))\rho_{0}(\xi)d\xi}.
\end{equation}

\section{Surrogate approximation}
In the update formula \eqref{eq:gibbsPosFormula_L}, evaluating $l(\xi)$ at each parameter $\xi$ requires an evaluation of a potentially expensive PDE model. 
In this work, we use a computationally inexpensive surrogate model $\overline{l}(\xi)$ to approximate the loss function $l(\xi)$. Although our exposition is generally applicable to any method used for building surrogate models, we will focus later on the integration of an adaptive local reduced basis approach \cite{zou2018adaptive} with the current Gibbs posterior framework.
 
Using $\overline{l}(\xi)$ for the Gibbs update in \eqref{eq:gibbsPosFormula_L} results in an approximate Gibbs posterior that can be sampled (approximated) at a low computational cost. 
However, it is important to understand the error in such an approximation in order to build an effective surrogate model with controlled accuracy. 
To this end, we first define the approximate Gibbs posterior $\overline{\rho}(\xi)$ as
\begin{equation}\label{eq:gibbsPosFormula_L_RB}
\overline{\rho}(\xi)= \frac{\exp(-W\overline{l}(\xi))\rho_{0}(\xi)}{\int_{\Xi} \exp(-W\overline{l}((\xi))\rho_{0}(\xi)d\xi}.
\end{equation}

To quantify the error introduced by using the surrogate $\overline{l}(\xi)$ in \eqref{eq:gibbsPosFormula_L_RB}, 
we derive a bound for the discrepancy between the approximate posterior $\overline{\rho}(\xi)$ and $\rho(\xi)$. 
For this purpose, we first state the following boundedness assumption on the loss function $l(\xi)$ and its surrogate $\overline{l}(\xi)$.
\begin{assumption}\label{as:gibbsLoss}
The loss functions $l(\xi)$ and $\overline{l}(\xi)$ are nonnegative and are uniformly bounded from above:
$\exists\ C_l, C_{\overline{l}} > 0$ independent of $\xi\in\Xi$ such that for all $\xi\in\Xi$
\begin{equation}
0 \le l(\xi) \le C_l, \ 0 \le \overline{l}(\xi) \le C_{\overline{l}}.
\end{equation}
\end{assumption}

To measure the distance between probability distributions, we use the  metric
$$h(\rho_1, \rho_2) = \sup_{|f|_{\infty}\le 1}\sqrt{\mathbb{E}|\rho_1[f]-\rho_2[f]|^2}, $$
where $\rho_1, \rho_2\in P$ are two possibly random elements in $P$,
the supremum is over all $f:\Xi\rightarrow\mathbb{R}$ such that $\sup_{\xi\in\Xi} |f(\xi)|\le 1$, and $\rho[f]=\int_{\Xi}f(\xi)\rho(\xi)d\xi$. 
The expectation is with respect to the randomness of $\rho_1, \rho_2$. 
In case where $\rho_1$ is determined, and $\rho_2$ is an approximation to $\rho_1$ through a randomized algorithm, e.g., Monte Carlo, the expectation is with respect to the randomness of the algorithm.
Note that $h$ is indeed a metric on $P$, in particular, it satisfies the triangle inequality \cite{beskos2015sequential, rebeschini2015can}. 

In addition, we define an $e$-feasible set as
\begin{equation}\label{eq:gibbs_e_feasible}
\Xi_e :=\{ \xi\in \Xi:  |l(\xi) - \overline{l}(\xi)|\le e \}
\end{equation}
where $e>0$ is some constant indicating the accuracy of the surrogate model $\overline{l}(\xi)$. 
We always assume that $e$ is small, e.g., $We\ll 1$. 
The set $\Xi_e$ contains all the parameters where the surrogate is accurate in the sense that the absolute difference between $l(\xi)$ and the $\overline{l}(\xi)$ is bounded by $e$. 
The complement of $\Xi_e$ is denoted by $\Xi_e^{\perp}:=\Xi\setminus\Xi_e$. 
Now, we state the following theorem regarding the accuracy of $\overline{\rho}(\xi)$, 
\begin{theorem}\label{thm:gibbsBdd_continuous}
Under Assumption \ref{as:gibbsLoss}, the following bound holds:
\begin{align}
    h(\rho, \overline{\rho}) \le 2\exp(WC_l) \,C \, W e + 2\exp(WC_l + W\max\{C_l, C_{\overline{l}}\}) \rho[\mathbbm{1}_{\Xi_e^{\perp}}] ,\nonumber
  \end{align}
  for some constants $C > 0$. 
\end{theorem}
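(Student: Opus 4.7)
The plan is to bound $|\rho[f] - \overline{\rho}[f]|$ uniformly over test functions $f$ with $\sup_{\xi}|f(\xi)| \le 1$. Since both posteriors defined in \eqref{eq:gibbsPosFormula_L} and \eqref{eq:gibbsPosFormula_L_RB} are deterministic, the expectation in the definition of $h$ is trivial and $h(\rho,\overline{\rho}) = \sup_{|f|_\infty \le 1}|\rho[f] - \overline{\rho}[f]|$. Writing $Z$ and $\overline{Z}$ for the two normalizing constants, I would first apply the standard algebraic identity
$$\rho[f] - \overline{\rho}[f] = \frac{1}{Z}\int f(\xi)\bigl(\exp(-Wl(\xi)) - \exp(-W\overline{l}(\xi))\bigr)\rho_0(\xi)\,d\xi + \overline{\rho}[f]\,\frac{\overline{Z} - Z}{Z}.$$
Combined with $|f|_\infty \le 1$, $|\overline{\rho}[f]| \le 1$, and $|\overline{Z} - Z| \le \int |\exp(-Wl) - \exp(-W\overline{l})|\rho_0\,d\xi$, this yields the master estimate
$$|\rho[f] - \overline{\rho}[f]| \le \frac{2}{Z}\int_{\Xi}\bigl|\exp(-Wl(\xi)) - \exp(-W\overline{l}(\xi))\bigr|\rho_0(\xi)\,d\xi.$$

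I would then split the integral as $\int_{\Xi} = \int_{\Xi_e} + \int_{\Xi_e^{\perp}}$ and treat the two regions very differently. On $\Xi_e$ the mean value theorem applied to $x \mapsto e^{-Wx}$, combined with the nonnegativity assumption, gives $|\exp(-Wl) - \exp(-W\overline{l})| \le W|l - \overline{l}| \le We$, producing a contribution proportional to $We/Z$ (the constant $C$ in the theorem absorbing $\rho_0[\mathbbm{1}_{\Xi_e}] \le 1$ and any Lipschitz factor picked up along the way). On $\Xi_e^{\perp}$ the surrogate error is uncontrolled, so I would discard the cancellation and use $|\exp(-Wl) - \exp(-W\overline{l})| \le \exp(-Wl) + \exp(-W\overline{l})$; the key step is the change-of-measure estimate
$$\exp(-W\overline{l}) = \exp(-Wl)\,\exp\bigl(W(l - \overline{l})\bigr) \le \exp\bigl(W\max\{C_l, C_{\overline{l}}\}\bigr)\,\exp(-Wl),$$
which uses $|l - \overline{l}| \le \max\{l,\overline{l}\} \le \max\{C_l, C_{\overline{l}}\}$ from Assumption~\ref{as:gibbsLoss}. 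This expresses both summands as constant multiples of $\int_{\Xi_e^{\perp}}\exp(-Wl)\rho_0\,d\xi = Z\,\rho[\mathbbm{1}_{\Xi_e^{\perp}}]$, yielding the characteristic factor $\rho[\mathbbm{1}_{\Xi_e^{\perp}}]$ in the statement.

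Finally I would lower-bound the normalizer by $Z \ge \exp(-WC_l)\int\rho_0\,d\xi = \exp(-WC_l)$, equivalently $1/Z \le \exp(WC_l)$, producing the $\exp(WC_l)$ prefactors in both terms of the bound. The main technical point — and the only slightly non-routine step — is converting the $\Xi_e^{\perp}$ error from the prior to the posterior: bounding by $\rho_0[\mathbbm{1}_{\Xi_e^{\perp}}]$ would be far too loose and would not reflect the practical situation in which the particles concentrate on the high-posterior region, so the change of measure above is essential to producing a bound that is informative once the surrogate is accurate on the effective support of $\rho$. Once that step is carried out, the stated inequality follows by collecting constants from the $\Xi_e$ and $\Xi_e^{\perp}$ contributions and taking the supremum over $f$.
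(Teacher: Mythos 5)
Your argument is essentially the paper's own proof: the same two-term decomposition of $\rho[f]-\overline{\rho}[f]$ via the normalizing constants, the same split of the integral over $\Xi_e$ and $\Xi_e^{\perp}$, the same change of measure $\exp(-Wl(\xi))\rho_0(\xi) = Z\,\rho(\xi)$ to convert the bad-set contribution into $\rho[\mathbbm{1}_{\Xi_e^{\perp}}]$, and the same lower bound $Z \ge \exp(-WC_l)$. The only differences are cosmetic: your mean-value-theorem estimate on $\Xi_e$ is cleaner than the paper's $|1-\exp(W(l-\overline{l}))|\le CWe$ (which implicitly needs $We\ll 1$), and your route gives the second coefficient as $2(1+\exp(W\max\{C_l,C_{\overline{l}}\}))$ rather than the displayed $2\exp(WC_l+W\max\{C_l,C_{\overline{l}}\})$ --- a harmless discrepancy of at most a factor of $2$ in the same qualitative bound.
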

\begin{proof}
First, note that for $\forall\ \xi\in\Xi_e$, we have that $-e\le l(\xi)-\overline{l}(\xi)\le e$. For $e$ sufficiently small, e.g., $We\ll 1$, we have
\begin{align}
|\exp(-Wl(\xi))-\exp(-W\overline{l}(\xi))| &\le \exp(-Wl(\xi)) |1-\exp(Wl(\xi)-W\overline{l}(\xi))| \le C W e\nonumber
\end{align}
for some $C > 0$. Let
\begin{align}\label{eq:Z1Z2}
Z_1= \int_{\Xi} \exp(-Wl(\xi))\rho_{0}(\xi)d\xi, \quad 
Z_2= \int_{\Xi} \exp(-W\overline{l}(\xi))\rho_{0}(\xi)d\xi.
\end{align}
Using Assumption \ref{as:gibbsLoss} and the fact that $|f|_{\infty}\le 1$, we have
\begin{align}
Z_1 &=  \int_{\Xi} \exp(-Wl(\xi))\rho_{1}(\xi)d\xi \ge \exp(-W C_l), \nonumber\\
Z_2 &\ge \int_{\Xi} \exp(-W\overline{l}(\xi))\rho_{2}(\xi)|f(\xi)|d\xi. \nonumber
\end{align}
In addition, we have 
\begin{align}
|Z_1-Z_2| &\le \int_{\Xi_e} |\exp(-Wl(\xi))-\exp(-W\overline{l}(\xi))| \rho_{0}(\xi)d\xi \nonumber\\
&+ \int_{\Xi_e^{\perp}}  |1-\exp(Wl(\xi)-W\overline{l}(\xi))|  \exp(-Wl(\xi)) \rho_{0}(\xi)d\xi  \nonumber\\
&\le C W e +  \int_{\Xi_e^{\perp}}  |1-\exp(Wl(\xi)-W\overline{l}(\xi))|  Z_1 \rho(\xi) d\xi \nonumber\\
&\le C W e + \exp(W\max\{C_l, C_{\overline{l}}\}) \rho[\mathbbm{1}_{\Xi_e^{\perp}}]. \nonumber
\end{align}
Hence
\begin{align}
|\rho [f]-\overline{\rho} [f]| &=  \left|\frac{\int_{\Xi}\exp(-Wl(\xi))\rho_0(\xi)f(\xi)d\xi}{\int_{\Xi} \exp(-Wl(\xi))\rho_0(\xi)d\xi}-\frac{\int_{\Xi}\exp(-W\overline{l}(\xi))\rho_0(\xi)f(\xi)d\xi}{\int_{\Xi} \exp(-W\overline{l}(\xi))\rho_0(\xi)d\xi}\right|\nonumber\\
&\le \frac{\int_{\Xi} |\exp(-Wl(\xi)) - \exp(-W\overline{l}(\xi)) | \rho_{0}(\xi)|f(\xi)|d\xi}{Z_1}\nonumber\\
&+\frac{|Z_2-Z_1|\int_{\Xi} \exp(-W\overline{l}(\xi))\rho_{2}(\xi)|f(\xi)|d\xi}{Z_1Z_2}\nonumber\\
&\le \frac{C W e + \exp(W\max\{C_l, C_{\overline{l}}\}) \rho[\mathbbm{1}_{\Xi_e^{\perp}}] }{Z_1} + \frac{|Z_2-Z_1|}{Z_1}\nonumber\\
&\le 2\exp(WC_l) C W e + 2\exp(WC_l + W\max\{C_l, C_{\overline{l}}\}) \rho[\mathbbm{1}_{\Xi_e^{\perp}}] .\nonumber
\end{align}
This completes the proof. 
\end{proof}
Note that $\rho[\mathbbm{1}_{\Xi_e^{\perp}}]  = \int_{\Xi_e^{\perp}}\rho(\xi)d\xi$ is exactly the posterior measure of $\Xi_e^{\perp}$. 
Theorem \ref{thm:gibbsBdd_continuous} says that, given a prescribed $e$, if the posterior measure of the region where the surrogate model $\overline{l}(\xi)$ is inaccurate, i.e., $\rho[\mathbbm{1}_{\Xi_e^{\perp}}]$, is small, 
the approximate posterior $\overline{\rho}$ is close to the true posterior $\rho$ (depending on the prescribed accuracy $e$). 
This indicates that the local RB surrogate model only needs to be accurate over the ``important region" where the majority of the posterior mass is contained. 

Indeed, thanks to the information contained in the data, the posterior distribution typically concentrate on a much smaller portion of the prior support. Hence, we usually do not need a 
 ``globally accurate" surrogate model over the entire support of the prior. 
The local RB method, discussed next, is naturally tailored to provide locally accurate approximations as shown in \cite{zou2017adaptive, zou2018adaptive}. 

\section{The local RB surrogate}
To construct the surrogate model $\overline{l}(\xi)$,  we employ the local RB method first introduced in \cite{zou2018adaptive}.  
We briefly describe the local RB method in this section. 
To this end, we consider an abstract variational problem described as: find $u(\xi):\Xi\rightarrow U$ such that
\[
  \langle M(u(\xi); \xi), v\rangle_{V^*,V} = 0
    \quad\forall\,v\in V, \quad\forall\,\xi\in\Xi.
\]
where $U$ is a Hilbert space containing the solution of the PDE, $V$ is a Hilbert space containing test functions, $V^*$ is the dual space of $V$, and $M(\cdot; \xi): U\rightarrow V^*$ is a bounded linear operator for all $\xi\in\Xi$.

We assume the loss function $l(\xi) = g(u(\xi))$ for some functional $g:U\rightarrow \mathbb{R}$,
and the functional $g$ is H\"{o}lder continuous.  That is, there exists $K>0$ and $\alpha > 0$ such that
$$
  |g(w)-g(w')| \le K \|w-w'\|_U^\alpha \quad\forall\, w,\,w'\in U.
$$
We use the local RB method to build a surrogate model $\overline{u}(\xi):\Xi\rightarrow U$ and evaluate $\overline{l}(\xi)$ as $\overline{l}(\xi) = g(\overline{u}(\xi))$.
Note that based on the above assumption, we have that
\begin{equation}\label{eq:loss_holder_bdd}
|l(\xi)-\overline{l}(\xi)| \le K \|u(\xi) - \overline{u}(\xi)\|_U^{\alpha}\quad\forall \xi\in\Xi. 
\end{equation}

In the local RB method, we partition the parameter space into Voronoi cells, i.e., 
$\Xi = \cup_{k=1}^n \Xi_k$, seeded at $n$ selected atoms $\xi_k$, $k=1,\dots,n$.  
Within each cell, we form a local basis for approximating the PDE solution $u(\xi)$ using, e.g., full-order PDE solutions at a fixed number of proximal atoms as well as the gradient of the solution at the given seed.  
For example, Figure~\ref{fig_localRB} shows a partition of a parameter domain $\Xi\subseteq\mathbb{R}^2$ with 2,000 Monte Carlo samples of $\xi$ in the background (e.g., drawn from a prior distribution).  
The surrogate solution at the large blue dot is computed using a basis consisting of full PDE solutions at the large solid red dots as well as the solution and gradient at the large black dot.  
The number of neighbors $N$ for the local basis is usually chosen to be a fixed constant. 
In general, the number of neighbors is an algorithmic choice of the user, but can also be chosen adaptively depending on the desired accuracy of the local approximation.
\begin{figure}[!ht]
\centering
\includegraphics[width=0.4\linewidth]{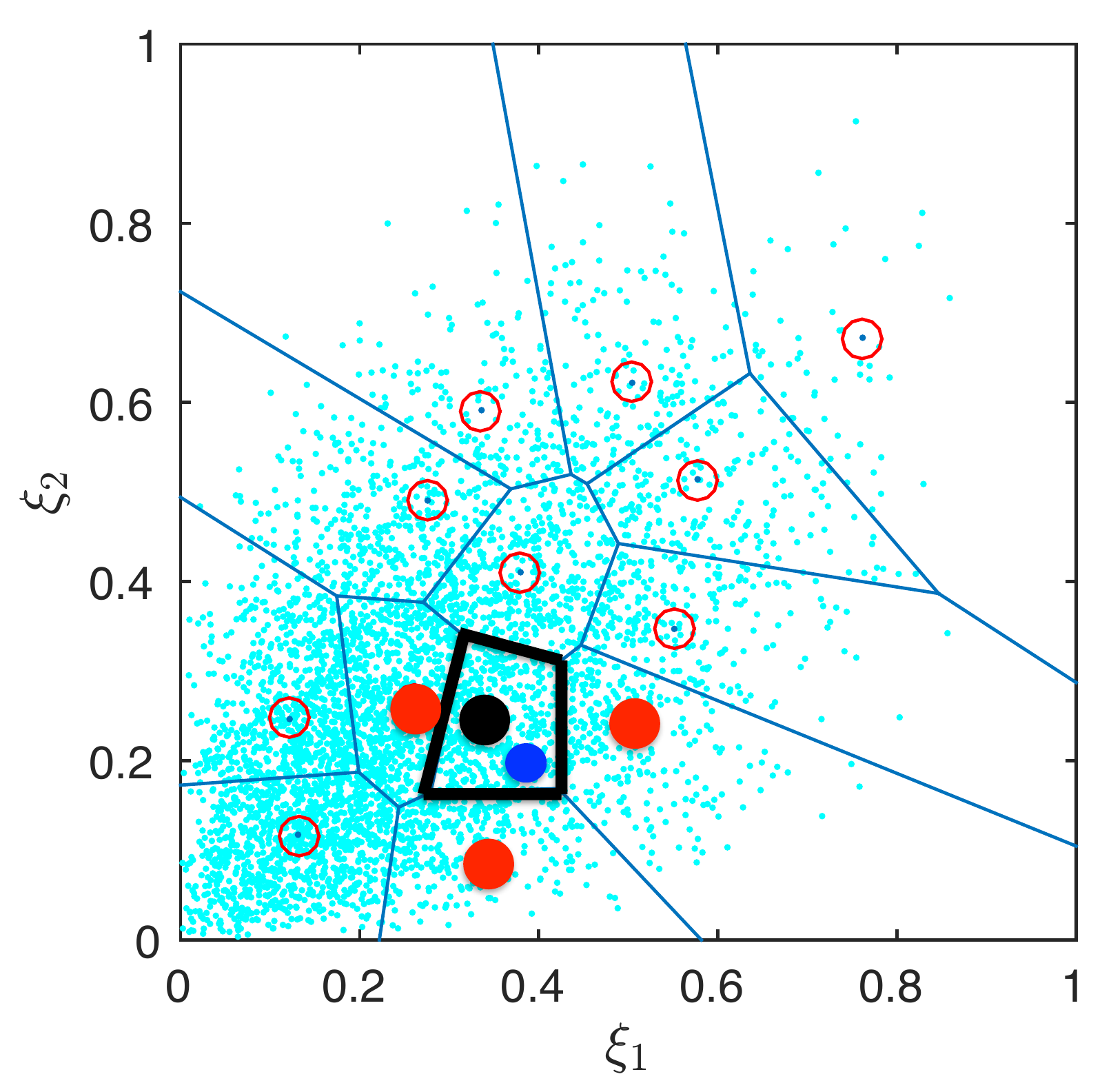}
\caption{ The local reduced basis method with two random parameters.
The surrogate solution at the large blue dot is computed using a basis
consisting of the solution at the large solid red dots as well as the solution
and gradient at the large black dot.  We plot 2,000 Monte Carlo samples of
$\xi$ in the background.}
\label{fig_localRB}
\end{figure}

The local RB surrogate model $\overline{u}(\xi)$ of $u(\xi)$ is given by
\begin{equation} \label{LRB_Surro}
  \overline{u}(\xi) = \sum_{k=1}^{n}\mathbbm{1}_{\Xi_k}(\xi)u_k(\xi)
\end{equation}
where $\mathbbm{1}_{\Xi_k}$ denotes the characteristic function of the
set $\Xi_k$, i.e., $\mathbbm{1}_{\Xi_k}(\xi) = 1$ if $\xi\in\Xi_k$ and
$\mathbbm{1}_{\Xi_k}(\xi) = 0$ otherwise, and $u_k:\Xi_k\to U_k$
is the solution of the reduced problem
\[
  \langle M(u_k(\xi); \xi), v\rangle_{V^*,V} = 0
    \quad\forall\,v\in V_k, \quad\forall\,\xi\in\Xi_k.
\]
Here, $\Phi_k$ is a ``local basis'' within $\Xi_k$, e.g.,
$$
 \Phi_k = \left[ u(\xi_k), \nabla_{\xi}u(\xi_k),  u(\xi_{k_1}),
      u(\xi_{k_2}),\dots, u(\xi_{k_N}) \right],
$$
$U_k= \textup{span}(\Phi_k)$, and $V_k$ is a finite-dimensional subspace of
$V$. Since the cardinality of $\Phi_k$ is typically much smaller than the full
discretization of the PDE, we often realize significant computational savings
by using $\overline{u}(\xi)$ as a surrogate model for $u(\xi)$. 
In addition, due to the local nature of the approximation, the evaluation cost
of $\overline{u}(\xi)$ at any $\xi\in\Xi$ does not increase as the number
of atoms $n$ increases.

To efficiently construct the local RB surrogate, we employ a greedy adaptive
sampling procedure to select the atom set $\Theta:=\{\xi_k\}_{k=1}^n$
(for details on this algorithm see \cite{zou2018adaptive}).  The adaptive selection of $\Theta$ is guided by reliable
a posteriori error indicators, denoted by $\epsilon_u(\xi)$, i.e.,
\[
  \|\overline{u}(\xi) - u(\xi)\|_U \lesssim \epsilon_u(\xi)
\]
where $x \lesssim y$ denotes ``$x$ is less than or equal to a constant times
$y$.'' That is, given $k$ atoms, the next atom $\xi_{k+1}$ is selected from the
region of $\Xi$ where the current surrogate error is the largest.  The error
indicators $\epsilon_u(\xi)$ used in \cite{zou2018adaptive} are residual-based error
estimates. In fact, we have shown in \cite{zou2018adaptive} that the error indicator
$\epsilon_u(\xi)$ can be further used to build more complex error indicators
that are specifically targeted for the approximating quantities of interest such as risk measures. 

For example, a possible error indicator for $\overline{l}(\xi)$ can be derived using Equation \eqref{eq:loss_holder_bdd} as
\begin{equation}\label{eq:loss_holder_bdd_error_indicator}
|l(\xi)-\overline{l}(\xi)|  \lesssim \epsilon_l(\xi) := K \epsilon_u(\xi)^{\alpha}  \quad\forall \xi\in\Xi,
\end{equation}
which we can use to tailor the local RB method to specifically approximate $l(\xi)$. 


\section{Particle based approximation with local RB surrogate}
A particle based approximation depends on a set of weighted samples $\{\xi_i, w_i\}_{i=1}^m$, which defines the following empirical measure
\begin{equation}\label{eq:gibbsEmpDist}
\rho^E(\xi) = \sum_{i=1}^m w_i \delta(\xi-\xi_i)
\end{equation}
where $\delta(\cdot)$ is the Dirac delta function and the weights $w_i$ satisfy $w_i\in[0, 1]$ and $\sum_{i=1}^m w_i=1$. 
This form of approximation has been used extensively in sequential Monte Carlo (SMC) methods \cite{del2006sequential, doucet2001introduction}. 
Here, we use such an approximation for the Gibbs posterior.

To approximate a given distribution $\rho(\xi)$, $\xi_i$ and $w_i$ need to be selected in some principled manner, e.g., by minimizing some distance between $\rho^E(\xi)$ and $\rho(\xi)$. For instance,
a set of Monte Carlo (MC) samples drawn from $\rho(\xi)$ with equal weights is a particle approximation to $\rho(\xi)$. 
However, MC approximations typically display slow convergence and high variance. We mention in passing that there are more efficient methods for constructing particle approximations such as the Stochastic Reduced Order Model approach introduced in  \cite{grigoriu2009reduced, grigoriu2012method}. 


Now, suppose we have a particle based approximation for the prior $\rho_0(\xi)$ based on the particle set $\{\xi_i, w_i^0\}_{i=1}^m$.
Using the empirical measure $\rho^E_0(\xi) := \sum_{i=1}^m w_i^0 \delta(\xi-\xi_i)$ in \eqref{eq:gibbsPosFormula_L}, we have an update formula for the particle weights as
\begin{equation}\label{eq:gibbsPosFormulaWts}
w_i = \frac{\exp(-Wl(\xi_i))w_i^0}{\sum_{k=1}^{m} w_k^0 \exp(-Wl(\xi_k))}.
\end{equation}
where $w_i$ is the posterior weight associated with $\xi_i$. 
By \eqref{eq:gibbsEmpDist}, $\{\xi_i, w_i\}_{i=1}^m$ defines a particle based approximation $\rho^E(\xi)$ to the Gibbs posterior distribution $\rho(\xi)$ in \eqref{eq:gibbsPosFormula_L}. 

If we use a surrogate model $\overline{l}(\xi)$ for the loss function, the posterior weights are then approximated as
\begin{equation}\label{eq:gibbsPosFormulaWtsRB}
\overline{w}_i = \frac{\exp(-W\overline{l}(\xi_i))w_i^0}{\sum_{k=1}^{m} w_k^0 \exp(-W\overline{l}(\xi_k))}.
\end{equation}
The particles $\{\xi_i, \overline{w}_i\}_{i=1}^m$ define a surrogate empirical posterior measure
\begin{equation}\label{eq:gibbsEmpDistRB}
\overline{\rho}^E(\xi) = \sum_{i=1}^m \overline{w}_i \delta(\xi-\xi_i)
\end{equation}
that is an approximation to $\rho^E(\xi)$.

\subsection{Accuracy of surrogate particle approximations}
We now undertake an error analysis of the above approximations based on a fixed set of particles. 
The evolution of particles will be addressed in the following section using the framework of SMC methods. 
The results shown here are cornerstones for the convergence of the SMC method described in a subsequent section.  
We first state a lemma regarding a randomized particle based approximation.
\begin{lemma}\label{lm:gibbsMCBdd}
Given a distribution $\rho_0(\xi)$, let $\{\xi_i\}_{i=1}^m$ be drawn independently from $\rho_0(\xi)$, $w_i^0$ be the associated probability weight, and $\rho_0^E(\xi)$ be the empirical distribution in \eqref{eq:gibbsEmpDist}, then we have that
\begin{align}
h(\rho_0^E, \rho_0)\le \sqrt{\sum_{i=1}^m (w_i^0)^2}.
\end{align}
In particular, if $w_i^0 = \frac{1}{m}$, as for the MC weights, we have that
\begin{align}
h(\rho_0^E, \rho_0)\le \frac{1}{\sqrt{m}} .
\end{align}
\end{lemma}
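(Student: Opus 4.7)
The plan is to unfold the definition of the metric $h$ and exploit the independence of the samples $\xi_i \sim \rho_0$. For any test function $f$ with $\sup_\xi |f(\xi)| \le 1$, note that $\rho_0^E[f] = \sum_{i=1}^m w_i^0 f(\xi_i)$ while $\rho_0[f] = \int f(\xi)\rho_0(\xi)\,d\xi = \mathbb{E}[f(\xi_i)]$, where the last equality uses $\xi_i \sim \rho_0$. Writing the difference as
\[
\rho_0^E[f] - \rho_0[f] = \sum_{i=1}^m w_i^0 \bigl(f(\xi_i) - \mathbb{E}[f(\xi_i)]\bigr),
\]
we see this is a weighted sum of i.i.d.\ zero-mean random variables.

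Next I would square and take expectations. Since the $\xi_i$ are independent and each summand has mean zero, the cross terms vanish, leaving
\[
\mathbb{E}\bigl|\rho_0^E[f] - \rho_0[f]\bigr|^2 = \sum_{i=1}^m (w_i^0)^2 \,\mathrm{Var}(f(\xi_i)).
\]
Because $|f|_\infty \le 1$, we have $\mathrm{Var}(f(\xi_i)) \le \mathbb{E}[f(\xi_i)^2] \le 1$, and therefore
\[
\mathbb{E}\bigl|\rho_0^E[f] - \rho_0[f]\bigr|^2 \le \sum_{i=1}^m (w_i^0)^2.
\]
Taking the supremum over $f$ with $|f|_\infty \le 1$ and then the square root yields the first claimed bound. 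Specializing to the Monte Carlo weights $w_i^0 = 1/m$ gives $\sum_{i=1}^m (w_i^0)^2 = 1/m$, and hence $h(\rho_0^E, \rho_0) \le 1/\sqrt{m}$.

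There is no real obstacle here: the result is essentially a variance computation for a weighted Monte Carlo estimator, with the uniform bound $|f| \le 1$ controlling the per-sample variance. The only thing to be careful about is that $\rho_0^E$ is itself a random measure (depending on the draws $\xi_i$), so the expectation in the definition of $h$ is precisely what turns the squared error into a sum of variances via independence. This is exactly the role envisaged in the definition of $h$ given just before Theorem \ref{thm:gibbsBdd_continuous}.
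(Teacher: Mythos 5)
Your proof is correct and follows essentially the same route as the paper's: write $\rho_0^E[f]-\rho_0[f]$ as a weighted sum of centered i.i.d.\ terms (using $\sum_i w_i^0 = 1$), kill the cross terms by independence, and bound each variance by $1$ via $|f|_\infty \le 1$. No meaningful differences.
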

\begin{proof}
For any $f$,  we have 
$$\rho_0^E[f] = \sum_{i=1}^m w_i^0  f(\xi_i).$$
Hence,
$$\rho_0^E[f]- \rho_0[f] = \sum_{i=1}^m w_i^0  (f(\xi_i)-\rho_0[f]).$$
Note that since the particles $\xi_i$ are i.i.d. with distribution $\rho_0(\xi)$, we have $\mathbb{E}[f(\xi_i)] = \rho_0[f]\ \forall\ i=1,\dots, m$. 
Hence
$$\mathbb{E}[(f(\xi_i)-\rho_0[f])(f(\xi_j)-\rho_0[f])]=\delta_{ij}\mathbb{E}[|(f(\xi_i)-\rho_0[f])|^2].$$
Additionally, since $|f|_{\infty}\le 1$, we have
$$\mathbb{E}[|(f(\xi_i)-\rho_0[f])|^2] = \mathbb{E}[|f(\xi_i)|^2]-\rho_0[f]^2\le 1.$$
Therefore,
$$\mathbb{E}[|\rho_0^E[f]-\rho_0[f]|^2]=\sum_{i=1}^m (w_i^0)^2 \mathbb{E}[|(f(\xi_i)-\rho_0[f])|^2] \le \sum_{i=1}^m (w_i^0)^2,$$
which completes the proof. 
\end{proof}

\begin{lemma}\label{lm:gibbsMapBdd}
Denote the transformation of the Gibbs posterior formula in \eqref{eq:gibbsPosFormula_L} as $G_W:P\rightarrow P$, such that $\rho(\xi) = G_W\rho_0(\xi)$.
Under Assumption \ref{as:gibbsLoss}, we have 
\begin{align}
h(G_W \rho_1, G_W \rho_2) \le 2\exp(WC_l) h(\rho_1, \rho_2),
\end{align}
for $\forall\ \rho_1, \rho_2\in P$.
\end{lemma}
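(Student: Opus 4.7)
The key idea is to write the Gibbs update as a ratio involving the bounded weight function $g(\xi) := \exp(-Wl(\xi))$ and then exploit that $g$ is bounded above by $1$ and below by $\exp(-WC_l)$ by Assumption~\ref{as:gibbsLoss}. Concretely, for any test function $f$ with $|f|_\infty \le 1$,
\begin{equation*}
G_W\rho_i[f] = \frac{\rho_i[fg]}{\rho_i[g]}, \qquad i=1,2,
\end{equation*}
so the plan is to control the ratio by a classical add-and-subtract trick and then use the metric $h$ on the linear functionals $\rho_i[fg]$ and $\rho_i[g]$, noticing that both $fg$ and $g$ satisfy $|\cdot|_\infty \le 1$ and are therefore admissible test functions for the supremum defining $h$.

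First I would write
\begin{equation*}
\frac{\rho_1[fg]}{\rho_1[g]} - \frac{\rho_2[fg]}{\rho_2[g]}
= \frac{\rho_1[fg] - \rho_2[fg]}{\rho_1[g]}
  + \frac{\rho_2[fg]}{\rho_2[g]}\cdot\frac{\rho_2[g] - \rho_1[g]}{\rho_1[g]}.
\end{equation*}
Using $g \ge \exp(-WC_l)$ pointwise gives $\rho_i[g] \ge \exp(-WC_l)$, and $|f|_\infty \le 1$ with $0 \le g \le 1$ gives $|\rho_2[fg]/\rho_2[g]| \le 1$. Taking absolute values yields the pathwise (i.e.\ for each realization of the random measures) estimate
\begin{equation*}
|G_W\rho_1[f] - G_W\rho_2[f]| \le \exp(WC_l)\bigl(|\rho_1[fg] - \rho_2[fg]| + |\rho_1[g] - \rho_2[g]|\bigr).
\end{equation*}

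Next I would square, apply $(a+b)^2 \le 2(a^2+b^2)$, and take expectations to obtain
\begin{equation*}
\mathbb{E}\,|G_W\rho_1[f] - G_W\rho_2[f]|^2 \le 2\exp(2WC_l)\bigl(\mathbb{E}|\rho_1[fg] - \rho_2[fg]|^2 + \mathbb{E}|\rho_1[g] - \rho_2[g]|^2\bigr).
\end{equation*}
Since $fg$ and $g$ both have sup-norm $\le 1$, each of the two expectations on the right is bounded by $h(\rho_1,\rho_2)^2$. Taking square roots and the supremum over $|f|_\infty \le 1$ on the left gives the claimed bound $h(G_W\rho_1, G_W\rho_2) \le 2\exp(WC_l)\, h(\rho_1,\rho_2)$.

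The one subtle point to get right is the handling of randomness: $\rho_1$ and $\rho_2$ may be random elements of $P$, so the pathwise inequality must be established before taking expectations, and the admissibility of $fg$ and $g$ as test functions for $h$ must be verified with $|f|_\infty \le 1$ fixed (not random). Neither is a real obstacle, but it is the place where care is needed; everything else is essentially a one-line algebraic manipulation enabled by the uniform upper and lower bounds on $g$ coming from Assumption~\ref{as:gibbsLoss}.
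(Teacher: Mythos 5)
Your proof is correct and follows essentially the same route as the paper: the same add-and-subtract decomposition of the ratio, the same lower bound $\rho_i[g]\ge\exp(-WC_l)$, and the same observation that $fg$ and $g$ are admissible test functions with sup-norm at most $1$. In fact your final step — keeping the two fixed test functions, squaring with $(a+b)^2\le 2(a^2+b^2)$, and taking expectations before the supremum over $f$ — is more careful than the paper's, which bounds both terms by the pathwise supremum $\sup_{|g|_\infty\le 1}|\rho_1[g]-\rho_2[g]|$ and then asserts the lemma "follows easily," glossing over the fact that the expectation of a supremum dominates the supremum of expectations.
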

\begin{proof}
We have 
\begin{align}
G_W \rho_1[f]-G_W \rho_2[f] &=  \frac{\int_{\Xi}\exp(-Wl(\xi))\rho_{1}(\xi)f(\xi)d\xi}{\int_{\Xi} \exp(-Wl(\xi))\rho_{1}(\xi)d\xi}-\frac{\int_{\Xi}\exp(-Wl(\xi))\rho_{2}(\xi)f(\xi)d\xi}{\int_{\Xi} \exp(-Wl(\xi))\rho_{2}(\xi)d\xi}\nonumber\\
&=\frac{\int_{\Xi} \exp(-Wl(\xi))(\rho_{1}(\xi)-\rho_{2}(\xi))f(\xi)d\xi}{Z_3}\nonumber\\
&+\frac{(Z_4-Z_3)\int_{\Xi} \exp(-Wl(\xi))\rho_{2}(\xi)f(\xi)d\xi}{Z_3Z_4}\nonumber
\end{align}
where
\begin{align} \label{eq:Z3Z4}
Z_3= \int_{\Xi} \exp(-Wl(\xi))\rho_{1}(\xi)d\xi,\quad Z_4= \int_{\Xi} \exp(-Wl(\xi))\rho_{2}(\xi)d\xi.
\end{align}
Hence
\begin{align}
|G_W \rho_1[f]-G_W \rho_2[f]| &\le \frac{|\int_{\Xi} \exp(-Wl(\xi))(\rho_{1}(\xi)-\rho_{2}(\xi))f(\xi)d\xi|}{Z_3}\nonumber\\
&+\frac{|Z_4-Z_3||\int_{\Xi} \exp(-Wl(\xi))\rho_{2}(\xi)f(\xi)d\xi|}{Z_3Z_4}.\nonumber
\end{align}
Note that since $|f|_{\infty}\le 1$, 
$$\left|\int_{\Xi} \exp(-Wl(\xi))\rho_{2}(\xi)f(\xi)d\xi\right|\le \int_{\Xi} \exp(-Wl(\xi))\rho_{2}(\xi)d\xi = Z_4$$
Using Assumption \ref{as:gibbsLoss}, we have
$$Z_3 =  \int_{\Xi} \exp(-Wl(\xi))\rho_{1}(\xi)d\xi \ge \exp(-W C_l)$$
hence we have
\begin{align}
|G_W \rho_1[f]-G_W \rho_2[f]| &\le \frac{|\int_{\Xi} \exp(-Wl(\xi))(\rho_{1}(\xi)-\rho_{2}(\xi))f(\xi)d\xi|}{\exp(-W C_l)}+\frac{|Z_4-Z_3|}{\exp(-W C_l)}.\nonumber
\end{align}
Note that $|\exp(-Wl(\xi))|_{\infty}\le 1$ and $|\exp(-Wl(\xi)) f(\xi)|_{\infty}\le 1$ for $\forall f$ such that $|f|_{\infty}\le 1$, hence
$$
\left|\int_{\Xi} \exp(-Wl(\xi))(\rho_{1}(\xi)-\rho_{2}(\xi))f(\xi)d\xi\right| \le \sup_{|g|_{\infty}\le 1} |\rho_1[g]-\rho_2[g]|
$$
and
$$
|Z_4-Z_3| = \left|\int_{\Xi}(\rho_{1}(\xi)-\rho_{2}(\xi))\exp(-Wl(\xi))d\xi\right| \le \sup_{|g|_{\infty}\le 1} |\rho_1[g]-\rho_2[g]|.
$$
Therefore,
$$|G_W \rho_1[f]-G_W \rho_2[f]| \le 2\exp(WC_l) \sup_{|g|_{\infty}\le 1} |\rho_1[g]-\rho_2[g]|$$
for $\forall f$ such that $|f|_{\infty}\le 1$. The lemma follows easily from the above inequality.
\end{proof}

Next we consider the surrogate approximation. We first state a counterpart of Lemma \ref{lm:gibbsMapBdd} when the surrogate model \eqref{eq:gibbsPosFormula_L_RB} is used instead of \eqref{eq:gibbsPosFormula_L}:
\begin{lemma}\label{lm:gibbsMapBdd_RB}
Denote the transformation of Gibbs posterior formula in \eqref{eq:gibbsPosFormula_L_RB} as $\overline{G}_W:P\rightarrow P$, such that $\overline{\rho}(\xi) = \overline{G}_W \rho_0(\xi)$.
Under Assumption \ref{as:gibbsLoss}, we have 
\begin{align}
h(\overline{G}_W  \rho_1, \overline{G}_W \rho_2) \le 2\exp(W C_{\overline{l}}) h(\rho_1, \rho_2),
\end{align}
for $\forall\ \rho_1, \rho_2\in P$.
\end{lemma}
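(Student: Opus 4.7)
The plan is to mimic the proof of Lemma \ref{lm:gibbsMapBdd} essentially verbatim, simply replacing every occurrence of the true loss $l(\xi)$ by the surrogate loss $\overline{l}(\xi)$ and, correspondingly, the upper bound $C_l$ by $C_{\overline{l}}$. Since Assumption \ref{as:gibbsLoss} provides symmetric bounds on $l$ and $\overline{l}$ (both nonnegative, both uniformly bounded), and since the previous proof only used these two properties of $l$ together with $|f|_{\infty}\le 1$, there is no structural obstacle to this transfer.

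Concretely, I would start from
\[
  \overline{G}_W \rho_1[f]-\overline{G}_W \rho_2[f] = \frac{\int_{\Xi}\exp(-W\overline{l})\rho_1 f\,d\xi}{\overline{Z}_3}-\frac{\int_{\Xi}\exp(-W\overline{l})\rho_2 f\,d\xi}{\overline{Z}_4},
\]
where $\overline{Z}_3 := \int_{\Xi}\exp(-W\overline{l}(\xi))\rho_1(\xi)d\xi$ and $\overline{Z}_4 := \int_{\Xi}\exp(-W\overline{l}(\xi))\rho_2(\xi)d\xi$. After bringing to a common denominator, the difference splits into a ``numerator'' term and a ``normalization'' term, exactly as in Lemma \ref{lm:gibbsMapBdd}:
\[
  \overline{G}_W \rho_1[f]-\overline{G}_W \rho_2[f] = \frac{\int_{\Xi}\exp(-W\overline{l})(\rho_1-\rho_2)f\,d\xi}{\overline{Z}_3}+\frac{(\overline{Z}_4-\overline{Z}_3)\int_{\Xi}\exp(-W\overline{l})\rho_2 f\,d\xi}{\overline{Z}_3\overline{Z}_4}.
\]

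Next I would bound each piece. The key observations are: (i) by Assumption \ref{as:gibbsLoss}, $\overline{Z}_3 \ge \exp(-WC_{\overline{l}})$ and analogously $\overline{Z}_4 \ge \exp(-WC_{\overline{l}})$; (ii) the integrand $\exp(-W\overline{l}(\xi))$ and the product $\exp(-W\overline{l}(\xi))f(\xi)$ are both functions bounded in absolute value by $1$ whenever $|f|_{\infty}\le 1$ (using nonnegativity of $\overline{l}$). Consequently both
\[
  \Bigl|\int_{\Xi}\exp(-W\overline{l})(\rho_1-\rho_2)f\,d\xi\Bigr| \quad\text{and}\quad |\overline{Z}_4-\overline{Z}_3|
\]
are bounded above by $\sup_{|g|_{\infty}\le 1}|\rho_1[g]-\rho_2[g]|$. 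Also $|\int \exp(-W\overline{l})\rho_2 f\,d\xi|\le \overline{Z}_4$, which cancels an $\overline{Z}_4$ in the denominator of the normalization term.

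Combining these estimates gives, for every $f$ with $|f|_{\infty}\le 1$,
\[
  |\overline{G}_W \rho_1[f]-\overline{G}_W \rho_2[f]| \le 2\exp(WC_{\overline{l}})\sup_{|g|_{\infty}\le 1}|\rho_1[g]-\rho_2[g]|,
\]
whence taking the supremum over $f$, squaring, taking expectations, and using the triangle argument for $h$ (as already invoked implicitly in Lemma \ref{lm:gibbsMapBdd}) yields the claimed bound $h(\overline{G}_W\rho_1,\overline{G}_W\rho_2)\le 2\exp(WC_{\overline{l}})h(\rho_1,\rho_2)$. There is no real obstacle here; the only point to double-check is the handling of expectations when $\rho_1,\rho_2$ are random, but this is identical to Lemma \ref{lm:gibbsMapBdd} since the deterministic pointwise inequality above can be squared and the supremum and expectation passed through in exactly the same way.
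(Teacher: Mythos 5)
Your proposal is correct and follows exactly the route the paper intends: the paper's own proof of this lemma is literally the single line ``Same as Lemma~\ref{lm:gibbsMapBdd},'' i.e., repeat the argument with $\overline{l}$ in place of $l$ and $C_{\overline{l}}$ in place of $C_l$, which is precisely what you have written out. The details you supply (the two-term decomposition, the lower bound $\overline{Z}_3\ge\exp(-WC_{\overline{l}})$, and the boundedness of the integrands) match the earlier proof step for step.
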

\begin{proof}
Same as Lemma \ref{lm:gibbsMapBdd}. 
\end{proof}

We impose the following additional assumption on the local RB surrogate loss function $\overline{l}(\xi)$.
\begin{assumption}\label{as:RBLoss}
We assume that the surrogate loss function $\overline{l}(\xi)$ is accurate over the particle set, that is, 
\begin{equation}\label{eq:gibbs_rb_error_bound}
\sup_{i=1,\dots, m} |l(\xi_i) - \overline{l}(\xi_i)|\le e
\end{equation}
for some $e > 0$ that indicates the error of the local RB approximation.
\end{assumption}
The bound \eqref{eq:gibbs_rb_error_bound} can be satisfied for any $e$ by using the local RB Algorithm in \cite{zou2018adaptive} using $\{\xi_i\}_{i=1}^m$ as training samples and
$\epsilon_l(\xi)$ in Equation \eqref{eq:loss_holder_bdd_error_indicator} as the error indicator. 
We have the following two lemmas quantifying the difference between $\rho^E$ with weights computed by \eqref{eq:gibbsPosFormulaWts} and $\overline{\rho}^E$ in \eqref{eq:gibbsEmpDistRB}.
\begin{lemma}\label{lm:gibbsRBKLBdd}
Under Assumptions \ref{as:gibbsLoss} and \ref{as:RBLoss}, the following bound holds:
\begin{align}
    D_{KL}(\overline{\rho}^E|\rho^E) \le 2 We.\nonumber
  \end{align}
\end{lemma}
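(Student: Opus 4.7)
The plan is to exploit the fact that both empirical measures $\rho^E$ and $\overline{\rho}^E$ are supported on the identical particle set $\{\xi_i\}_{i=1}^m$, which renders the KL divergence a finite discrete sum that can be computed explicitly in terms of the weights. Specifically, I would write
\begin{equation*}
D_{KL}(\overline{\rho}^E \| \rho^E) = \sum_{i=1}^m \overline{w}_i \log\frac{\overline{w}_i}{w_i},
\end{equation*}
and plug in the formulas \eqref{eq:gibbsPosFormulaWts} and \eqref{eq:gibbsPosFormulaWtsRB}. After cancellation of the common prior weights $w_i^0$, the log-ratio splits as
\begin{equation*}
\log\frac{\overline{w}_i}{w_i} = W\bigl(l(\xi_i) - \overline{l}(\xi_i)\bigr) + \log\frac{Z_l^E}{Z_{\overline{l}}^E},
\end{equation*}
where $Z_l^E := \sum_k w_k^0 \exp(-Wl(\xi_k))$ and $Z_{\overline{l}}^E := \sum_k w_k^0 \exp(-W\overline{l}(\xi_k))$ are the discrete normalization constants.

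The first contribution is controlled directly by Assumption \ref{as:RBLoss}: since $|l(\xi_i) - \overline{l}(\xi_i)| \le e$ for every particle, the weighted sum $\sum_i \overline{w}_i W(l(\xi_i) - \overline{l}(\xi_i))$ is bounded in absolute value by $We$ (using $\sum_i \overline{w}_i = 1$). For the normalization term, I would again use the pointwise bound $l(\xi_k) \le \overline{l}(\xi_k) + e$ inside the exponential to obtain $\exp(-Wl(\xi_k)) \le e^{We} \exp(-W\overline{l}(\xi_k))$, which upon summing against $w_k^0$ yields $Z_l^E \le e^{We} Z_{\overline{l}}^E$, and hence $\log(Z_l^E/Z_{\overline{l}}^E) \le We$. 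Adding the two bounds gives exactly $2We$.

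I do not expect any serious obstacle here; the argument is a direct computation. The only subtlety to watch is the direction of the inequality for the normalization ratio — one must be sure that the bound goes the correct way so that the $\log$ term contributes a positive upper bound consistent with $D_{KL} \ge 0$. Because both empirical measures share the same atoms, absolute continuity is automatic (in fact $\overline{\rho}^E \ll \rho^E$ with a strictly positive Radon--Nikodym derivative $\overline{w}_i/w_i$ at each atom), so the KL divergence is finite and the manipulations above are rigorous.
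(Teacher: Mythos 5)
Your proof is correct and is essentially the paper's own argument: the paper likewise writes the KL divergence as a discrete sum over the shared atoms, splits the log-ratio into the pointwise term $W(l(\xi_i)-\overline{l}(\xi_i))$ (bounded by $We$ via Assumption \ref{as:RBLoss}) and the normalization term $\log(Z_5/Z_6)$ (bounded by $We$ the same way you bound $\log(Z_l^E/Z_{\overline{l}}^E)$), and adds the two. No substantive difference.
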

\begin{proof}
  Based on Assumption \ref{as:RBLoss}, we have
   \begin{align}
   \frac{\exp(-W\overline{l}(\xi_i))}{\exp(-Wl(\xi_i))} = \exp(Wl(\xi_i)-W\overline{l}(\xi_i))\le \exp(We)\ \ \forall\ i=1,\dots, m\nonumber\\
   \frac{\exp(-Wl(\xi_i))}{\exp(-W\overline{l}(\xi_i))} = \exp(W\overline{l}(\xi_i)-Wl(\xi_i))\le \exp(We)\ \ \forall\ i=1,\dots, m.\nonumber\nonumber
   \end{align}
  We first define
   \begin{align}\label{eq:Z5Z6}
   Z_5 = \sum_{k=1}^{m} w_k^0 \exp(-Wl(\xi_k)), \quad Z_6 = \sum_{k=1}^{m} w_k^0 \exp(-W\overline{l}(\xi_k)).
   \end{align}
   It is clear that
   $$\log\left(\frac{Z_5}{Z_6}\right)\le \log\left(\frac{\exp(We) \sum_{k=1}^{m} w_k^0 \exp(-W\overline{l}(\xi_k))}{\sum_{k=1}^{m}w_k^0 \exp(-W\overline{l}(\xi_k))}\right)=W e.$$
  Hence
  \begin{align}
    D_{KL}(\overline{\rho}^E|\rho^E) &= \sum_{k=1}^{m} \frac{\exp(-W\overline{l}(\xi_k))w_k^0}{Z_6}\log\left(\frac{\exp(-W\overline{l}(\xi_k))w_k^0 Z_5}{\exp(-Wl(\xi_k))w_k^0 Z_6}\right)\nonumber\\
    &\le \sum_{k=1}^{m} \frac{\exp(-W\overline{l}(\xi_k))w_k^0}{Z_6} \left[\log\left(\frac{Z_5}{Z_6}\right)+W e\right]\nonumber\\
    & = \log\left(\frac{Z_5}{Z_6}\right)+W e = 2 W e.\nonumber    
   \end{align}
\end{proof}

\begin{lemma}\label{lm:gibbsRBmetricBdd}
Under Assumptions \ref{as:gibbsLoss} and \ref{as:RBLoss}, the following bound holds:
\begin{align}
    h(\rho^E, \overline{\rho}^E) \le 2\exp(W C_l)C W e ,\nonumber
  \end{align}
  for some constant $C > 0$. 
\end{lemma}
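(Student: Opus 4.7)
The plan is to mirror the proof of Theorem \ref{thm:gibbsBdd_continuous}, but with the continuous prior replaced by the discrete prior $\sum_i w_i^0 \delta(\xi - \xi_i)$. The key simplification compared with Theorem \ref{thm:gibbsBdd_continuous} is that Assumption \ref{as:RBLoss} forces $|l(\xi_i) - \overline{l}(\xi_i)| \le e$ \emph{uniformly} on the particle set, so there is no bad-region term $\rho[\mathbbm 1_{\Xi_e^\perp}]$ to worry about: the entire particle set plays the role of $\Xi_e$.

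Concretely, I would fix an arbitrary test function $f$ with $|f|_\infty \le 1$ and write
\[
  \rho^E[f] - \overline{\rho}^E[f]
  = \frac{\sum_i w_i^0 \exp(-Wl(\xi_i)) f(\xi_i)}{Z_5}
  - \frac{\sum_i w_i^0 \exp(-W\overline{l}(\xi_i)) f(\xi_i)}{Z_6},
\]
with $Z_5, Z_6$ as in \eqref{eq:Z5Z6}. Using the standard add-and-subtract trick (put both fractions over the common denominator $Z_5 Z_6$ and split), this becomes a sum of two terms analogous to the two terms bounded in the proof of Theorem \ref{thm:gibbsBdd_continuous}. The first piece is controlled by $\sum_i w_i^0 |\exp(-Wl(\xi_i)) - \exp(-W\overline{l}(\xi_i))|$; by Assumption \ref{as:RBLoss} and the same elementary estimate used at the start of the proof of Theorem \ref{thm:gibbsBdd_continuous}, each summand is bounded by $C W e$ for a constant $C > 0$ (assuming $We \ll 1$), so the sum is bounded by $CWe$. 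The second piece is governed by $|Z_6 - Z_5|$, which by the same pointwise bound is again at most $CWe$, and the bounded test function $|f|_\infty \le 1$ guarantees $|\sum_i w_i^0 \exp(-W\overline{l}(\xi_i)) f(\xi_i)| \le Z_6$, so the $Z_6$'s cancel.

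What remains is to lower bound $Z_5$. Using Assumption \ref{as:gibbsLoss}, $\exp(-Wl(\xi_i)) \ge \exp(-WC_l)$ for every $i$, so
\[
  Z_5 = \sum_{k=1}^m w_k^0 \exp(-Wl(\xi_k)) \ge \exp(-WC_l).
\]
Combining the two pieces gives
\[
  |\rho^E[f] - \overline{\rho}^E[f]|
  \le \frac{CWe}{Z_5} + \frac{|Z_6 - Z_5|}{Z_5} \le 2\exp(WC_l)\,C\,We,
\]
uniformly in $f$ with $|f|_\infty \le 1$. Because this estimate is pathwise in the (possibly random) particle locations, taking expectations and the supremum in the definition of $h$ preserves the bound, yielding the claim.

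The argument is essentially routine once Theorem \ref{thm:gibbsBdd_continuous} is in hand; the only mild subtlety is checking that Assumption \ref{as:RBLoss} (surrogate accuracy on the particle set) is strong enough to make the "bad set" term vanish entirely, which it is by construction. No new ingredients beyond Assumptions \ref{as:gibbsLoss}--\ref{as:RBLoss} and the elementary bound $|1 - \exp(t)| \lesssim |t|$ for small $|t|$ should be needed.
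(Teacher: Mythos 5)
Your proposal is correct and follows essentially the same route as the paper's proof: the same common-denominator decomposition, the lower bound $Z_5 \ge \exp(-WC_l)$ from Assumption \ref{as:gibbsLoss}, the bound $\left|\sum_j w_j^0 \exp(-W\overline{l}(\xi_j)) f(\xi_j)\right| \le Z_6$ from $|f|_\infty \le 1$, and the elementary estimate $|\exp(-Wl(\xi_j))-\exp(-W\overline{l}(\xi_j))| \le CWe$ applied both termwise and to $|Z_5 - Z_6|$. Your added remark that Assumption \ref{as:RBLoss} makes the bad-set term from Theorem \ref{thm:gibbsBdd_continuous} vanish on the particle set is accurate and matches the paper's implicit reasoning.
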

\begin{proof}
Recall the definition of $Z_5$ and $Z_6$ in \eqref{eq:Z5Z6}, we have
\begin{align}
\rho^E[f]-\overline{\rho}^E[f] & =  \frac{\sum_{j=1}^m\exp(-Wl(\xi_j))w_j^0 f(\xi_j)}{\sum_{k=1}^{m} \exp(-Wl(\xi_k))w_k^0}  - \frac{\sum_{j=1}^m\exp(-W\overline{l}(\xi_j))w_j^0 f(\xi_j)}{\sum_{k=1}^{m} \exp(-W\overline{l}(\xi_k))w_k^0} \nonumber\\
&=\frac{\sum_{j=1}^m(\exp(-Wl(\xi_j))-\exp(-W\overline{l}(\xi_j)))w_j^0 f(\xi_j)}{Z_5} \nonumber\\
&+\frac{(Z_6-Z_5)\sum_{j=1}^m\exp(-W\overline{l}(\xi_j))w^{0}_j f(\xi_j)}{Z_5Z_6}.\nonumber
\end{align}
Note that
$$Z_5 = \sum_{k=1}^{m} \exp(-Wl(\xi_k))w^{0}_k\ge \exp(-W C_l),$$
and that
$$\left|\sum_{j=1}^m\exp(-W\overline{l}(\xi_j))w^{0}_j f(\xi_j)\right|\le Z_6,$$
since $|f|_{\infty}\le 1$. Also, since $-e\le l(\xi_j)-\overline{l}(\xi_j)\le e$, for $e$ sufficiently small, e.g., $We\ll 1$, we have
\begin{align}
|\exp(-Wl(\xi_j))-\exp(-W\overline{l}(\xi_j))| &\le \exp(-Wl(\xi_j)) |1-\exp(Wl(\xi_j)-W\overline{l}(\xi_j))| \le C W e\nonumber
\end{align}
for some $C > 0$. Hence
$$|Z_5-Z_6|\le C W e.$$
Finally, we have
\begin{align}
|\rho^E[f]-\overline{\rho}^E[f]| &\le 2\exp(W C_l)C W e, \nonumber
\end{align}
which implies the Lemma.
\end{proof}

We now show one of the main analytical results of our work. Namely, we provide an error bound for a particle-based approximation to the Gibbs posterior.

\begin{theorem}\label{thm:gibbsBdd}
For an empirical distribution $\rho^E_0$ based on $\{\xi_i, w^0_i\}_{i=1}^m$, which is an approximation to the prior $\rho_0$, the approximate posterior $\overline{\rho}^E$ defined by \eqref{eq:gibbsEmpDistRB} satisfies
\begin{align}
    h(\overline{\rho}^E, \rho) \le 2\exp(W C_l)C W e + 2\exp(WC_l) \sqrt{\sum_{i=1}^m (w_i^0)^2}
 \end{align}
for the same constants $C > 0$ as in Lemma \ref{lm:gibbsRBmetricBdd}.  In particular, if $w^0_i = \frac{1}{m}$, i.e., the particles are MC samples of the prior, we have
$$h(\overline{\rho}^E, \rho) \le 2\exp(W C_l)C W e + 2\exp(WC_l) \frac{1}{\sqrt{m}}.$$
 \end{theorem}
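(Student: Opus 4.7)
The plan is to decompose the error via the triangle inequality for the metric $h$, introducing the intermediate quantity $\rho^E := G_W \rho_0^E$, which is the particle-based Gibbs posterior that uses the \emph{exact} loss $l$ on the same particle set $\{\xi_i, w_i^0\}_{i=1}^m$ that defines the prior empirical measure $\rho_0^E$. Writing
\begin{equation*}
h(\overline{\rho}^E, \rho) \;\le\; h(\overline{\rho}^E, \rho^E) \;+\; h(\rho^E, \rho),
\end{equation*}
splits the total error cleanly into a surrogate-approximation piece and a particle-approximation piece.

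The first term $h(\overline{\rho}^E, \rho^E)$ is exactly what Lemma \ref{lm:gibbsRBmetricBdd} controls: under Assumptions \ref{as:gibbsLoss} and \ref{as:RBLoss} it is bounded by $2\exp(WC_l)\, C\, W e$, with the same constant $C$ appearing in the target estimate. For the second term, I would observe that $\rho = G_W \rho_0$ and $\rho^E = G_W \rho_0^E$ (they arise from applying the same Gibbs update operator to the true prior and to its empirical approximation, respectively), so Lemma \ref{lm:gibbsMapBdd} gives
\begin{equation*}
h(\rho^E, \rho) \;=\; h(G_W \rho_0^E, G_W \rho_0) \;\le\; 2\exp(WC_l)\, h(\rho_0^E, \rho_0).
\end{equation*}
Finally I would invoke Lemma \ref{lm:gibbsMCBdd} to bound $h(\rho_0^E,\rho_0) \le \sqrt{\sum_{i=1}^m (w_i^0)^2}$. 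Combining the three bounds yields the claimed estimate, and specializing to $w_i^0 = 1/m$ gives $\sqrt{\sum_i 1/m^2} = 1/\sqrt{m}$, which produces the second form of the inequality.

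There is no real obstacle here; the work is essentially bookkeeping. The only subtle point is making sure the intermediate object $\rho^E$ is defined so that both preceding lemmas apply: it must be $G_W$ applied to the same empirical prior $\rho_0^E$ used to build $\overline{\rho}^E$, so that Lemma \ref{lm:gibbsRBmetricBdd} (which compares the two Gibbs-updated empirical measures on a \emph{fixed} particle set) and Lemma \ref{lm:gibbsMapBdd} (which propagates the prior-level discrepancy through the exact Gibbs map) can be chained without mismatched particle sets or mismatched loss functions. Once that intermediate is fixed, the triangle inequality plus the three cited lemmas give the result in one line each.
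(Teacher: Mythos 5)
Your proposal is correct and follows exactly the same route as the paper: the triangle inequality through the intermediate exact-loss particle posterior $\rho^E = G_W\rho_0^E$, with Lemma \ref{lm:gibbsRBmetricBdd} handling the surrogate error, Lemma \ref{lm:gibbsMapBdd} propagating the prior-level discrepancy, and Lemma \ref{lm:gibbsMCBdd} bounding $h(\rho_0^E,\rho_0)$. Your remark about keeping the particle set and loss function consistent across the two lemmas is the right thing to watch, and the paper's proof implicitly does the same.
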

 \begin{proof}
 By triangle inequality and Lemma \ref{lm:gibbsMCBdd}, \ref{lm:gibbsMapBdd} and \ref{lm:gibbsRBmetricBdd}, we have that
 \begin{align}
h(\overline{\rho}^E, \rho) &\le h(\overline{\rho}^E, \rho^E) + h(\rho^E, \rho)\nonumber\\
&\le h(\overline{\rho}^E, \rho^E) + 2\exp(WC_l) h(\rho^E_0, \rho_0)\nonumber\\
&\le  2\exp(W C_l)C W e  + 2\exp(WC_l) \sqrt{\sum_{i=1}^m (w_i^0)^2}.\nonumber
\end{align}
This completes the proof.
\end{proof}
From these results, we can see that if we use an MC approximation to the prior, we can make $h(\overline{\rho}^E, \rho)$ arbitrarily small by decreasing $e$ and increasing $m$. 
In practice, however, it is nontrivial to do both at the same time, as when the number of particles $m$ increases, 
we require stronger global accuracy on our surrogate model $\overline{l}(\xi)$,  which can only be achieved by globally refining the surrogate with more PDE solves.
To efficiently represent the posterior distribution with a limited number of particles, we rely on a SMC method to progressively evolve the particles through a sequence of interpolating distributions from $\rho_0$ to $\rho$.
The samples evolved through the local RB surrogate are automatically navigated to the support of the posterior in the process. 

\section{An adaptive Sequential Monte Carlo method for particle evolution}
The above discussion deals with the asymptotic convergence of the particle approximation based on a fixed set of particles. 
In practice, using \eqref{eq:gibbsPosFormulaWtsRB} as an approximation to the posterior weights with a fixed set of particles drawn from $\rho_0$ may lead to a poor approximation to $\rho$, especially when $W$ is large. 
The reason for this is the potential loss of sample diversity, i.e., the posterior mass may concentrate over just a few particles.   

In SMC, instead of computing the posterior weights at once, particles are evolved to approximate a sequence of intermediate distributions interpolating from the prior to posterior. 
The particles are resampled and mutated after each iteration to prevent degeneracy. We adopt such an SMC framework to approximate the Gibbs posterior distribution. 
In particular, our SMC method for Gibbs posterior follows closely to the recent work in \cite{kantas2014sequential} where the authors proposed an SMC method for high dimensional inverse problems. 

\subsection{The Sequential Monte Carlo method}
In the context of Gibbs posterior, the sequence of the interpolating distributions are defined by 
\begin{equation}\label{eq:intermediate_dist_1}
\rho_t = G_{W_t}\rho_0,\ 0\le t\le N
\end{equation}
where $0=W_0< W_1< W_2\dots<W_t<\dots<W_N = W$, and recall the definition of $G_{W}$ as the Gibbs update formula defined in \eqref{eq:gibbsPosFormula_L}. we set $\rho_N = \rho$, which is the posterior distribution we want to approximate. 
Also, it is easy to show that we have the following property
\begin{equation}\label{eq:intermediate_dist_2}
\rho_t = G_{W_t-W_s}\rho_s,\ 0\le s\le t\le N
\end{equation}
by the Gibbs update formula \eqref{eq:gibbsPosFormula_L}. 
This property allows us to apply SMC methods and progressively approximate $\rho$. 

As mentioned, the key idea of SMC is to start from a particle based approximation of $\rho_0$, i.e., $\rho_0^E$, which is easy to obtain, and gradually increase the weight $W_t$ until it reaches $W$, adjusting the particles along the way. 
To this end, we denote the particle approximation to $\rho_t$ as $\rho_t^E$,
$$\rho_t^E = \sum_{i=1}^m w^t_i \delta(\xi-\xi^t_i)$$
based on the particle set $\{\xi^t_i, w^t_i\}_{i=1}^m$. 

The iteration $t+1$ of the SMC involves three steps: 
(i) update the weights of the current particle set $\{\xi^t_i\}_{i=1}^m$ by
\begin{equation}\label{eq:gibbsPosFormulaWts_smc}
w^{t+1}_i = \frac{\exp(-(W_{t+1}-W_t)l(\xi^t_i))w^{t}_i }{\sum_{k=1}^{m} w^{t}_k \exp(-(W_{t+1}-W_t)l(\xi^t_k))}.
\end{equation}
The distribution based on $\{\xi^t_i, w^{t+1}_i\}_{i=1}^m$ is denoted by 
$$\rho_{t+1, t}^E = \sum_{i=1}^m w^{t+1}_i \delta(\xi-\xi^t_i),$$
that is $\rho_{t+1, t}^E  = G_{W_{t+1}-W_t} \rho_{t}^E$. 
By Lemma \ref{lm:gibbsMapBdd}, we have that
\begin{equation}\label{eq:SMC_step1_bdd}
h(\rho_{t+1, t}^E, \rho_{t+1})\le 2\exp((W_{t+1}-W_t) C_l) h(\rho_{t}^E, \rho_{t}).
\end{equation}

(ii) Resample the particles $\{\xi^t_i\}_{i=1}^m$ with replacement according to the weights $\{w^{t+1}_i\}_{i=1}^m$, 
i.e., resample according to $\rho_{t+1, t}^E$. 
This step effectively eliminates the particles with negligible weights and duplicate the particles with large weights. 
All resampled particles, including the duplicates, are denoted by $\{\xi^{t+1,t}_i\}_{i=1}^m$ and are assigned equal weights $1/m$.  
The resampled distribution is denoted by 
$$\rho^{E,S}_{t+1, t} = \sum_{i=1}^m \frac{1}{m} \delta(\xi-\xi^{t+1,t}_i) .$$
By Lemma \ref{lm:gibbsMCBdd}, we have that
\begin{equation}\label{eq:SMC_step2_bdd}
h(\rho^{E,S}_{t+1, t}, \rho^{E}_{t+1, t})\le \frac{1}{\sqrt{m}}.
\end{equation}
This resampling step alone does not prevent sample degeneracy, as only a few  particles will survive and copy themselves. 
To preserve population diversity, a third step is required. 

(iii) Apply a $\rho_{t+1}$-invariant mutation to the resampled set $\{\xi^{t+1,t}_i\}_{i=1}^m$ in step (ii). 
This can be achieved by evolving the particles $\{\xi^{t+1,t}_i\}_{i=1}^m$ independently by one or more steps using a $\rho_{t+1}$-invariant Markov kernel $\mathcal{K}_{t+1}$ (i.e., $\rho_{t+1}=\rho_{t+1}\mathcal{K}_{t+1}$), 
e.g., an MCMC kernel with $\rho_{t+1}$ as the stationary distribution. Note that the invariant property of $\mathcal{K}_{t+1}$ implies that \cite{ beskos2015sequential, rebeschini2015can},
\begin{equation}\label{eq:markov}
h(p\mathcal{K}_{t+1}, q\mathcal{K}_{t+1}) \le h(p, q),\ \forall\ p, q\in P,\ \forall\ 0\le t\le N-1.
\end{equation}
The resulted particles from step (iii), denoted by $\{\xi^{t+1}_i\}_{i=1}^m$, with weights $1/m$, 
define the distribution 
$$\rho^E_{t+1} = \sum_{i=1}^m \frac{1}{m} \delta(\xi-\xi^{t+1}_i)$$ 
that is used to approximate $\rho_{t+1}$ and is used for the next iteration of the SMC. 
We adopt the same MCMC mutation kernel proposed in \cite{kantas2014sequential} for this step, which has been shown to be efficient for high dimensional inverse problems. 

We have the following theorem regarding the SMC method for the Gibbs posterior.
\begin{theorem}\label{lm:gibbsBdd_SMC}
Assuming that the initial particles are a set of MC samples with equal weights $1/m$, then following the above outlined SMC method with the exact loss function $l(\xi)$, we have that for all iterations $t: 0\le t\le N-1$,
\begin{equation}\label{eq:gibbsBdd_SMC_Step}
h(\rho^E_{t+1}, \rho_{t+1})\le \frac{1}{\sqrt{m}} \sum_{s=0}^{t+1} 6^{t+1-s}\exp((W_{t+1}-W_s) C_l)
\end{equation}
in particular, we have a posterior error bound
\begin{equation}\label{eq:gibbsBdd_SMC_Pos}
h(\rho^E, \rho)\le \frac{1}{\sqrt{m}} \sum_{s=0}^{N} 6^{N-s}\exp((W-W_s) C_l).
\end{equation}
where $W=W_N$.
\end{theorem}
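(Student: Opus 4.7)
The plan is induction on $t$, combining the three SMC sub-steps (reweight, resample, mutate) at each iteration through the triangle inequality for $h$ and the contraction / stability estimates already established.

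For the base case ($t = 0$), the initial particle cloud is an i.i.d.\ Monte Carlo sample from $\rho_0$ with uniform weights $1/m$, so Lemma \ref{lm:gibbsMCBdd} gives $h(\rho^E_0, \rho_0) \le 1/\sqrt{m}$, which matches the formula at $t+1=0$ and seeds the induction.

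For the inductive step I would decompose the step-$(t+1)$ error through the SMC intermediate measures
\[
\rho^E_t \;\xrightarrow{\text{reweight}}\; \rho^E_{t+1,t} \;\xrightarrow{\text{resample}}\; \rho^{E,S}_{t+1,t} \;\xrightarrow{\text{mutate}}\; \rho^E_{t+1},
\]
and compare against $\rho_{t+1} = \rho_{t+1}\mathcal{K}_{t+1}$ via
\[
h(\rho^E_{t+1},\rho_{t+1}) \le h(\rho^E_{t+1},\rho^{E,S}_{t+1,t}\mathcal{K}_{t+1}) + h(\rho^{E,S}_{t+1,t}\mathcal{K}_{t+1},\rho_{t+1}\mathcal{K}_{t+1}).
\]
The first summand is a Monte Carlo sampling error for the single mutation applied independently to each particle; conditioning on the resampled positions, the argument of Lemma \ref{lm:gibbsMCBdd} applied to $\mathcal{K}_{t+1} f$ (which is still bounded by $1$ when $|f|_\infty\le 1$) gives a bound of $1/\sqrt{m}$. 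The second summand is controlled by the Markov invariance estimate \eqref{eq:markov}, reducing it to $h(\rho^{E,S}_{t+1,t},\rho_{t+1})$. One more triangle inequality through $\rho^E_{t+1,t}$ yields
\[
h(\rho^{E,S}_{t+1,t},\rho_{t+1}) \le h(\rho^{E,S}_{t+1,t},\rho^E_{t+1,t}) + h(\rho^E_{t+1,t},\rho_{t+1}),
\]
where the first piece is again $\le 1/\sqrt{m}$ by Lemma \ref{lm:gibbsMCBdd} conditioned on $\rho^E_{t+1,t}$ (followed by the tower property), and the second piece equals $h(G_{W_{t+1}-W_t}\rho^E_t, G_{W_{t+1}-W_t}\rho_t) \le 2\exp((W_{t+1}-W_t)C_l)\,h(\rho^E_t,\rho_t)$ by \eqref{eq:intermediate_dist_2} and Lemma \ref{lm:gibbsMapBdd}. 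Collecting the three sub-step contributions into a single one-step recurrence of the form
\[
h(\rho^E_{t+1},\rho_{t+1}) \le 6\exp((W_{t+1}-W_t)C_l)\,h(\rho^E_t,\rho_t) + \tfrac{1}{\sqrt{m}},
\]
where the constant $6$ absorbs the several $1/\sqrt{m}$ pieces (and the multiplicative factor $2$ from the reweighting) into a single additive $1/\sqrt{m}$ at the cost of a slightly inflated prefactor, sets up the induction.

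Unrolling this linear recurrence gives the telescoping product of exponentials that collapses to $\exp((W_{t+1}-W_s)C_l)$ together with the geometric prefactor $6^{t+1-s}$ for each $s \in \{0,\dots,t+1\}$, summing to the claimed bound \eqref{eq:gibbsBdd_SMC_Step}. Setting $t+1 = N$ and using $W_N = W$ gives the posterior bound \eqref{eq:gibbsBdd_SMC_Pos} immediately.

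The main obstacle is handling randomness carefully: the intermediate measures $\rho^E_t, \rho^E_{t+1,t}, \rho^{E,S}_{t+1,t}$ are themselves random, so the essentially deterministic stability proofs of Lemmas \ref{lm:gibbsMapBdd} and \ref{lm:gibbsMCBdd} must be applied either in a pathwise form (as in the proof of Lemma \ref{lm:gibbsMapBdd}, which bounds $|G_W\rho_1[f]-G_W\rho_2[f]|$ realization-by-realization before taking the supremum and root-mean-square) or through explicit conditioning and the tower property. Bookkeeping the three $1/\sqrt{m}$ pieces into a single additive $1/\sqrt{m}$ while keeping the multiplicative factor uniform across iterations is the step that pins the constant down to $6$, and it is the main place where some slack has to be introduced.
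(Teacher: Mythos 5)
Your proof is correct and follows essentially the same strategy as the paper's: a triangle-inequality decomposition through the intermediate SMC measures, controlled by Lemma \ref{lm:gibbsMCBdd} (for the sampling steps), Lemma \ref{lm:gibbsMapBdd} via \eqref{eq:SMC_step1_bdd} (for the reweighting), and the kernel invariance \eqref{eq:markov}, followed by unrolling a linear recurrence. The bookkeeping is arranged a little differently: the paper pivots twice through $\rho^E_{t+1,t}$, which produces the factor $3$ on $h(\rho_{t+1},\rho^E_{t+1,t})$ and hence the recurrence $h(\rho^E_{t+1},\rho_{t+1})\le \frac{1}{\sqrt m}+6\exp((W_{t+1}-W_t)C_l)h(\rho^E_t,\rho_t)$ directly, whereas you pivot through $\rho^{E,S}_{t+1,t}\mathcal{K}_{t+1}$ and additionally charge a $1/\sqrt m$ for the sampling in the mutation step (which the paper suppresses by identifying $\rho^E_{t+1}$ with $\rho^{E,S}_{t+1,t}\mathcal{K}_{t+1}$), arriving at the honest recurrence $h(\rho^E_{t+1},\rho_{t+1})\le \frac{2}{\sqrt m}+2\exp((W_{t+1}-W_t)C_l)h(\rho^E_t,\rho_t)$. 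One caveat: your claim that this can be ``absorbed'' into the form $\frac{1}{\sqrt m}+6\exp(\cdot)h(\rho^E_t,\rho_t)$ is not a valid pointwise inference (it would require $h(\rho^E_t,\rho_t)\ge \frac{1}{4\sqrt m}\exp(-(W_{t+1}-W_t)C_l)$, which is not guaranteed); however, unrolling your recurrence as it stands yields $\frac{1}{\sqrt m}\sum_{s=0}^{t+1}2^{\,t+2-s}\exp((W_{t+1}-W_s)C_l)$ (with the $s=t+1$ term read as $2$), which is dominated by the stated bound \eqref{eq:gibbsBdd_SMC_Step} because every factor $\exp((W_{t+1}-W_s)C_l)\ge 1$, so the theorem still follows --- in fact with a slightly sharper constant than the paper's.
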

\begin{proof}
First, by Equation \eqref{eq:SMC_step2_bdd}, \eqref{eq:markov} and the fact that $\rho_{t+1}=\rho_{t+1}\mathcal{K}_{t+1}$, we have
\begin{align}
h(\rho^E_{t+1}, \rho^{E}_{t+1, t}) &= h(\rho^{E,S}_{t+1, t}\mathcal{K}_{t+1}, \rho^{E}_{t+1, t})\le h(\rho^{E,S}_{t+1, t}\mathcal{K}_{t+1}, \rho^{E}_{t+1, t}\mathcal{K}_{t+1}) + h(\rho^{E}_{t+1, t}\mathcal{K}_{t+1}, \rho^{E}_{t+1, t})\nonumber\\
&\le h(\rho^{E,S}_{t+1, t}, \rho^{E}_{t+1, t}) + h(\rho^{E}_{t+1, t}\mathcal{K}_{t+1}, \rho_{t+1} \mathcal{K}_{t+1}) + h(\rho_{t+1}, \rho^{E}_{t+1, t})\nonumber\\
&\le  \frac{1}{\sqrt{m}} + 2h(\rho_{t+1}, \rho^{E}_{t+1, t}).\nonumber
\end{align}
Hence
\begin{align}
h(\rho^E_{t+1}, \rho_{t+1})&\le h(\rho^E_{t+1}, \rho^{E}_{t+1, t}) + h(\rho_{t+1}, \rho^{E}_{t+1, t}) \le \frac{1}{\sqrt{m}} + 3 h(\rho_{t+1}, \rho^{E}_{t+1, t})\nonumber\\
&\le \frac{1}{\sqrt{m}} + 6\exp((W_{t+1}-W_t) C_l) h(\rho_{t}^E, \rho_{t})\nonumber
\end{align}
by Equation \eqref{eq:SMC_step1_bdd}. Iterating gives
$$h(\rho^E_{t+1}, \rho_{t+1})\le \frac{1}{\sqrt{m}} \sum_{s=0}^{t+1} 6^{t+1-s}\exp((W_{t+1}-W_s) C_l),$$
which completes the proof.
\end{proof}

When a local RB surrogate loss function $\overline{l}(\xi)$ is used, the sequence of distributions are defined by $\overline{\rho_t^E}$. The update in step (i) is replaced by
\begin{equation}\label{eq:gibbsPosFormulaWts_smc_RB}
\overline{w^{t+1}_i} = \frac{\exp(-(W_{t+1}-W_t)\overline{l}(\xi^t_i))\overline{w^{t}_i} }{\sum_{k=1}^{m} \overline{w^{t}_k} \exp(-(W_{t+1}-W_t)\overline{l}(\xi^t_k))}.
\end{equation}
which defines $\overline{\rho_{t+1, t}^E} = \sum_{i=1}^m \overline{w^{t+1}_i} \delta(\xi-\xi^t_i)$. 
That is, $\overline{\rho_{t+1, t}^E} = \overline{G}_{W_{t+1}-W_t} \overline{\rho_{t}^E} $. By Lemma \ref{lm:gibbsMapBdd_RB}, we have
\begin{equation}\label{eq:SMC_step1_bdd_RB}
h(\overline{\rho_{t+1, t}^E} , \overline{\rho_{t+1}})\le 2\exp((W_{t+1}-W_t) C_{\overline{l}}) h(\overline{\rho_{t}^E}, \overline{\rho_{t}}), 
\end{equation}
where
\begin{equation}\label{eq:gibbsPosFormula_L_RB_step}
\overline{\rho_{t}}(\xi)= \frac{\exp(-W_{t}\overline{l}(\xi))\rho_{0}(\xi)}{\int_{\Xi} \exp(-W_{t}\overline{l}((\xi))\rho_{0}(\xi)d\xi}.
\end{equation}

In addition, the kernel mutation step requires evaluation of the loss function $l(\xi)$ at new parameters which can be accelerated by $\overline{l}(\xi)$ as well. 
To this end, we use a surrogate kernel $\overline{\mathcal{K}_{t+1}}$ that is invariant with respect to $\overline{\rho_{t+1}}$. 
The mutation with respect to $\overline{\mathcal{K}_{t+1}}$ only requires evaluation of $\overline{l}(\xi)$.
We have the following theorem regarding the SMC method using $\overline{l}(\xi)$:
\begin{theorem}\label{lm:gibbsBdd_SMC_RB}
Assuming that the initial particles are a set of MC samples with equal weights $1/m$, 
then following the above outlined SMC method using the local RB surrogate $\overline{l}(\xi)$ for step (i) and (iii), 
we have that for all iterations $t: 0\le t\le N-1$,
\begin{align}\label{eq:gibbsBdd_SMC_Step_RB}
h(\overline{\rho^E_{t+1}}, \rho_{t+1}) &\le \frac{1}{\sqrt{m}} \sum_{s=0}^{t+1} 6^{t+1-s}\exp((W_{t+1}-W_s) C_{\overline{l}}) + 2\exp(W_{t+1}C_l) C_e W_{t+1} e \nonumber\\
&+ 2\exp(W_{t+1}C_l + W_{t+1}\max\{C_l, C_{\overline{l}}\}) \rho_{t+1}[\mathbbm{1}_{\Xi_e^{\perp}}]
\end{align}
In particular, we have the posterior error bound
\begin{align}\label{eq:gibbsBdd_SMC_Pos_RB}
h(\overline{\rho}^E, \rho) &\le  \frac{1}{\sqrt{m}} \sum_{s=0}^{N} 6^{N-s}\exp((W-W_s) C_{\overline{l}}) + 2\exp(WC_l) C_e W e \nonumber\\
&+ 2\exp(WC_l + W\max\{C_l, C_{\overline{l}}\}) \rho[\mathbbm{1}_{\Xi_e^{\perp}}],
\end{align}
where $W=W_N$.
\end{theorem}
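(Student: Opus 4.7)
The plan is to split the error using the triangle inequality through the intermediate continuous surrogate posterior $\overline{\rho_{t+1}}$ defined in \eqref{eq:gibbsPosFormula_L_RB_step}:
\begin{equation*}
h(\overline{\rho^E_{t+1}}, \rho_{t+1}) \le h(\overline{\rho^E_{t+1}}, \overline{\rho_{t+1}}) + h(\overline{\rho_{t+1}}, \rho_{t+1}).
\end{equation*}
The first term captures the particle/SMC error incurred while approximating the surrogate posterior with particles evolved through the surrogate loss $\overline{l}$ and the surrogate kernel $\overline{\mathcal{K}_{t+1}}$. The second term is a purely analytical comparison between the surrogate posterior and the true posterior at weight $W_{t+1}$, which is exactly the setting of Theorem \ref{thm:gibbsBdd_continuous}.

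For the first term, I would replay the proof of Theorem \ref{lm:gibbsBdd_SMC} verbatim but with $l$ replaced by $\overline{l}$, $G_{W}$ replaced by $\overline{G}_{W}$, and $\mathcal{K}_{t+1}$ replaced by $\overline{\mathcal{K}_{t+1}}$. The three ingredients carry over directly: (i) the weight update step uses Lemma \ref{lm:gibbsMapBdd_RB} in place of Lemma \ref{lm:gibbsMapBdd} to obtain the contraction factor $2\exp((W_{t+1}-W_t)C_{\overline{l}})$; (ii) the resampling step still satisfies the $1/\sqrt{m}$ bound of Lemma \ref{lm:gibbsMCBdd}; (iii) the mutation step uses the non-expansivity \eqref{eq:markov} applied to $\overline{\mathcal{K}_{t+1}}$, which is invariant with respect to $\overline{\rho_{t+1}}$ rather than $\rho_{t+1}$. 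Iterating these three ingredients yields
\begin{equation*}
h(\overline{\rho^E_{t+1}}, \overline{\rho_{t+1}}) \le \frac{1}{\sqrt{m}} \sum_{s=0}^{t+1} 6^{t+1-s}\exp((W_{t+1}-W_s) C_{\overline{l}}),
\end{equation*}
which is precisely the first line of \eqref{eq:gibbsBdd_SMC_Step_RB}.

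For the second term, I would apply Theorem \ref{thm:gibbsBdd_continuous} to the pair $(\rho_{t+1}, \overline{\rho_{t+1}})$ with the weight $W$ replaced by $W_{t+1}$. Since $\rho_{t+1} = G_{W_{t+1}}\rho_0$ and $\overline{\rho_{t+1}} = \overline{G}_{W_{t+1}}\rho_0$ share the same prior, the hypotheses of the theorem apply directly and produce
\begin{equation*}
h(\overline{\rho_{t+1}}, \rho_{t+1}) \le 2\exp(W_{t+1} C_l)\, C_e\, W_{t+1} e + 2\exp(W_{t+1}C_l + W_{t+1}\max\{C_l, C_{\overline{l}}\})\, \rho_{t+1}[\mathbbm{1}_{\Xi_e^{\perp}}],
\end{equation*}
which is exactly the remaining two terms of the bound. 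Specializing to $t+1 = N$ and using $W_N = W$ gives \eqref{eq:gibbsBdd_SMC_Pos_RB}.

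The step-by-step work is routine given the earlier lemmas, so there is no real obstacle beyond bookkeeping. The one subtlety worth emphasizing is the distinction between the two surrogate distributions playing a role: the SMC iteration is invariant with respect to $\overline{\rho_{t+1}}$ (not $\rho_{t+1}$), so the triangle-inequality split must route through $\overline{\rho_{t+1}}$ in order to exploit the invariance of $\overline{\mathcal{K}_{t+1}}$ through \eqref{eq:markov}. Routing through $\rho_{t+1}$ directly would force us to bound $h(\rho_{t+1}\overline{\mathcal{K}_{t+1}}, \rho_{t+1})$, which is nonzero. Once this split is chosen, the proof reduces to concatenating Theorem \ref{lm:gibbsBdd_SMC} (in its surrogate form) with Theorem \ref{thm:gibbsBdd_continuous}.
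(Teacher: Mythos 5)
Your proposal is correct and follows exactly the paper's argument: the paper likewise splits via the triangle inequality through $\overline{\rho_{t+1}}$, obtains the first term by restating Theorem \ref{lm:gibbsBdd_SMC} for $h(\overline{\rho^E_{t+1}}, \overline{\rho_{t+1}})$ with $C_{\overline{l}}$ in place of $C_l$, and obtains the remaining terms from Theorem \ref{thm:gibbsBdd_continuous} applied at weight $W_{t+1}$. Your added remark on why the split must route through $\overline{\rho_{t+1}}$ (to exploit the $\overline{\rho_{t+1}}$-invariance of $\overline{\mathcal{K}_{t+1}}$) is a correct reading of the same argument, just made explicit.
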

\begin{proof}
The first term on the right-hand-side comes from a simple restatement of Theorem \ref{lm:gibbsBdd_SMC} for $h(\overline{\rho^E_{t+1}}, \overline{\rho_{t+1}})$. 
The remainders of the right-hand-side is due to the error bound in Theorem \ref{thm:gibbsBdd_continuous}.
\end{proof}

From  Theorem \ref{lm:gibbsBdd_SMC_RB}, we see how we should construct the surrogate model $\overline{l}(\xi)$. 
Given the number of particles $m$ and the prescribed surrogate accuracy $e$, we need to build the surrogate $\overline{l}(\xi)$ so that $\rho[\mathbbm{1}_{\Xi_e^{\perp}}]$, 
i.e. the posterior measure of the ``unfeasible set"  $\Xi_e^{\perp}$, is minimized. 
In terms of local RB surrogate, this requires concentration of local RB atoms and the accurate approximation of $l(\xi)$ over the support of the posterior.
To this end, we progressively train the local RB surrogate using the sequence of particles $\{\xi^t_i\}_{i=1}^m$. 
As the particles gradually cluster over the support for the posterior through the SMC iterations, 
the focus of the local RB surrogate is automatically navigated to the support of the posterior as well, resulting in a decrease of the measure of the inaccurate set $\rho[\mathbbm{1}_{\Xi_e^{\perp}}]$. In addition, notice that the support of the posterior typically corresponds to a small and local region of the support of the prior. Hence, once the local RB model becomes sufficiently accurate over that region, 
further evolution of the particles does not require expensive updates of the surrogate model, leading to computational savings. 

Notice that the surrogate loss function $\overline{l}(\xi)$ is changing throughout the SMC iterations. We can recover consistency in \eqref{eq:gibbsPosFormula_L_RB_step} for all $t$ 
by re-running the SMC algorithm from the beginning up to the current $W_t$ using the latest $\overline{l}(\xi)$ before the next SMC iteration. 
This procedure is computationally inexpensive since the surrogate model samples do not need to evolve during the re-run and hence no full PDE solves are required. 
Therefore, we always assume that the update \eqref{eq:gibbsPosFormula_L_RB_step} is consistent for all iterations with the latest surrogate model $\overline{l}(\xi)$. 

We now present the MCMC algorithm for the mutation step using the local RB surrogate $\overline{l}(\xi)$.
To this end, we first define the following mean and variance of $\overline{\rho_{t+1, t}^E}$, which is the particle distribution after SMC step (i) and before resampling step (ii), 
for each parameter dimension $j\in\{1, 2, \dots, M\}$ as 
\begin{align}\label{eq:MCMC_stat}
\overline{m_{t+1, t, j}^E }= \sum_{i=1}^m \overline{ w^{t+1}_{i} } \xi^t_{i, j},\quad \overline{\Sigma_{t+1, t, j}^E} = \sum_{i=1}^m \overline{ w^{t+1}_{i} } (\xi^t_{i, j} - \overline{m_{t+1, t, j}^E})^2.
\end{align}
The above quantities provide estimates of the mean and variance of $\overline{\rho_{t+1}}$ along each individual parameter dimension at SMC iteration $t+1$,
and will be used to facilitate the design a proposal distribution for the MCMC kernel $\overline{\mathcal{K}_{t+1}}$.

Based on the above definition, a proposal $\hat{\xi}^{t+1,t}_{i}$ for a particle $\xi^{t+1,t}_i\in\{\xi^{t+1,t}_i\}_{i=1}^m$ can be obtained by the following mutation
\begin{align}\label{eq:MCMC_proposal}
\hat{\xi}^{t+1,t}_{i, j} = \overline{m_{t+1, t, j}^E} + \gamma (\xi^{t+1,t}_{i, j}-\overline{m_{t+1, t, j}^E}) + \sqrt{1-\gamma^2}\Lambda_{t+1, t, j}, \ \ \ 1\le j\le M
\end{align}
where $\gamma$ is an algorithmic constant and $\Lambda_{t+1, t, j}$ is a random variable with distribution $\mathcal{N}(0, \overline{\Sigma_{t+1, t, j}^E})$. 
Note that the scaling of the proposal distribution is tailored for each individual parameter dimension by the variance estimates $\overline{\Sigma_{t+1, t, j}^E}$ to improve mixing.
In contrast to standard random walk proposals, the above proposal scales to high dimensional problems as shown in \cite{kantas2014sequential}.
The transition probability associated with the proposal in \eqref{eq:MCMC_proposal} is given by
\begin{align}\label{eq:MCMC_transition_pdf}
Q(\hat{\xi}^{t+1,t}_{i} | \xi^{t+1,t}_{i}) = \exp\left(-\frac{1}{2(1-\gamma^2)}\sum_{j=1}^M \frac{ (\hat{\xi}^{t+1,t}_{i, j}-\overline{m_{t+1, t, j}^E} - \gamma(\xi^{t+1,t}_{i, j}-\overline{m_{t+1, t, j}^E}))^2}{\overline{\Sigma_{t+1, t, j}^E}}\right).
\end{align}

Algorithm \ref{al:MCMC_RB} shows the $\overline{\rho_{t+1}}$-invariant mutation MCMC sampler.
\RestyleAlgo{boxruled}
\begin{algorithm}[!ht]
\caption{The MCMC algorithm for $\overline{\rho_{t+1}}$-invariant mutation \label{al:MCMC_RB} }
\begin{minipage}{.922\linewidth}
For each $\xi(0) \in \{\xi^{t+1,t}_i\}_{i=1}^m$, evolve $\xi(0)$ independently for $I$ steps with the following procedure
\begin{itemize}
\setlength\itemsep{0em}
 \item \textbf{For} $i=1,2,\dots, I$, \textbf{do}
  \begin{itemize}\itemsep0em
    \item[--] Draw a proposal $\widehat{\xi(i)}$ using the proposal distribution in \eqref{eq:MCMC_proposal} based on $\xi(i-1)$. 
    \item[--] Use $\overline{l}(\xi)$ to evaluate $\alpha =1\wedge\frac{\exp(-W_{t+1}\overline{l}(\widehat{\xi(i)}))\rho_0(\widehat{\xi(i)})Q(\xi(i-1)|\widehat{\xi(i)})}{\exp(-W_{t+1}\overline{l}(\xi(i-1))\rho_0(\xi(i-1))Q(\widehat{\xi(i)}|\xi(i-1))}$
    \item[--] With probability $1-\alpha$, reject $\widehat{\xi(i)}$ and set $\xi(i) = \xi(i-1)$, $i=i+1$. Go back to the proposal step. 
    \item[--] If $\widehat{\xi(i)}$ not rejected, set $\xi(i) = \widehat{\xi(i)}$ and $i=i+1$. Go back to the proposal step. 
  \end{itemize}
\textbf{End}
\item Finally, return $\xi(I)$ as a $\overline{\rho_{t+1}}$-invariant mutation of $\xi(0)$.
\end{itemize}
\end{minipage}
\end{algorithm}

\subsection{Adaptive selection of the SMC step size}
We now describe how the sequence of step size $0=W_0< W_1< W_2\dots<W_t<\dots<W_N = W$ can be selected adaptively. 
For each SMC iteration $t+1$, we would like to greedily apply all the residual weight $\Delta W = W_{N}-W_t$ to the particle distribution $\overline{\rho_t^E}$ from the previous iteration, 
so that we can directly approximate the posterior $\rho$.
After applying the SMC step (i), i.e., updating the weights by Equation \eqref{eq:gibbsPosFormulaWts_smc_RB} using $\Delta W$, 
we check a simple criteria called the effective sample size (ESS), which is used to measure the sample degeneracy of the current weights 
$$
\text{ESS} = \left(\sum_{k=1}^m \left(\overline{w^{t+1}_i}\right) ^2\right)^{-1}.
$$
Note that $\text{ESS}$ is small if the majority of the probability weights are pivoted on only a few particles, which indicates the loss of sample diversity. 
In this case, we reduce the incremental weights by a constant factor $\Delta W= \theta \Delta W$ with $\theta\in(0, 1)$ 
and repeatedly backtrack and reevaluate Equation \eqref{eq:gibbsPosFormulaWts_smc_RB} and ESS until the latter variable is above some preset threshold. In this case,  
we accept $\Delta W$, set $W_{t+1} = W_t + \Delta W$, move on to the step (ii) and (iii) and finish the current SMC iteration $t+1$. 
If the residual weight is not zero after iteration $t+1$, we set $t=t+1$ and move to the next SMC iteration. Otherwise, we have applied the total weight to the prior and obtained an approximation to the posterior.

Of course, the local RB surrogate model $\overline{l}(\xi)$ evolves as well by the adaptive training on the particles before applying the incremental weight in each SMC iteration. 
We require $\overline{l}(\xi)$ to satisfy Assumption \ref{as:RBLoss} for each iteration $t$. 
As the particles gradually cluster over a small region in the parameter space, i.e., the support for the posterior, 
$\overline{l}(\xi)$ becomes accurate in that region as well, reducing the measure of the inaccurate set $\rho[\mathbbm{1}_{\Xi_e^{\perp}}]$ as a result.
In addition, as the particles become more compact in a local region, 
expensive refinements of the local RB model are less often triggered due to the local accuracy of $\overline{l}(\xi)$.

We first present the adaptive refinement of the local RB surrogate over a given particle set $\{\xi^t_i\}_{i=1}^m$ in Algorithm \ref{ag:gibbsRB}.
We then show the complete adaptive SMC method in Algorithm \ref{ag:gibbsSMC}. 
In Algorithm \ref{ag:gibbsRB}, we note that the accuracy parameter $e_{\text{thre}}$ is prescribed beforehand and can be made adaptive as well.
For example, we can further improve computational efficiency by setting a larger $e_{\text{thre}}$ in the beginning stage of the SMC algorithm and gradually reduce $e_{\text{thre}}$ throughout the iterations.
This strategy leads to computational savings in the beginning stage when the particles are far from the support of the posterior and are less relevant for characterizing the posterior distribution. 
However, it is essential to set $e_{\text{thre}}$ small enough such that each SMC iteration still leads to the particles moving towards the posterior.  

One possible approach is to choose $e_{\text{thre}}$ based on the range of variation of $\overline{l}(\xi)$ over the current particle set $\{\xi^t_i\}_{i=1}^m$, 
e.g., we can set $e_{\text{thre}}$ to be a small fraction of the standard deviation of $\{\overline{l}(\xi^t_i)\}_{i=1}^m$ and compute $e_{\text{thre}}$ automatically for each SMC iteration. 
With this approach, $e_{\text{thre}}$ is large at initial stages of the SMC algorithm where particles are diverse and the range of variation of $\overline{l}(\xi)$ is large.
In the latter stages where particles are more clustered, $\overline{l}(\xi)$ has a smaller range over the particles, which leads to a smaller $e_{\text{thre}}$. 

\RestyleAlgo{boxruled}
\begin{algorithm}[!ht] 
\caption{Adaptive refinement of local RB surrogate \label{ag:gibbsRB} }
\begin{minipage}{.922\linewidth}
Given the current particle set $\Xi_P:=\{\xi^t_i\}_{i=1}^m$, the current local RB surrogate model $\overline{l}(\xi)$ for $l(\xi)$, 
and a desired accuracy threshold $e_{\text{thre}}$,
\begin{itemize}
\setlength\itemsep{0em}
\item  Compute the local RB error indicator $\epsilon_l(\xi)$ for each particle in $\Xi_P$ and $e_{\text{max}} = \max_{\xi\in\Xi_P}\epsilon_l(\xi)$.
 \item \textbf{While} $e_{\text{max}} > e_{\text{thre}}$, \textbf{do}
  \begin{itemize}\itemsep0em
    \item[--] Select the particle $\xi_{\text{max}}=\text{argmax}_{\xi\in\Xi_P}\epsilon_l(\xi)$.
    \item[--] Update $\overline{l}(\xi)$ by the PDE information at $\xi_{\text{max}}$ via local RB method.
    \item[--] Update the error indicators $\epsilon_l(\xi)$ for each particle in $\Xi_P$ and $e_{\text{max}}$.
  \end{itemize}
  \textbf{End}
\item Return the updated surrogate $\overline{l}(\xi)$.
\end{itemize}
\end{minipage}
\end{algorithm}

\RestyleAlgo{boxruled}
\begin{algorithm}[!ht] 
\caption{The adaptive SMC method \label{ag:gibbsSMC} }
\begin{minipage}{.922\linewidth}
Given initial particle approximation $\rho_0^E:=\sum_{i=1}^m w_i^0 \delta(\xi-\xi^{0}_i)$ (and $\overline{w_i^0}:=w_i^0$), the total loss weight $W$ to be applied, set $W_{\text{current}}=0$ and $t=0$.
\begin{itemize}
\setlength\itemsep{0em}
\item \textbf{While} $W_{\text{current}}< W$, \textbf{do}
\begin{itemize}
\setlength\itemsep{0em}
\item Run Algorithm \ref{ag:gibbsRB} to possibly refine the local RB surrogate $\overline{l}(\xi)$ over the current particle set $\{\xi^{t}_i\}_{i=1}^m$.
\item Set $\Delta W = W-W_{\text{current}}$.
 \item \textbf{While} $\text{TRUE}$, \textbf{do}
  \begin{itemize}\itemsep0em
   \item[--] Compute the new weight $\overline{w^{t+1}_i}$ by Equation \eqref{eq:gibbsPosFormulaWts_smc_RB} using $\Delta W$ as the incremental weight.
   \item[--] Compute $\text{ESS}$ of $\left\{\overline{w^{t+1}_i}\right\}_{i=1}^m$.
   \item[--] \textbf{If}  $\text{ESS}>\text{ESS}_{\text{thre}}$, \textbf{break}. 
   \item[--] Backtrack: $\Delta W = \theta \Delta W$.
  \end{itemize}
  \textbf{End}
\item Resample particles $\{\xi^t_i\}_{i=1}^m$ according to $\left\{\overline{w^{t+1}_i}\right\}_{i=1}^m$ to obtain $\{\xi^{t+1,t}_i\}_{i=1}^m$ with weights $1/m$.
\item Mutate each particle in $\{\xi^{t+1,t}_i\}_{i=1}^m$ with Algorithm \ref{al:MCMC_RB} to obtain a new set of evolved particles $\{\xi^{t+1}_i\}_{i=1}^m$,  set $\overline{w^{t+1}_i} = 1/m$, 
	obtain the current particle approximation $\overline{\rho_{t+1}^E}=\sum_{i=1}^m\frac{1}{m} \delta(\xi-\xi^{t+1}_i)$.
\item Set $W_{\text{current}} = W_{\text{current}} + \Delta W$, $t = t+1$.
\end{itemize}
\textbf{End}
\item Report  $\overline{\rho_{t+1}^E}$ as an approximation to the Gibbs posterior $\rho$. 
\end{itemize}
\end{minipage}
\end{algorithm}

\section{Choosing the weight for the loss function} \label{sec:gibbsWt}
In this section we describe one approach to select $W$, the weights in the loss function. The importance of the weights in the Gibbs posterior formulation is to calibrate the loss,
the calibration is automatic in classic Bayesian inference as the density of the data generation process is the calibration. For example, in the case of Gaussian noise
the $\frac{1}{2\sigma^2}$ in the likelihood function can be interpreted as the weight, so when the noise level is high the likelihood is discounted as compared to the low noise setting.
Methods to select $W$ are generally subjective in the context of the Gibbs posterior, and often problem dependent as well \cite{bissiri2016general, syring2016robust}.
The work in \cite{bissiri2016general} introduced several subjective ways to select $W$. In particular, one proposed method is to select $W$ by balancing two isolated loss terms from the objective function \eqref{eq:gibbsOpt}.  In settings where large data sets are available, one can also select $W$ using methods like cross-validation to tune the predictive performance of the posterior.  
 
For inverse problems, however, one typically has access to a rather limited number of observations, so without some assumption on the noise in the data it is hard to quantify
 uncertainty about the inverse solution. We will make some weak assumptions on the noise to provide a method to set the weight. We assume the noise are i.i.d with mean and standard deviation
$$\mathbb{E}[\epsilon] = \epsilon^M, \quad \epsilon^D = \left(\mathbb{E}[\epsilon^2]-\mathbb{E}[\epsilon]^2 \right)^{\frac{1}{2}}.$$
We adopt an approach that is in the same spirit as the Morozov's discrepancy principle \cite{colton1997simple, scherzer1993use}. We select a weight for which the mean and
standard deviation of residual of the posterior predictions will match the statistics on the observed data
\begin{align}\label{eq:gibbsWtObj}
W_{\text{opt}} = \text{argmin}_{W\in \bm{W}} \frac{\|\frac{1}{n}\sum_{i=1}^n \bar{\epsilon_i}(W) - \epsilon^M\| + \|\sqrt{\frac{1}{n-1}\sum_{i=1}^n (\bar{\epsilon_i}(W)- \epsilon^M)^2} - \epsilon^D\| }{\|\epsilon^D\|}
\end{align}
with $\bar{\epsilon_i}(W) = \mathcal{F}(\mathbb{E}_{\rho}(\xi)) -d_i$ is the posterior predicted noise or residual for observation $i$, given weight $W$.
Selecting the set $\bm{W}$ to optimize over in the above equation is nontrivial. In addition, when the sample size is small,  Note that, even with only one observation, if multiple channels are available, one can still estimate the statistics of noise and use \eqref{eq:gibbsWtObj} to select $W$, however one would not have a great deal of trust in the estimate.

We know that the weight would be  $\frac{1}{2{(\epsilon^D)}^2}$ for the classic Bayesian setting with square loss and Gaussian noise. Using this information we provide a discrete 
grid of candidate weight values  $\bm{W}:= \left [\frac{1}{2{(\epsilon^D)}^2T}, T\frac{1}{2{(\epsilon^D)}^2} \right]$, where $T>1$ is a range parameter (e.g., $T=50$). The discretization
is for computational efficiency. To address the case where we may have a very small sample size we stabilize our weight estimate by modeling averaging with the standard Bayesian case
\begin{align}\label{eq:gibbsWt}
W = \frac{S}{S+n-1} \frac{1}{2{(\epsilon^D)}^2} + \frac{n-1}{S+n-1}W_{\text{opt}}
\end{align}
for some $S \geq  1$ (e.g., $S=10$). When $n$ is small, we favor the empirical weight $\frac{1}{2{(\epsilon^D)}^2}$, when $n$ is large and the noise statistics can be computed with good accuracy and we favor the optimized weight $W_{\text{opt}}$, the above can be considered an empirical Bayes procedure. 

We can take advantage of the sequential structure of the SMC procedure to efficiently evaluate the objective in \eqref{eq:gibbsWtObj}  using intermediate computations from the SMC procedure. This allows us to efficiently compute $W_{\text{opt}}$ upon finishing the SMC run and then compute the final weight  by \eqref{eq:gibbsWt},.


It is worth further investigation to see if one can  choose the weight $W$ purely based on the data, instead of imposing additional assumptions of the noise. In addition, it is useful to understand if the use of further information about $\rho(\xi)$ beyond the posterior mean $\mathbb{E}_{\rho}(\xi)$ would help in determining $W$.

\section{Numerical examples}
Now, we present three numerical examples to show the behavior and computational efficiency of our SMC method.
\subsection{1D advection diffusion equation}
In the first example, we consider a 1D advection-diffusion problem. 
We show that our method recovers the usual Bayesian approach when a likelihood function is available and that we use the negative log-likelihood as the weighted loss function.

Let $D=(0,1)$ and consider the following boundary value problem
\begin{subequations}
\begin{align}
  -\nu \frac{\partial^2 u}{\partial x^2}(x, \xi^*)
  +b(x, \xi^*)\frac{\partial u}{\partial x}(x, \xi^*)
    &=f(x), \quad x\in D \\
    u(0,\xi^*)=u(1,\xi^*) &= 0
\end{align}
\end{subequations}
The diffusivity, $\nu$, and source, $f$, are known whereas the advection field, $b$, is a piecewise constant random field parametrized by two unknown parameters $\xi^*_1$ and $\xi^*_2$ as
\begin{equation}
  b(x, \xi^*)= \left[b_1+2\xi^*_1\right]\mathbbm{1}_{[0,0.5)}(x)
    +\left[b_2+2\xi^*_2\right]\mathbbm{1}_{[0.5,1]}(x)
\end{equation}
where $\mathbbm{1}_S(x)$ is one if $x$ is in the set $S$ and is zero otherwise.

We are able to measure the solution at three different locations of $x = [0.1, 0.5, 0.9]$. Our noisy data is hence given by
$$d = \mathcal{D}u + \epsilon$$
where $\mathcal{D}$ is an operator that maps the solution $u(x,\xi^*)$ to the measurement and $\epsilon$ is a noise vector that contains i.i.d entries. 
We assume the noise is drawn from a Gaussian distribution with standard deviation equal to 10\% of the magnitude of the true data. 
In particular, we have $\epsilon^D = 0.173$.
To match the Gaussian likelihood, we use $W=\frac{1}{2{(\epsilon^D)}^2}  = 16.70$ and the  loss
$$l(\xi, d) = \|\mathcal{D}u(x, \xi) - d\|_{l_2}^2.$$

The values of the known parameters are $\nu = 0.1,\ b_1=-0.5,\ b_2=-0.2, \ f(x)=1$, while  the true values of unknown parameters are $\xi^*_1=0.2,\ \xi^*_2=0.7$. 
For the prior distributions, we assume $\xi_1\sim U[0, 1],\ \xi_2\sim U[0, 1]$. 
We use Algorithm \ref{ag:gibbsSMC} to compute the Gibbs posterior with $m=100$ evolving particles and local RB  accuracy set to be $1e-3$. 
In addition, as reference, we perform the standard Random Walk Metropolis-Hastings algorithm 
with Gaussian likelihood to obtain $5,000$ samples from the posterior with $1,000$ burn-in steps. 

We show the comparison of our SMC result with the MCMC reference in Figure \ref{fig:comp_bayesian}. 
Clearly, the SMC method performs similarly to the reference in approximating the posterior distribution. 
In particular, the SMC method took just 3 iterations to reach a good approximation of the posterior. The evolution of the particles, the atoms of the local RB surrogate 
and the intermediate distributions are shown in Figure \ref{fig:gibbs_1dadv_step0} to \ref{fig:gibbs_1dadv_step3} for the various iterations.
As can be seen, as the weight $W$ is progressively increased, the particles cluster around the support of the posterior, 
while simultaneously leading the local RB surrogate to concentrate on the relevant region of the parameter space.

We report the accumulative number of PDE solves at each SMC iteration in Figure \ref{fig:gibbs_1dadv_npde}. 
Most of the computational effort corresponding to the construction of the local RB surrogate is spent the first iteration as the particles move the most towards the posterior support.
Once the local RB becomes accurate over the posterior region, further evolution of the particles rarely triggers the refinement of the surrogate.
The total number of PDE solves to obtain the shown posterior for this example was around 200, representing a significant computational saving over the MCMC method. 

\begin{figure}[!ht]
\centering
\subfloat[CDF plot for $\xi_1$] {\includegraphics[width=0.4\linewidth]{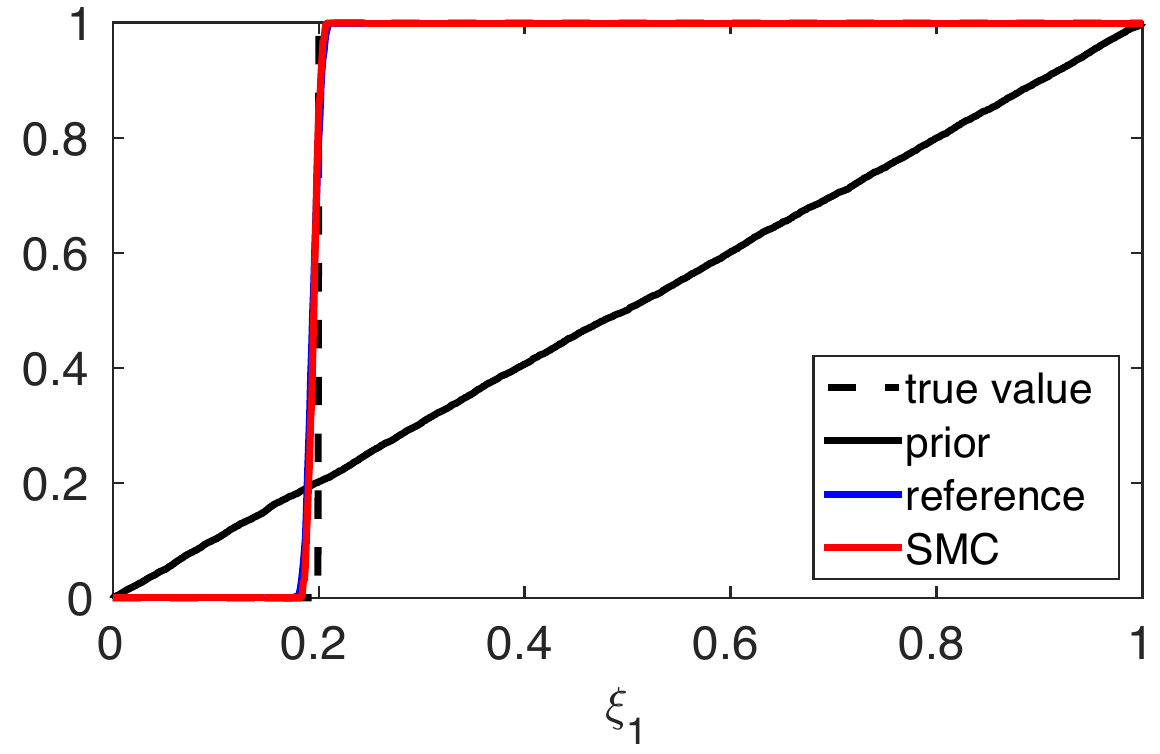}}
\hspace{1cm}
\subfloat[CDF plot for $\xi_2$] {\includegraphics[width=0.4\linewidth]{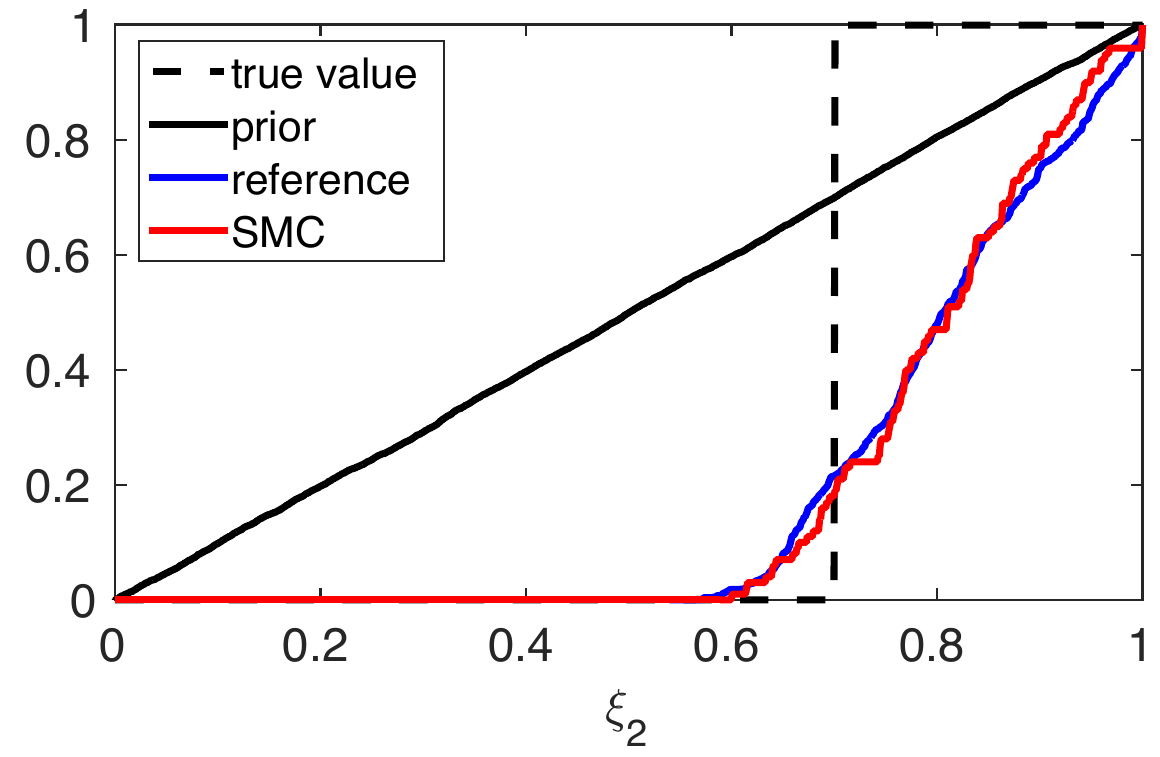}}
\caption[Comparison of the proposed SMC method and the standard MCMC method with Gaussian likelihood function.]
{Comparison of the posterior distribution of the parameters computed by Algorithm \ref{ag:gibbsSMC} and the standard MCMC method with Gaussian likelihood function. }
\label{fig:comp_bayesian}
\end{figure}

\begin{figure}[!ht]
\centering
\subfloat[The true parameters (black), the particles (red) and the atoms for local RB (blue)] {\includegraphics[width=0.27\linewidth]{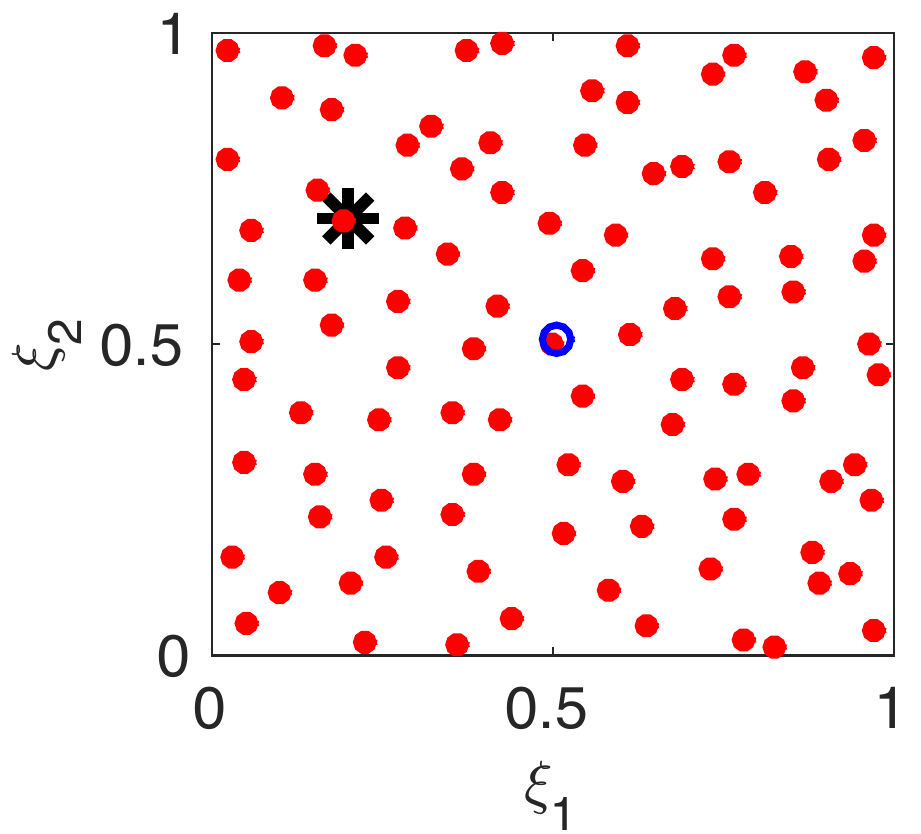}}
\subfloat[CDF plot for $\xi_1$] {\includegraphics[width=0.33\linewidth]{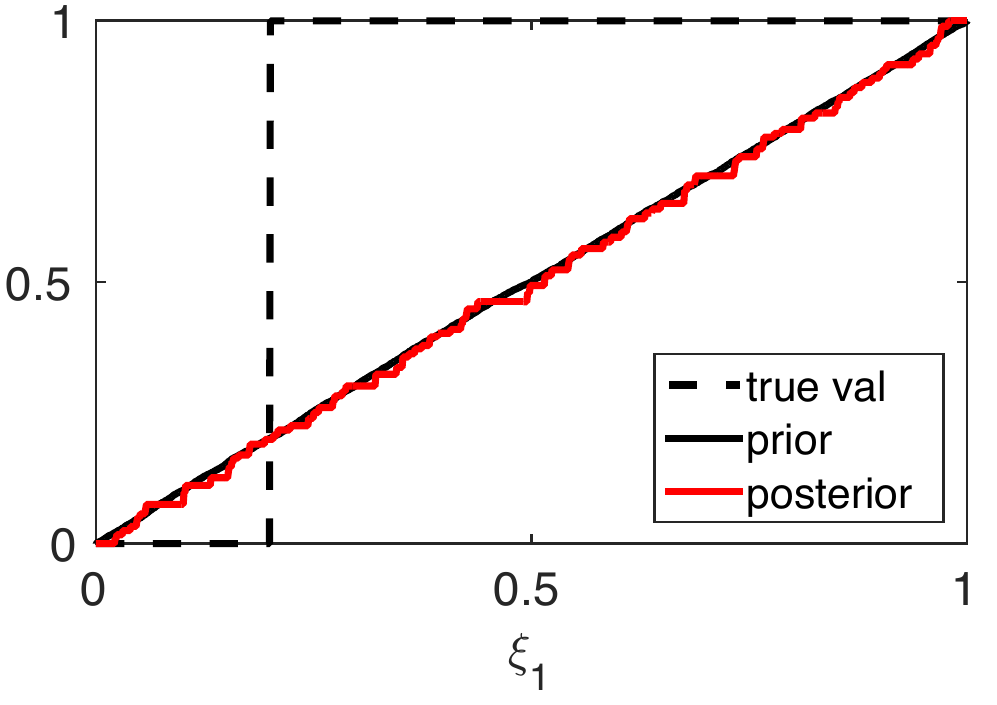}}
\subfloat[CDF plot for $\xi_2$] {\includegraphics[width=0.33\linewidth]{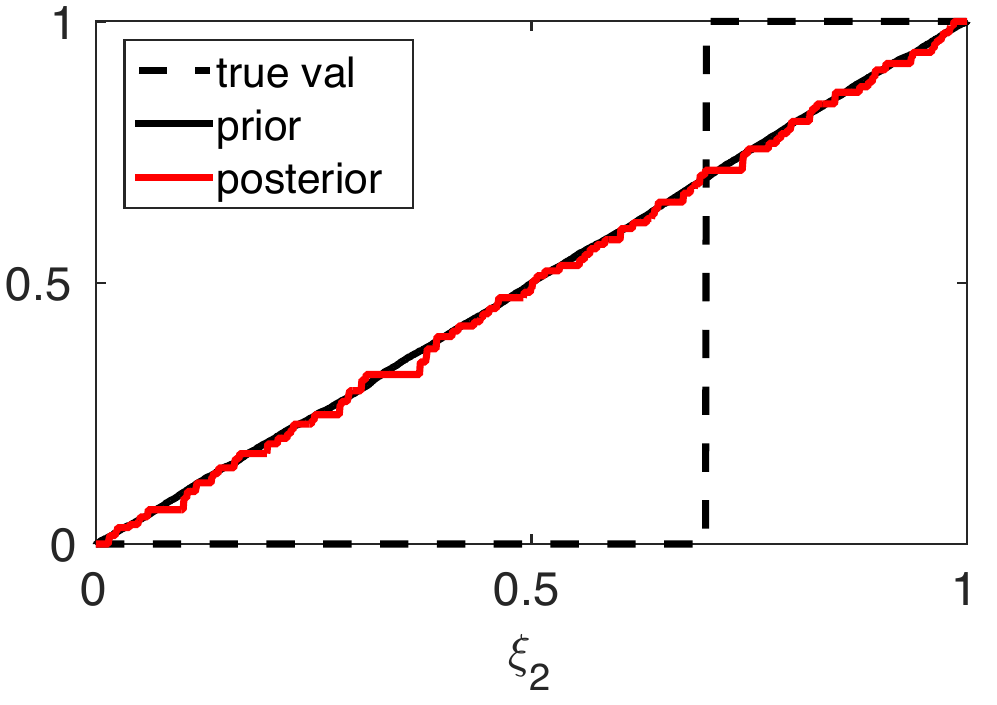}}
\caption{SMC iteration 0, with a loss weight $W_0$ = 0.}
\label{fig:gibbs_1dadv_step0}
\end{figure}
\begin{figure}[!ht]
\centering
\subfloat[The true parameters (black), the particles (red) and the atoms for local RB (blue)] {\includegraphics[width=0.27\linewidth]{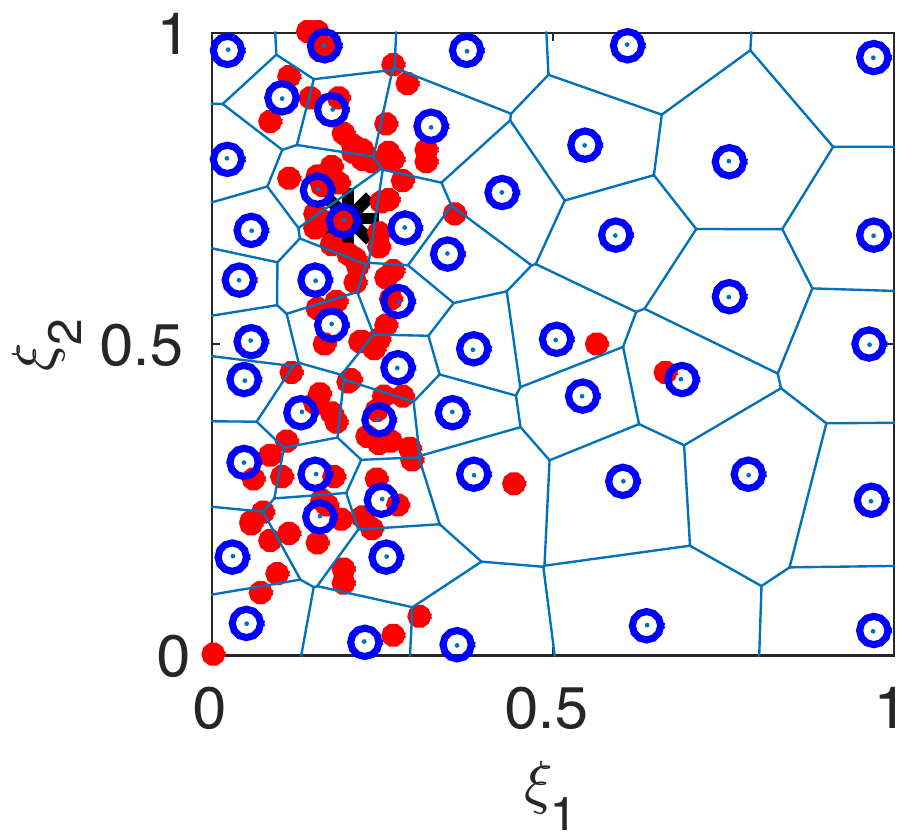}}
\subfloat[CDF plot for $\xi_1$] {\includegraphics[width=0.33\linewidth]{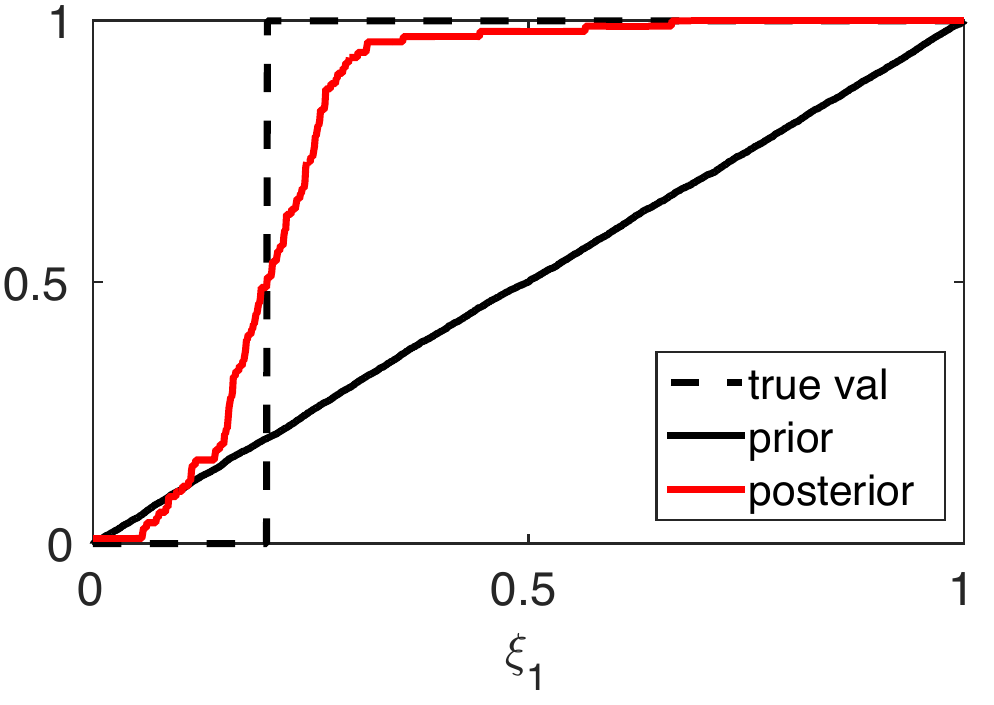}}
\subfloat[CDF plot for $\xi_2$] {\includegraphics[width=0.33\linewidth]{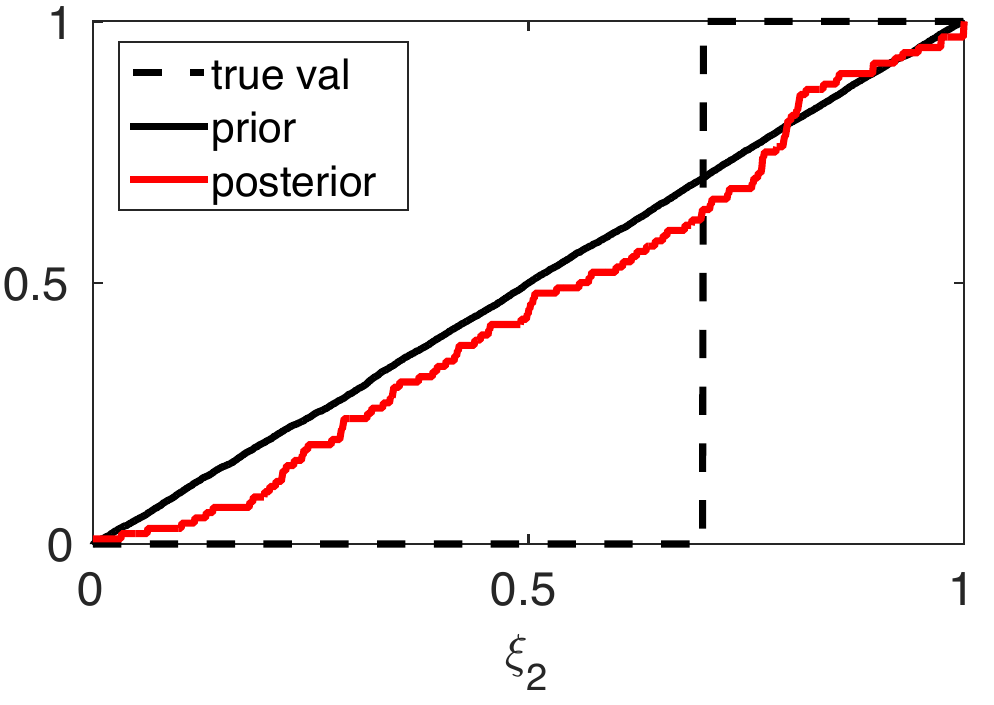}}
\caption{SMC iteration 1, with a loss weight $W_1$ = 0.0782.}
\label{fig:gibbs_1dadv_step1}
\end{figure}
\begin{figure}[!ht]
\centering
\subfloat[The true parameters (black), the particles (red) and the atoms for local RB (blue)] {\includegraphics[width=0.27\linewidth]{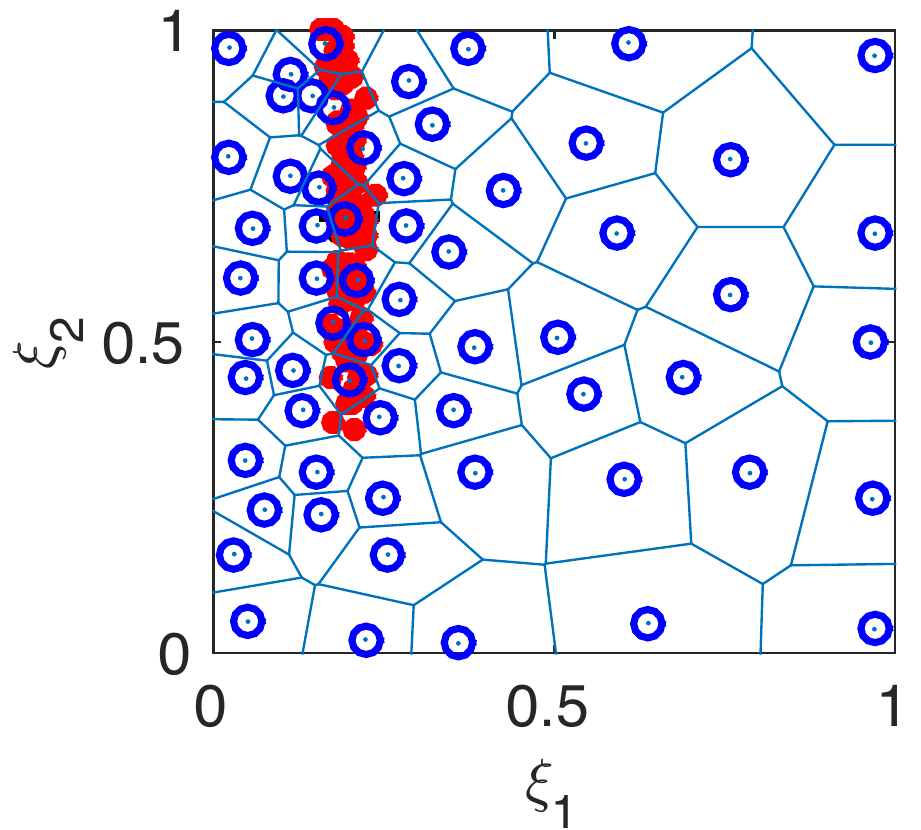}}
\subfloat[CDF plot for $\xi_1$] {\includegraphics[width=0.33\linewidth]{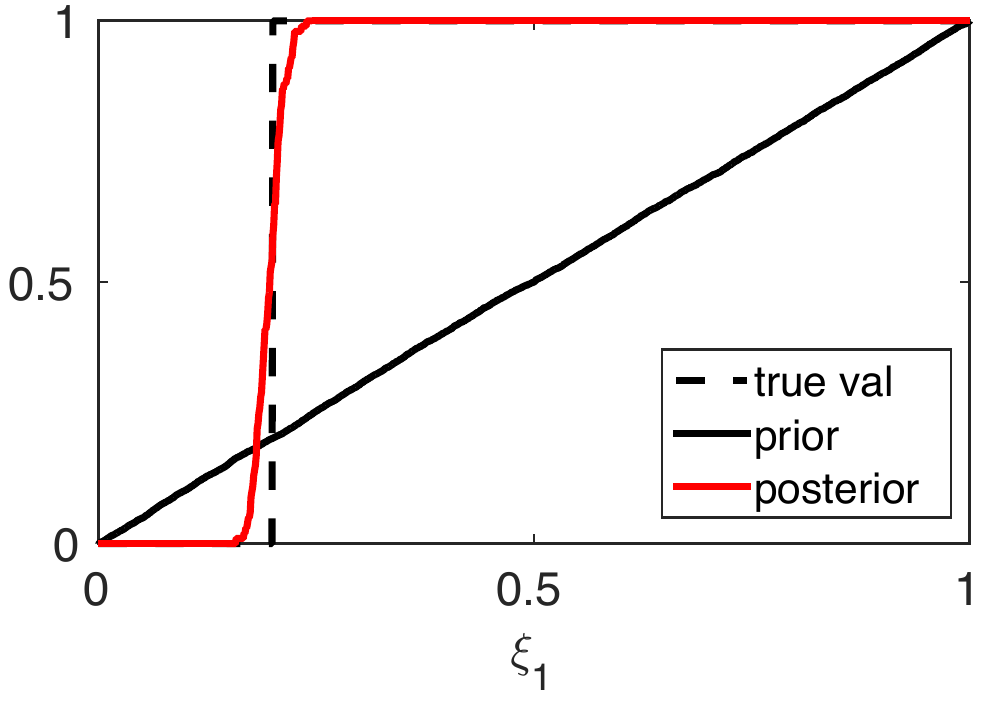}}
\subfloat[CDF plot for $\xi_2$] {\includegraphics[width=0.33\linewidth]{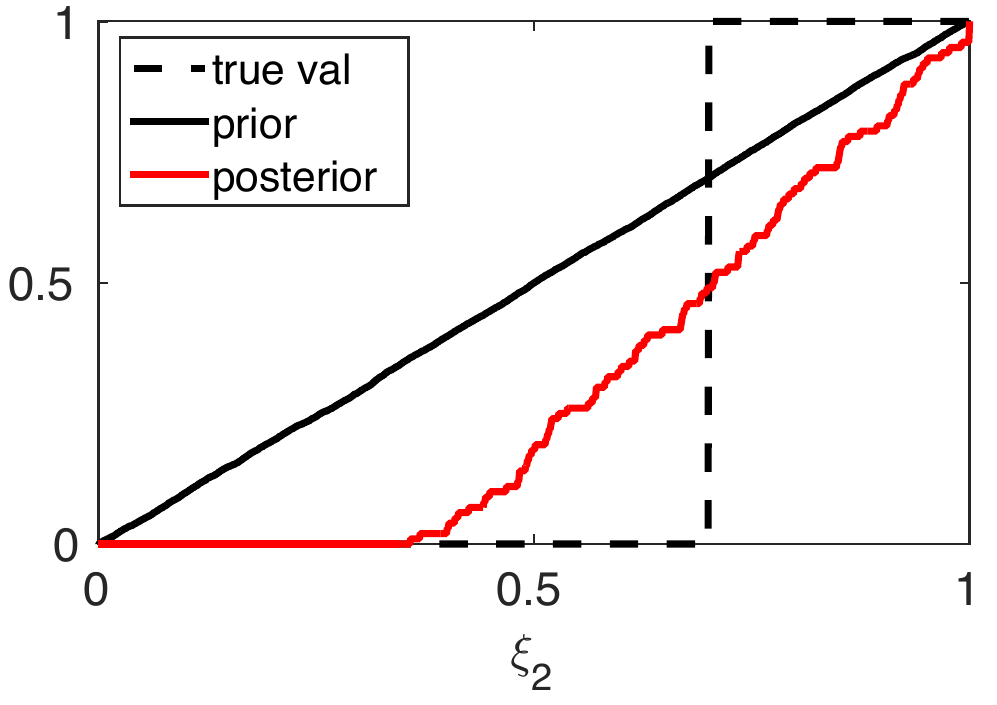}}
\caption{SMC iteration 2, with a loss weight $W_2$ = 1.43.}
\label{fig:gibbs_1dadv_step2}
\end{figure}
\begin{figure}[!ht]
\centering
\subfloat[The true parameters (black), the particles (red) and the atoms for local RB (blue)] {\includegraphics[width=0.27\linewidth]{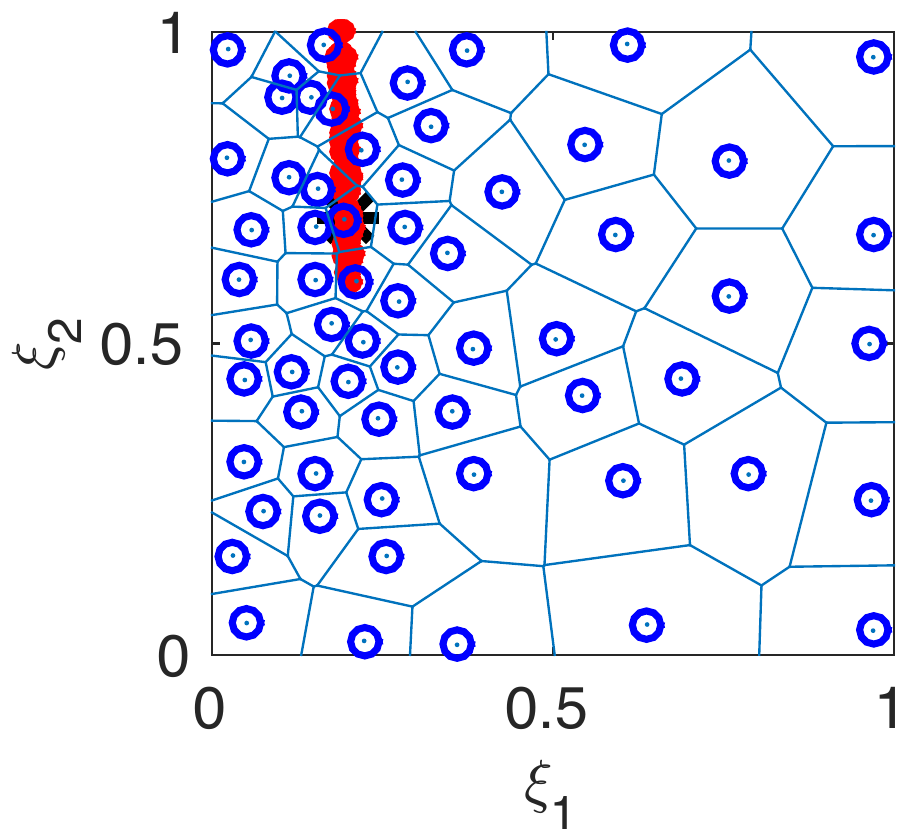}}
\subfloat[CDF plot for $\xi_1$] {\includegraphics[width=0.33\linewidth]{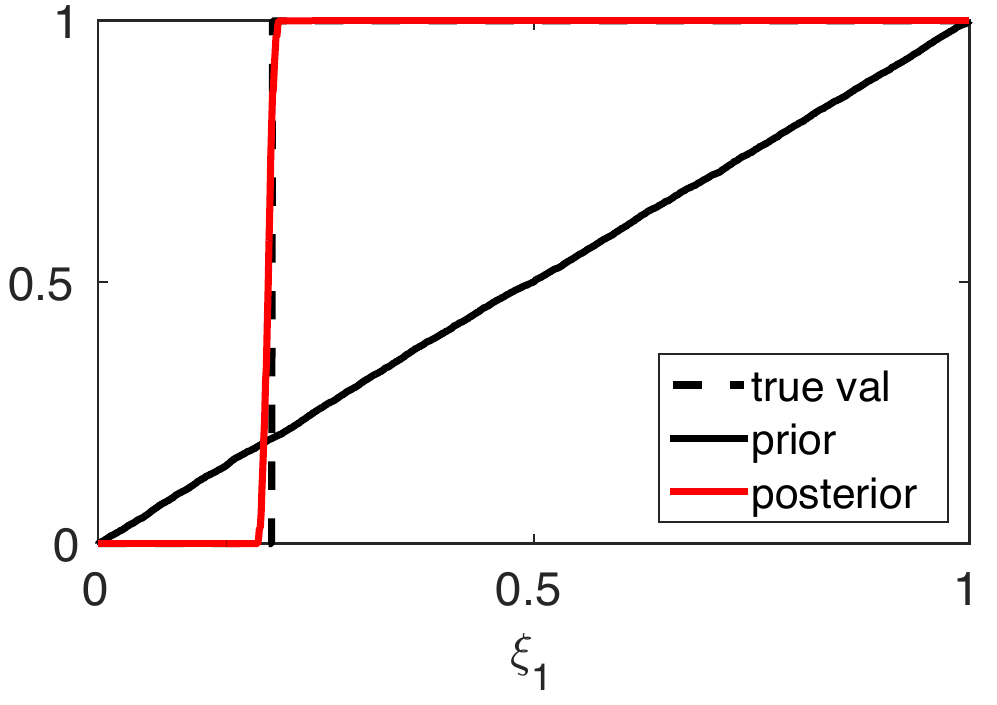}}
\subfloat[CDF plot for $\xi_2$] {\includegraphics[width=0.33\linewidth]{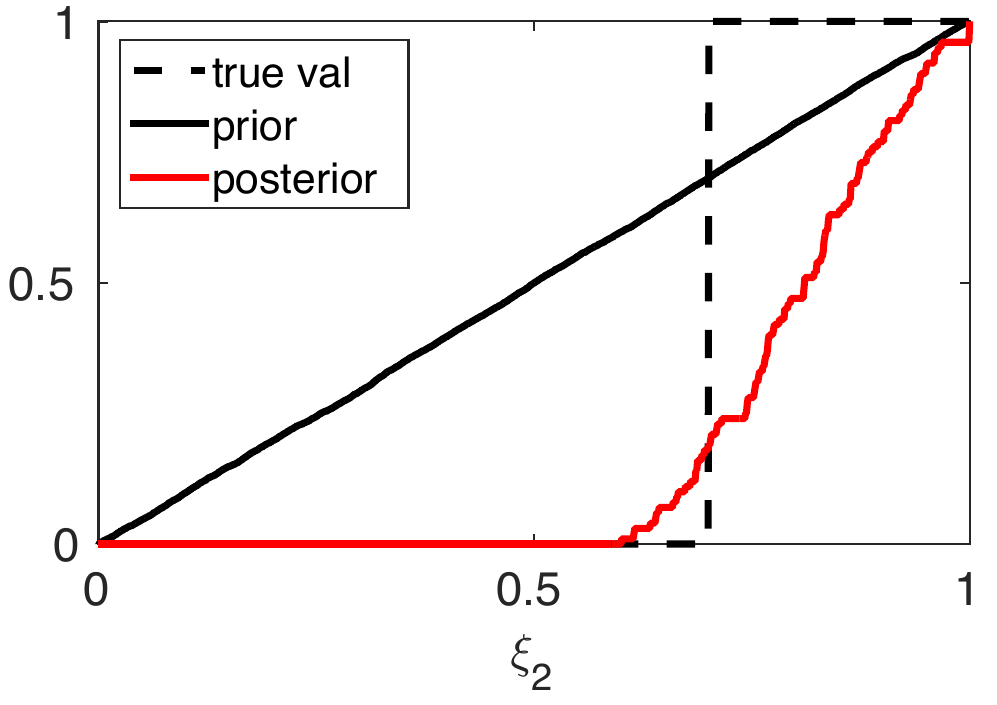}}
\caption{SMC iteration 3, with a loss weight $W_3$ = 16.7.}
\label{fig:gibbs_1dadv_step3}
\end{figure}

\begin{figure}[!ht]
\centering
\includegraphics[width=0.5\linewidth]{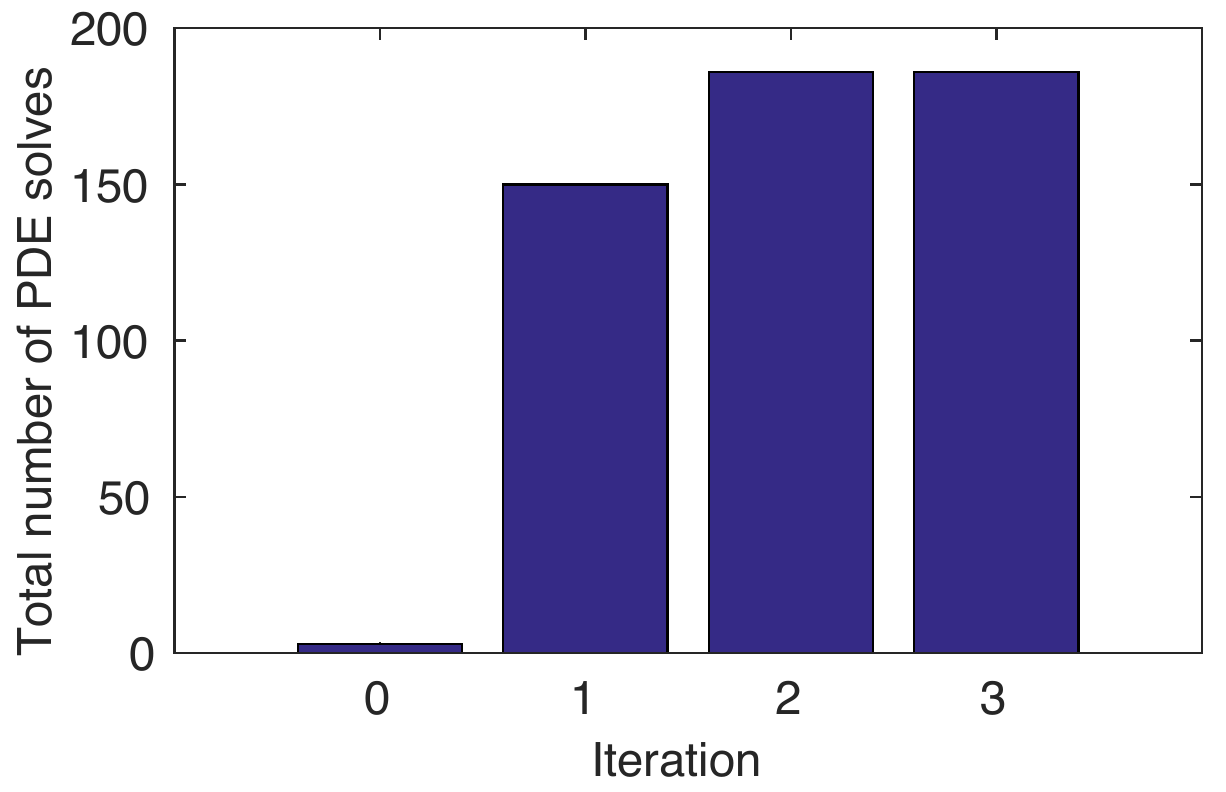}
\caption{The accumulative number of PDE solves at each iteration of the SMC Algorithm \ref{ag:gibbsSMC} for the 1D advection diffusion problem.}
\label{fig:gibbs_1dadv_npde}
\end{figure}

\subsection{2D advection diffusion equation}
In the second example, we consider the simultaneous identification of the diffusivity constant and unknown source for a 2D advection-diffusion problem.
Let $D=(0,1)^2$.  We consider the following problem,
\begin{subequations}
\begin{align}
-\nabla\cdot \left(\kappa(\xi^*)\nabla u(x,\xi^*)\right)+v(x)\cdot\nabla u(x, \xi^*)&=f(x, \xi^*) &&x\in D\\
u(x, \xi^*)&=0 &&x\in\Gamma_d\\
\kappa(\xi^*)\nabla u(x, \xi^*)\cdot n&=0 &&x\in\Gamma_n
\end{align}
\end{subequations}
where $\Gamma_d:=[0,1]\times\{0\}$ and $\Gamma_n:=\partial D\setminus\Gamma_d$.
The unknown parameters $\xi^*$ are included in the diffusivity constant $\kappa(\xi^*)$ and the source term $f(x, \xi^*)$.

In particular, the diffusivity is modeled as
\begin{equation}
\kappa(\xi^*) = 0.02 + 0.98 \xi_1^* 
\end{equation}
The advection field is divergence free and is defined by
\begin{subequations}
\begin{align}
v(x) &= 13 {1\choose 0} + 9 {-x_1\choose x_2}.
\end{align}
\end{subequations}

Finally, the forcing term $f$ is modeled by two Gaussians function with unknown magnitudes, i.e.,
\begin{align}
f(x, \xi^*) =10\exp\left(\frac{-(x_1-0.25)^2-(x_2-0.5)^2}{0.25^2}\right)\xi_2^* + 5\exp\left(\frac{-(x_1-0.75)^2-(x_2-0.75)^2}{0.33^2}\right)\xi_3^*
\end{align}
The goal is to identify $\xi^*$ from noisy measurements of the PDE solution $u(x, \xi^*)$. 
Again, we assume that the concentration field is measured over a uniform grid in the domain. 
Our noisy data is hence given by
$$d = \mathcal{D}u + \epsilon$$
where $\mathcal{D}$ is an operator that maps the solution $u(x,\xi^*)$ to the measurement locations and $\epsilon$ is a noise vector that contains i.i.d entries. Notice that we are assuming that the concentration field has enough regularity as to allow for point-wise evaluations.
We assume the noise is drawn from a Gaussian distribution with standard deviation equal to 20\% of the magnitude of the true data. 
In particular, we have $\epsilon^D = 0.0197$.
For this problem, we use the following $l_1$ loss:
$$l(\xi, d) = \|\mathcal{D}u(x, \xi) - d\|_{l_1}.$$
The weight $W$ was obtained using the approach outlined in Section \ref{sec:gibbsWt}. 
After estimating $W$, we compare the SMC method in Algorithm \ref{ag:gibbsSMC} 
to a Random Walk Metropolis-Hastings method using $\exp(-Wl(\xi))$ as the likelihood. Notice that the Gibbs posterior is invariant with respect to the MC transitions.

The true values of the unknown parameters are $\xi^*_1=0.1,\ \xi^*_2=0.7,\ \xi^*_3=0.5$. 
For the prior, we assume $\xi_1\sim \beta(1, 2),\ \xi_2\sim \beta(3,1),\ \xi_3\sim \beta(3,1)$ and that they are independent. 
The final weight selected was $W=25.8$, representing approximately $1/50$ of $\frac{1}{2{(\epsilon^D)}^2}$. 
We use Algorithm \ref{ag:gibbsSMC} to compute the Gibbs posterior with $m=100$ evolving particles. 
In addition, when training the local RB surrogate model at each SMC step $t$, 
we employ an adaptive accuracy $e_{\text{thre}}$ that is equal to $2\%$ of the standard deviation of $\{\overline{l}(\xi^t_i)\}_{i=1}^m$.
We run the reference MCMC method to obtain $5,000$ samples from the posterior with $1,000$ burn-in steps. 

In Figure \ref{fig:gibb_2dadvSetup}, we show the true diffusivity, advection and source fields. 
In Figure \ref{fig:gibbs_2dadvSol} we show the noise-free PDE solution, the corrupted solution, and the measurement points.
We show the comparison of our SMC result with the MCMC reference in Figure \ref{fig:comp_bayesian_2dadv}. 
Again, the SMC method performs similarly to the reference in approximating the posterior distribution. 
The random variable $\xi_3$ has the largest posterior uncertainty due to the fact that the solution, hence the data, has the least sensitivity with respect to this parameter. Notice that the source associated with $\xi_3$ is located  near the top right corner of the domain and, hence, has limited impact on the concentration at most of the measurement points.

Only 3 iterations of our SMC algorithm were needed to reach the predefined tolerance in this example. Figure \ref{fig:gibbs_2dadv_particles} shows the evolution of particles as well as the local RB atoms throughout the SMC iterations.
Clearly, the particles and local RB atoms simultaneously evolve towards the support of the posterior, leading to an improved approximation of the posterior distribution. 
In addition, as the particles become more clustered, the variation of $\overline{l}(\xi)$ over the particles becomes lower, leading to smaller $e_{\text{thre}}$ (higher accuracy requirement on the surrogate).

Finally, we show the accumulative number of PDE solves at each iteration in Figure \ref{fig:gibbs_2dadv_npde}. 
Note that we did not include the PDE solves in the preprocessing step to select $W$,
and we reinitialized the local RB surrogate model before computing the posterior under the final weight.
We do this to demonstrate how the computational efforts corresponding to the construction of the local RB surrogate are distributed in the SMC iterations. 
The number of PDE solves (local RB atoms) depends critically on $e_{\text{thre}}$ in each iteration. 
In the first iteration, because $e_{\text{thre}}$ is relatively large, only about $100$ PDE solves are incurred. 
In the latter iterations, $e_{\text{thre}}$ becomes lower and the accuracy requirement imposed on $\overline{l}(\xi)$ becomes tighter as well.
On the other hand, the particles become more compact in the latter iterations. 
Though the decreased $e_{\text{thre}}$ demands more PDE solves to refine the surrogate model,
the increased clustering of the particles makes it easier for the local RB surrogate to reach the accuracy requirement. 
These two competing factors jointly determine the number of additional refinements on the surrogates. 
Overall, only less than $400$ PDE solves were incurred in the SMC method to obtain the approximate posterior, 
representing a significant computational saving over the MCMC reference. 

\begin{figure}[!ht]
\centering
\subfloat[Constant diffusivity field] {\includegraphics[width=0.30\linewidth]{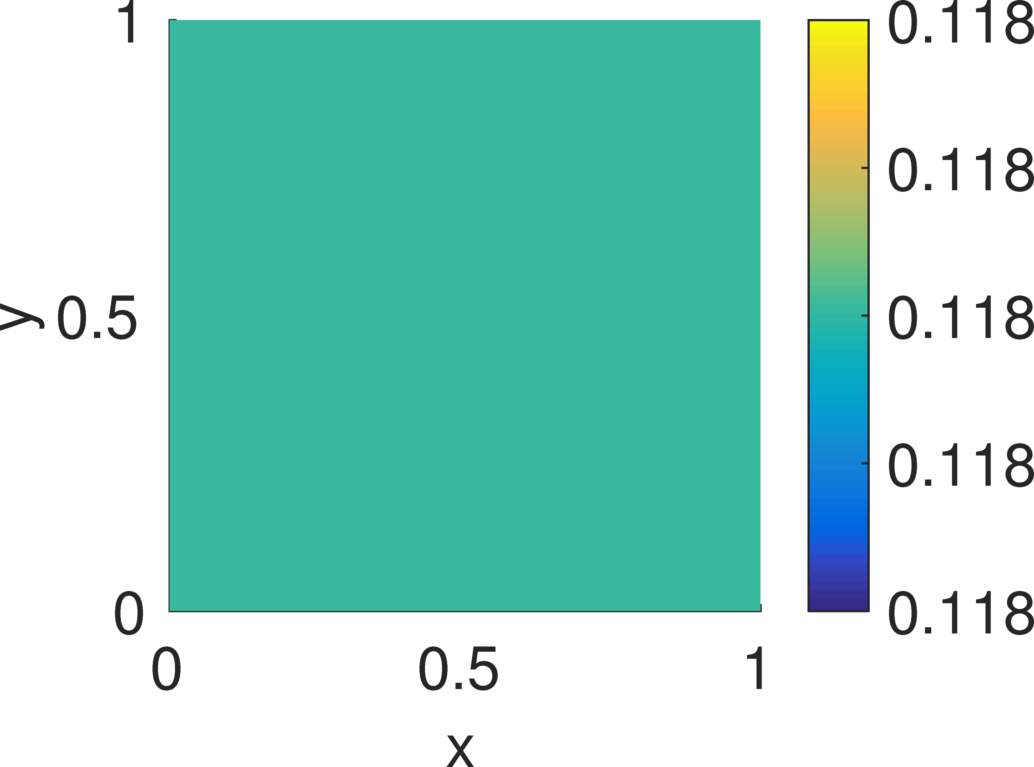}}
\hspace{0.2cm}
\subfloat[Advection field] {\includegraphics[width=0.24\linewidth]{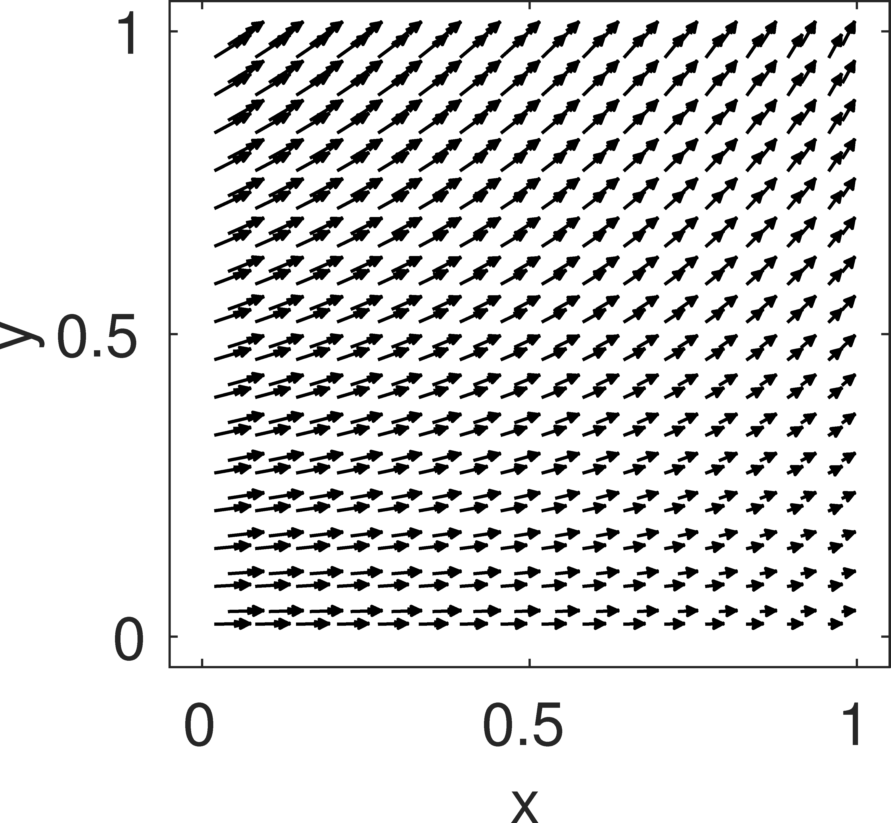}}
\hspace{0.2cm}
\subfloat[Source] {\includegraphics[width=0.27\linewidth]{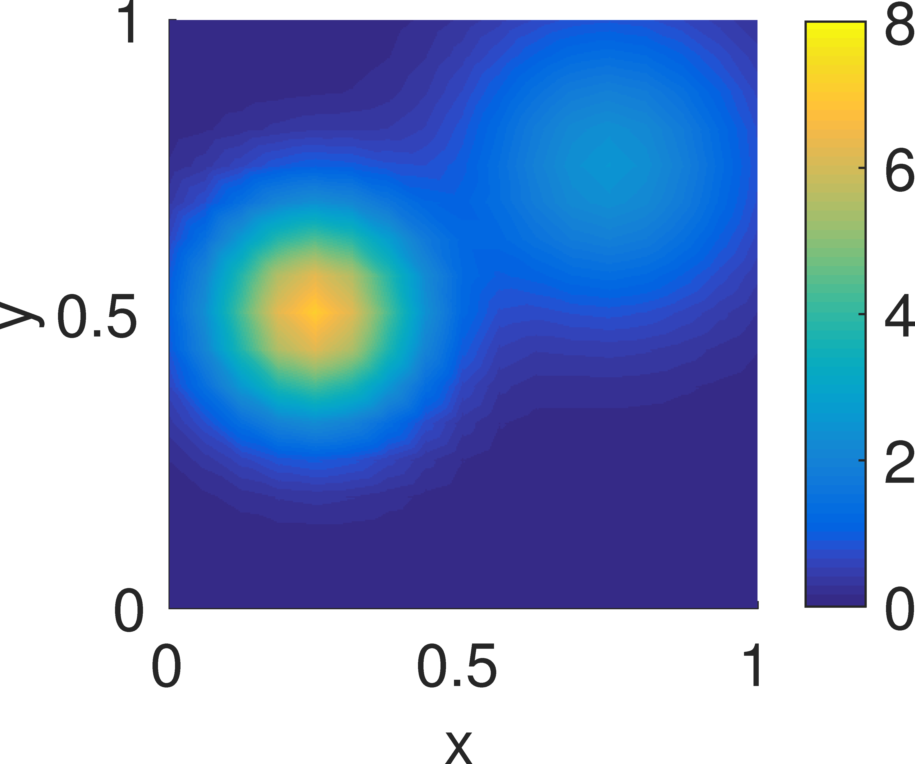}}
\caption{The diffusivity, advection and source fields of the 2D advection-diffusion equation.}
\label{fig:gibb_2dadvSetup}
\end{figure}

\begin{figure}[!ht]
\centering
\subfloat[] {\includegraphics[width=0.4\linewidth]{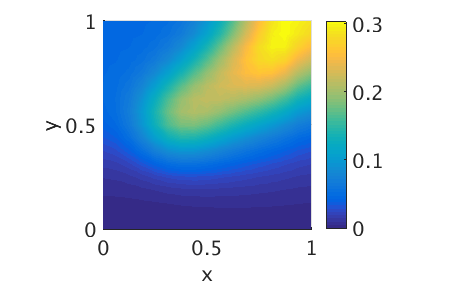}}
\hspace{0.5cm}
\subfloat[] {\includegraphics[width=0.365\linewidth]{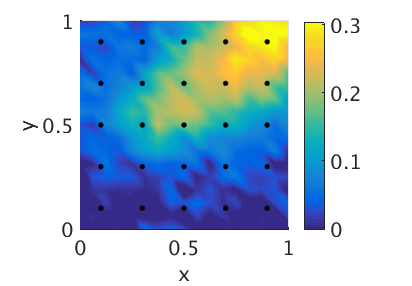}}
\caption{The noise-free solution and the noisy measurements.}
\label{fig:gibbs_2dadvSol}
\end{figure}

\begin{figure}[!ht]
\centering
\subfloat[CDF plot for $\xi_1$] {\includegraphics[width=0.3\linewidth]{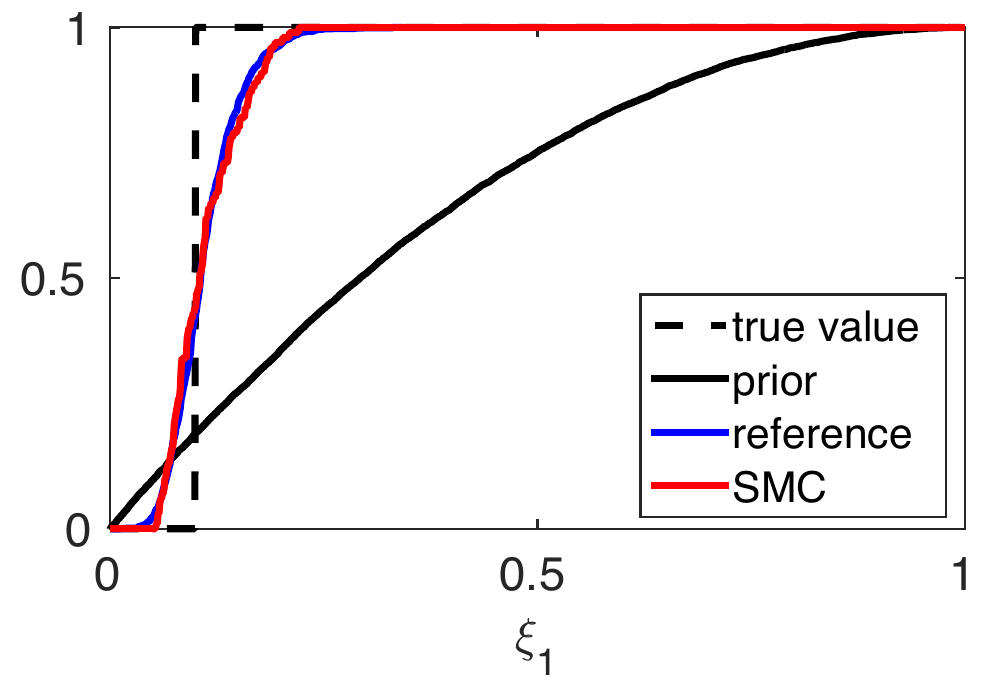}}
\subfloat[CDF plot for $\xi_2$] {\includegraphics[width=0.3\linewidth]{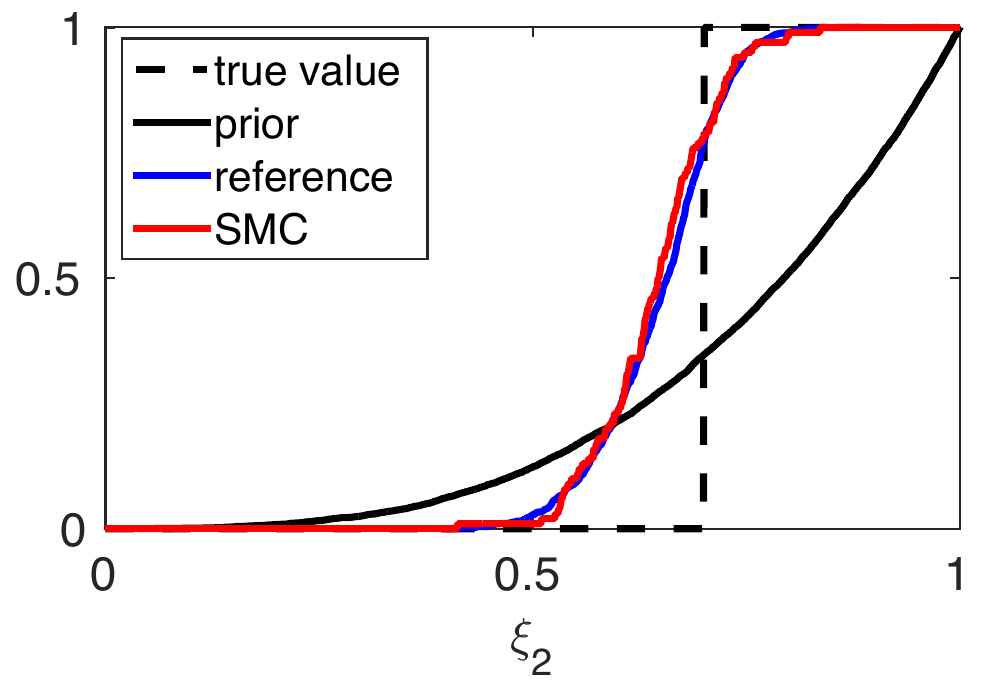}}
\subfloat[CDF plot for $\xi_3$] {\includegraphics[width=0.3\linewidth]{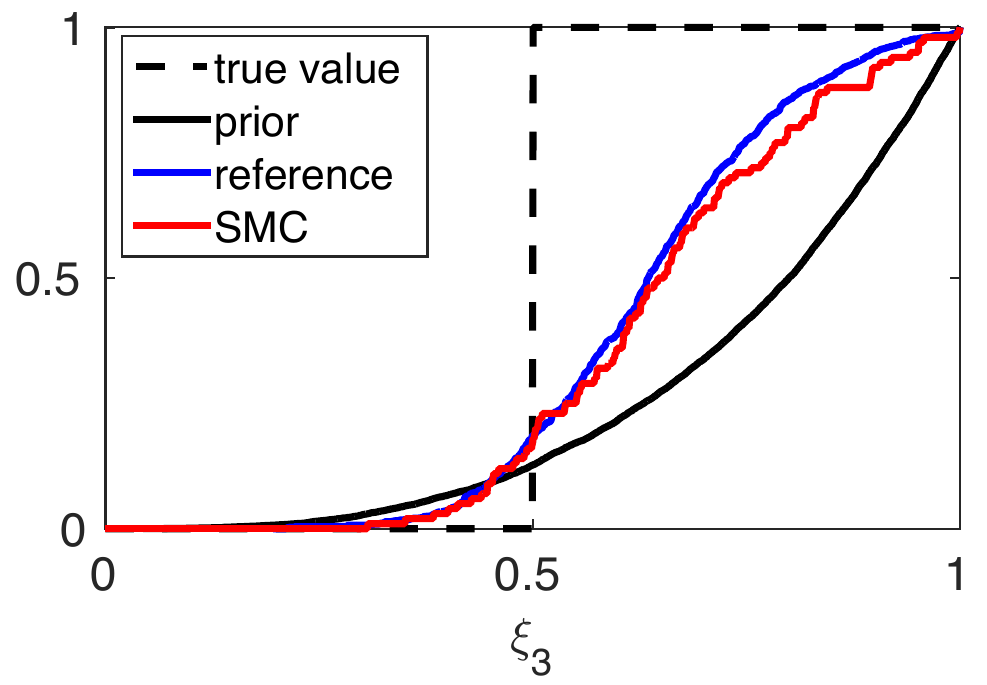}}
\caption[Comparison of the proposed SMC method and the standard MCMC method.]
{Comparison of the posterior distribution of the parameters computed by Algorithm \ref{ag:gibbsSMC} and the standard MCMC method. }
\label{fig:comp_bayesian_2dadv}
\end{figure}

\begin{figure}[!ht]
\centering
\subfloat[$W_0=0$, $e_{\text{thre}}=1.0$]             {\includegraphics[width=0.4\linewidth]{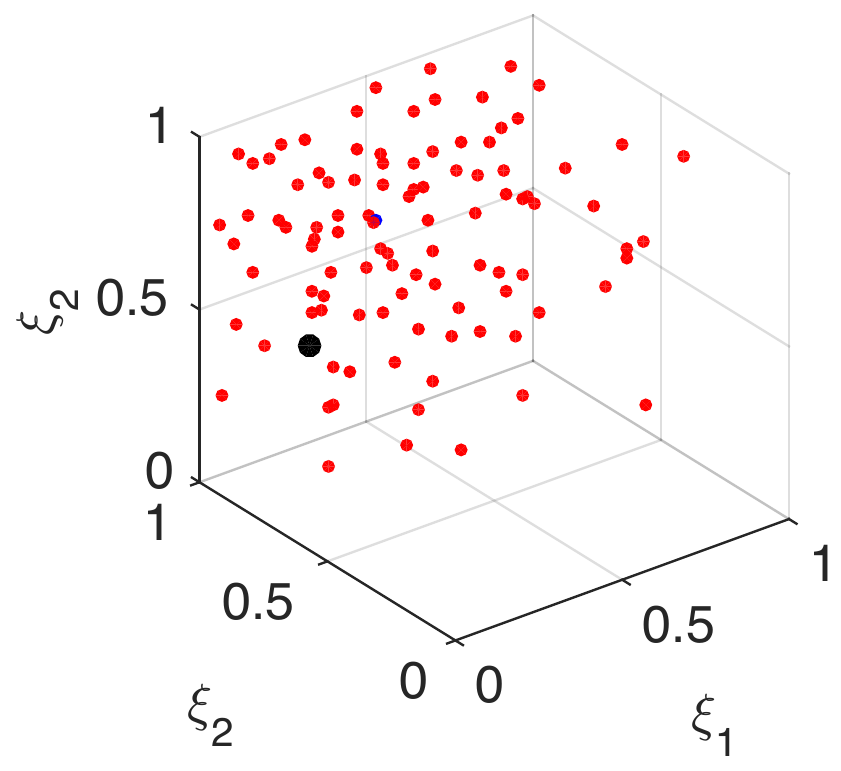}}
\hspace{1cm}
\subfloat[$W_1=4.38$,  $e_{\text{thre}}=0.0163$] {\includegraphics[width=0.4\linewidth]{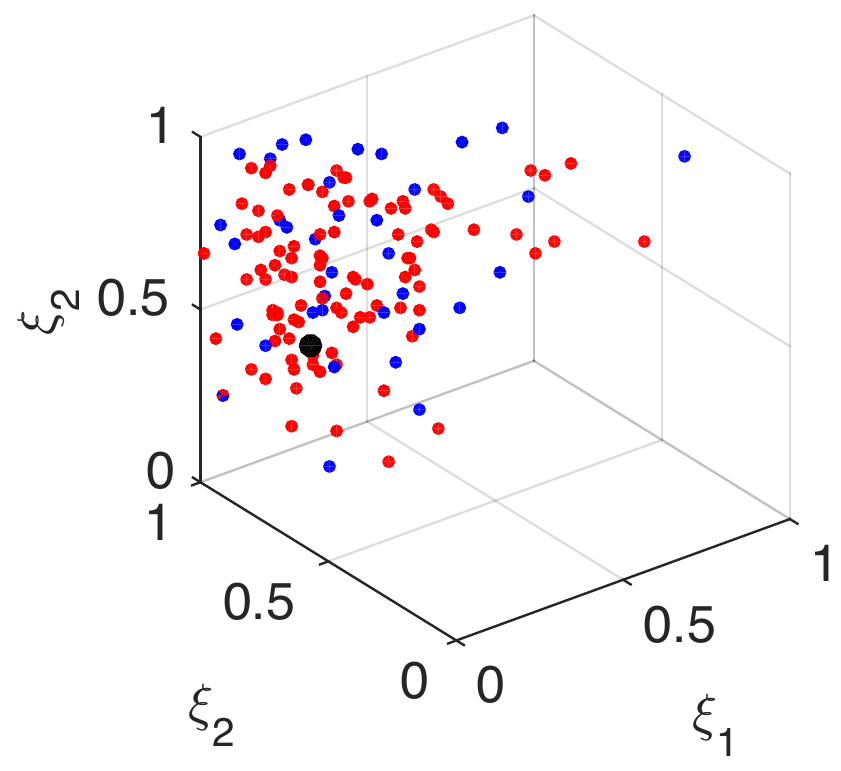}}
\\
\subfloat[$W_2=19.4$,  $e_{\text{thre}}=0.00395$] {\includegraphics[width=0.4\linewidth]{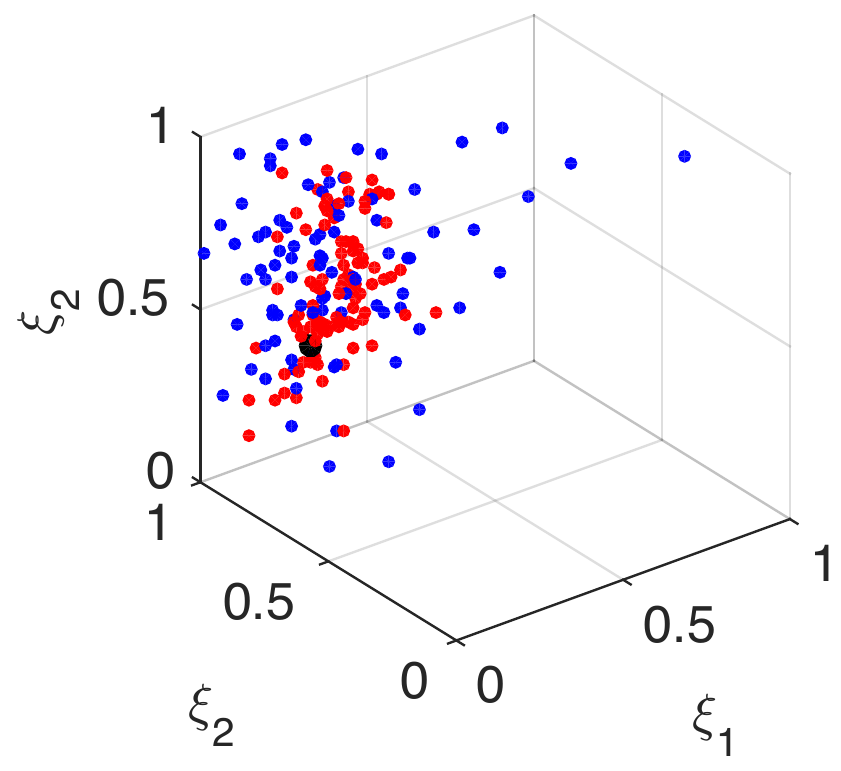}}
\hspace{1cm}
\subfloat[$W_3=25.9$,  $e_{\text{thre}}=0.00188$] {\includegraphics[width=0.4\linewidth]{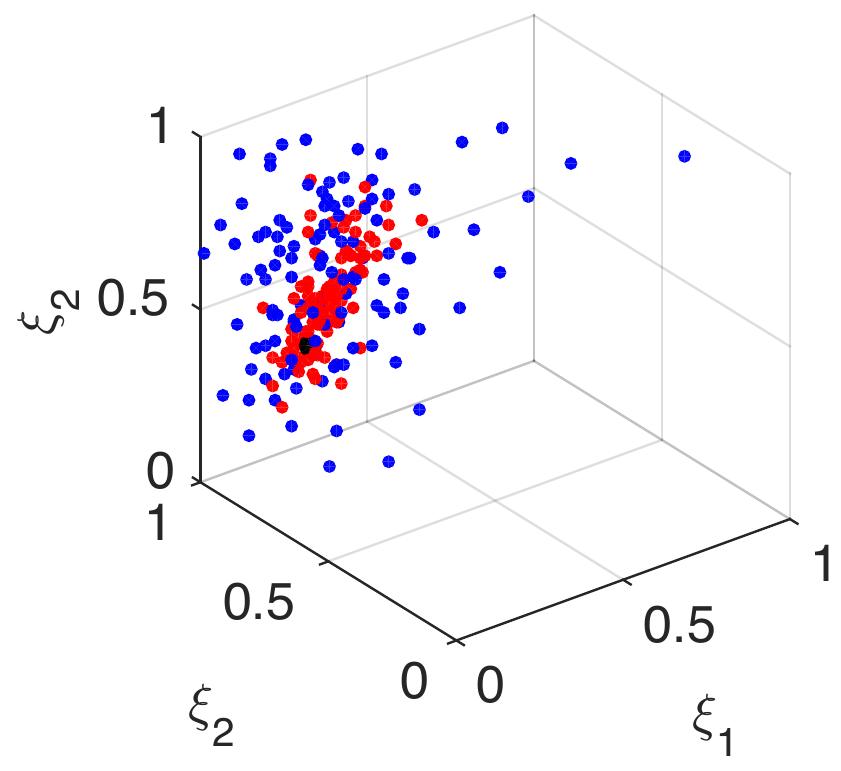}}
\caption[The evolution of the particles and the local RB atoms at each iteration of SMC algorithm.]
{The evolution of the particles and the local RB atoms at each iteration of SMC algorithm. 
The true parameter is in black, the particles are in red and the local RB atoms are in blue. }
\label{fig:gibbs_2dadv_particles}
\end{figure}

\begin{figure}[!ht]
\centering
\includegraphics[width=0.5\linewidth]{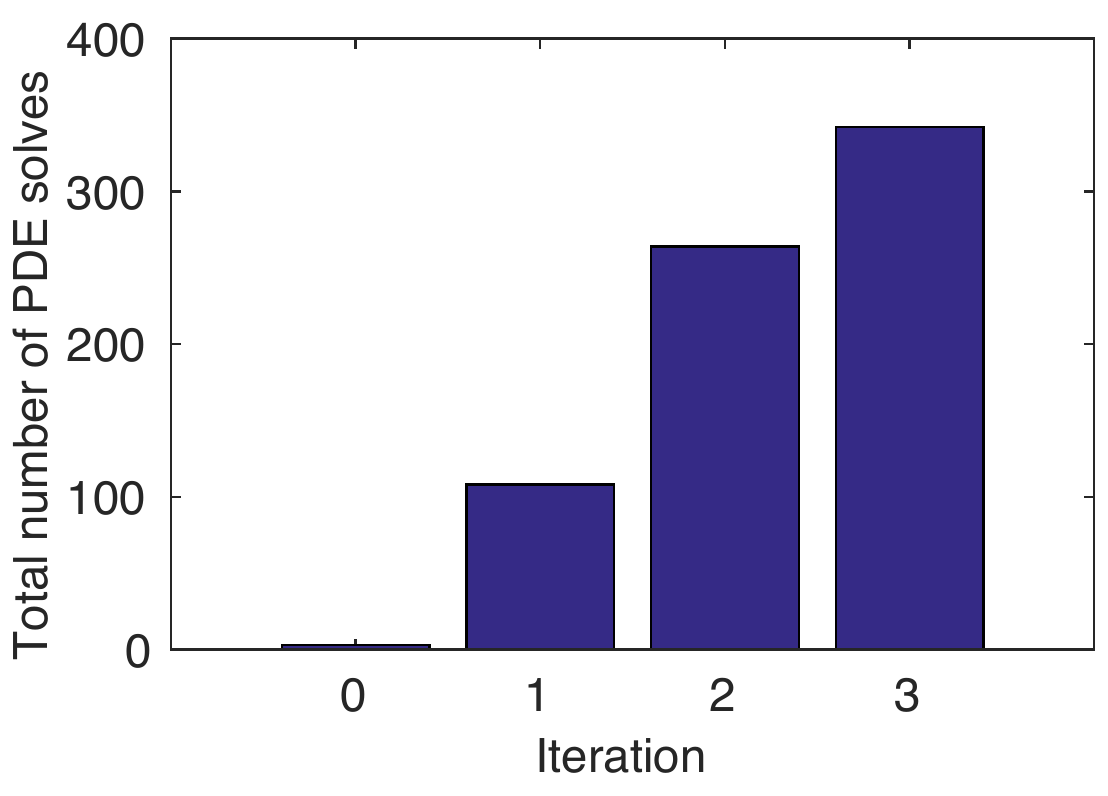}
\caption{The accumulative number of PDE solves at each iteration of the SMC Algorithm \ref{ag:gibbsSMC} for the 2D advection diffusion problem.}
\label{fig:gibbs_2dadv_npde}
\end{figure}

\subsection{2D elasticity equation}
In the last example, we consider two simple elastography problems where we need to infer the distribution of mechanical properties given noisy displacement measurements under known loads. 
These problems are usually characterized by higher dimensionality than those in the previous examples, and hence, are more computationally expensive to solve.

Letting $D=(0,1)^2$,  we consider the following linear elasticity problem.
\begin{subequations}
\begin{align}
-\nabla\cdot \sigma(x, \xi^*) + f  &= 0, &&x\in D\\
\epsilon(x, \xi^*) &= \frac{1}{2}(\nabla u(x, \xi^*) + \nabla u(x, \xi^*) ^T), &&x\in D\\
\sigma(x, \xi^*) &= C(\xi^*):\epsilon(x, \xi^*), &&x\in D\\
u(x, \xi^*)&=0 &&x\in\Gamma_d\\
\sigma(x, \xi^*) \cdot n&=\tau &&x\in\Gamma_n
\end{align}
\end{subequations}
where $\Gamma_d:=[0,1]\times\{0\}$ and $\Gamma_n:=\partial D\setminus\Gamma_d$.
The unknown parameters $\xi^*$ are included in the modulus of the material, which is part of the elasticity tensor $C(\xi^*)$. 
We consider isotropic plane stress problems where we know the Poisson's ratio $\nu=0.3$ 
and try to identify the unknown Young's modulus $E(\xi^*)$ from noisy measurements of $u(x, \xi^*)$.
The setup of the problem as well as the two modulus models we used in this example are shown in Figure \ref{fig:gibb_elasticSetup}.
The two problems have parameter dimensions of 5 and 9, respectively.

\begin{figure}[!ht]
\centering
\subfloat[Boundary conditions.] {\includegraphics[width=0.25\linewidth]{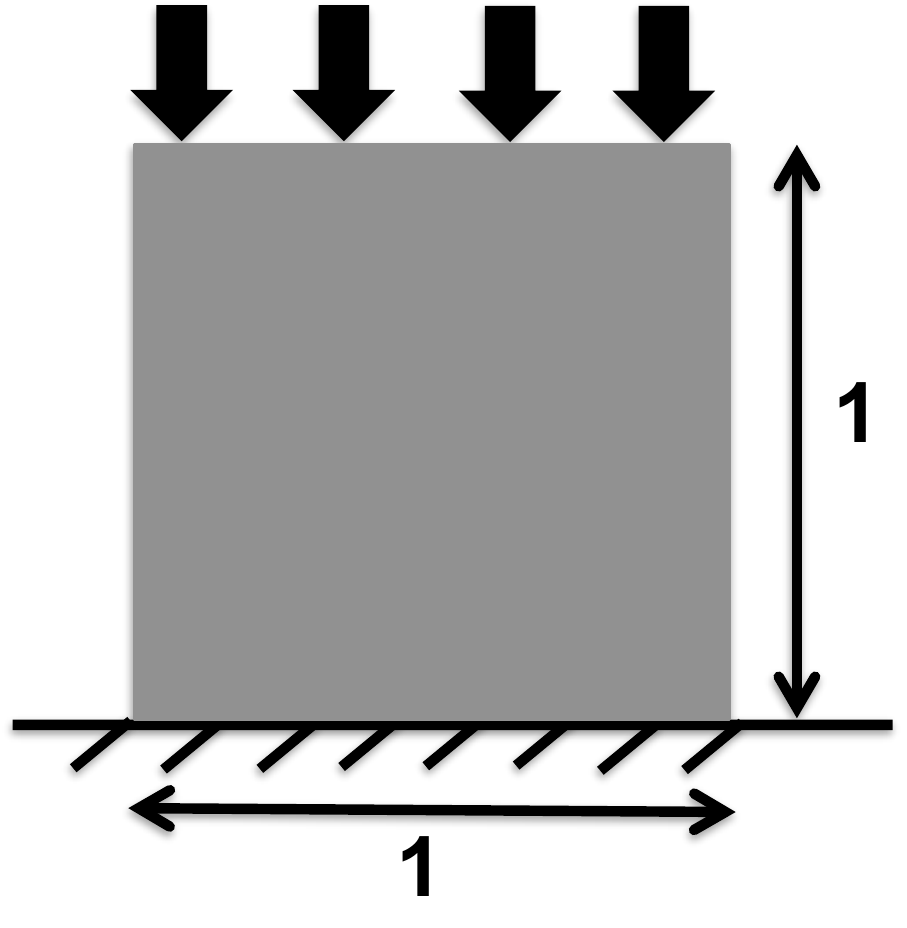}}
\hspace{0.2cm}
\subfloat[Material with layered Young's modulus.] {\includegraphics[width=0.28\linewidth]{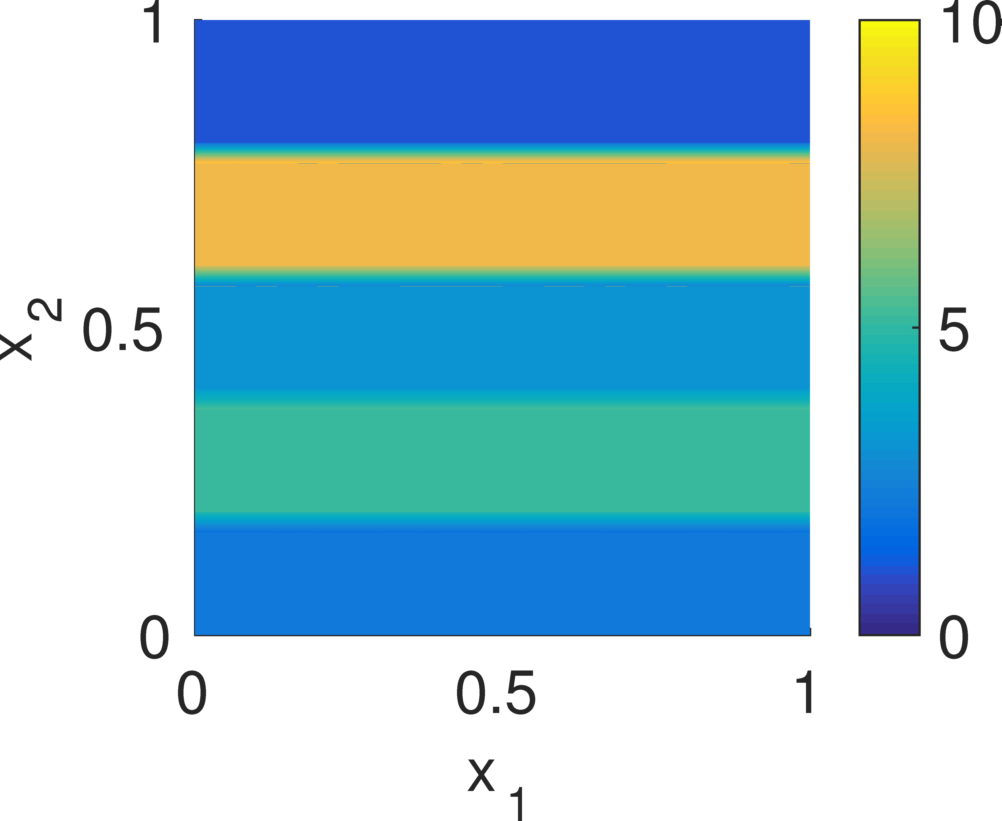}}
\hspace{0.2cm}
\subfloat[Material with hard inclusion. ] {\includegraphics[width=0.27\linewidth]{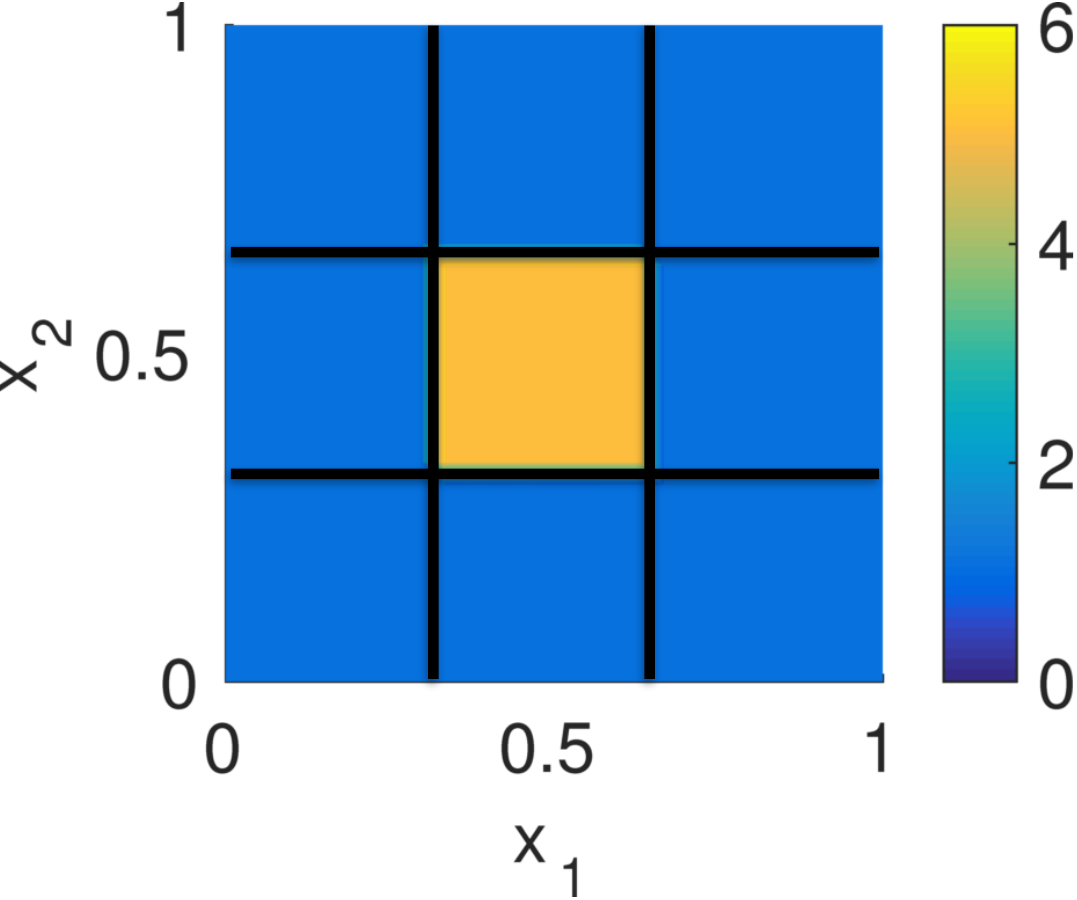}}
\caption{The boundary condition, and true material properties for the simple elastography problems.}
\label{fig:gibb_elasticSetup}
\end{figure}

The displacement fields and the noisy measurements of the two models are shown in Figure \ref{fig:gibbs_layered_sol} and \ref{fig:gibbs_inclusion_sol}, respectively.
Note that we only use the noisy displacement data in the vertical, i.e., $x_2$, direction. We perturb the solution with 5\%, 10\% and 20\% Gaussian noise and investigate the 
posterior mean and deviation computed by the SCM method. 
For the prior distribution, we assume the parameters are independent and follow a $\beta(1,3)$ distribution scaled to the range of $[0.1, 10]$. 
We use a simple $l_2$ loss function and weights $W = \frac{1}{2{(\epsilon^D)}^2}$, which corresponds to the Gaussian noise model exactly.
In addition, when training the local RB surrogate model at each SMC step $t$, 
we employ an adaptive accuracy $e_{\text{thre}}$ that is equal to $5\%$ of the standard deviation of $\{\overline{l}(\xi^t_i)\}_{i=1}^m$.

We plot the inversion for the two models under different level of noise in Figure \ref{fig:gibbs_layered_inversion} and \ref{fig:gibbs_inclusion_inversion}, respectively. 
The posterior mean gives reasonable approximations to the true modulus, and as the level of noise in the data increases, we have higher uncertainty about our inverse solution,  as expected.
Finally, we present the number of local RB atoms used in each of the models with each level of noise in Figure \ref{fig:gibbs_elasticity_natom}.
When the noise level is low, i.e., the weight for the loss is large, 
the posterior becomes increasingly concentrated in a small region within the support of the prior, 
and more SMC steps are needed to approximate the Gibbs posterior, leading to a larger number of refinements (atoms) for the local RB.
Also noticed from the comparison is that when the dimension of the parameter space becomes high, as for the inclusion problem, 
higher computational cost is required to approximate the Gibbs posterior.

\begin{figure}[!ht]
\centering
\subfloat[Horizontal displacement] {\includegraphics[width=0.3\linewidth]{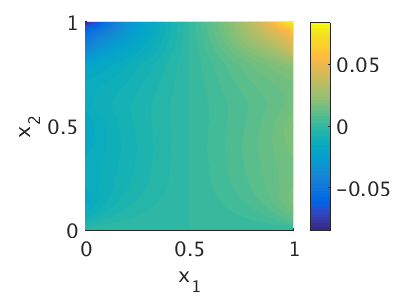}}
\hspace{0.1cm}
\subfloat[Vertical displacement] {\includegraphics[width=0.3\linewidth]{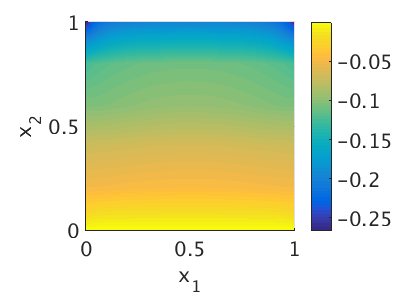}}
\hspace{0.1cm}
\subfloat[Measurement of the vertical displacement with 10\% Gaussian noise] {\includegraphics[width=0.3\linewidth]{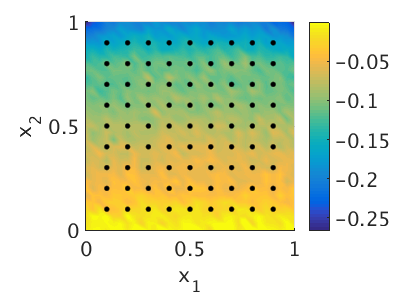}}
\caption{The displacement fields and the noisy measurements for the layered material. }
\label{fig:gibbs_layered_sol}
\end{figure}

\begin{figure}[!ht]
\centering
\subfloat[Horizontal displacement] {\includegraphics[width=0.31\linewidth]{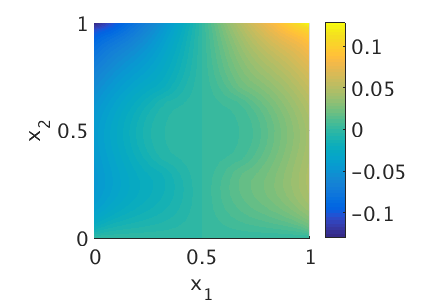}}
\hspace{0.1cm}
\subfloat[Vertical displacement] {\includegraphics[width=0.31\linewidth]{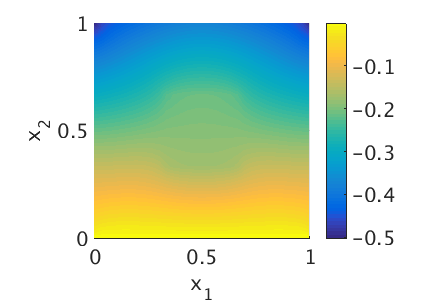}}
\hspace{0.1cm}
\subfloat[Measurement of the vertical displacement with 10\% Gaussian noise] {\includegraphics[width=0.31\linewidth]{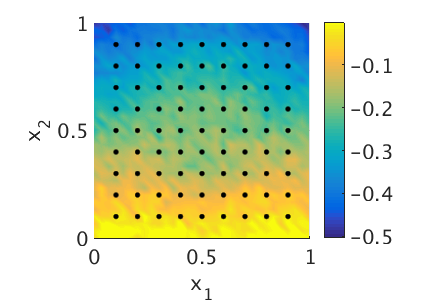}}
\caption{The displacement fields and the noisy measurements for the material with a hard inclusion. }
\label{fig:gibbs_inclusion_sol}
\end{figure}

\begin{figure}[!ht]
\centering
\subfloat[5\% noise] {\includegraphics[width=0.3\linewidth]{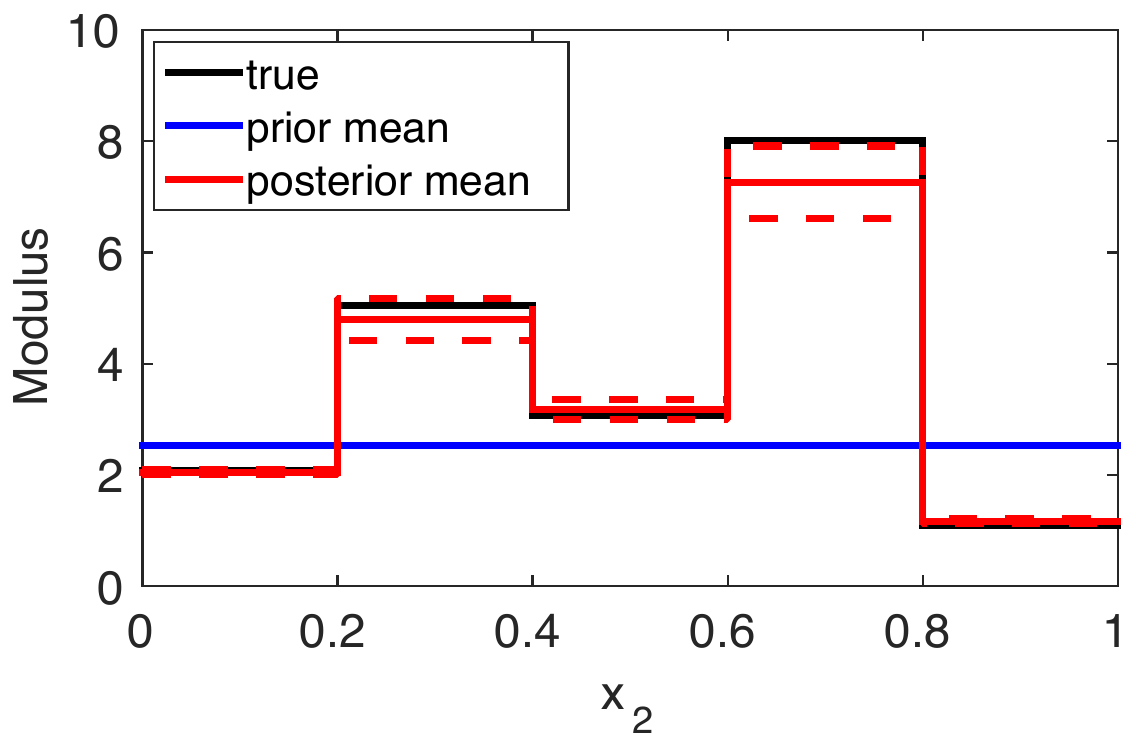}}
\hspace{0.1cm}
\subfloat[10\% noise] {\includegraphics[width=0.3\linewidth]{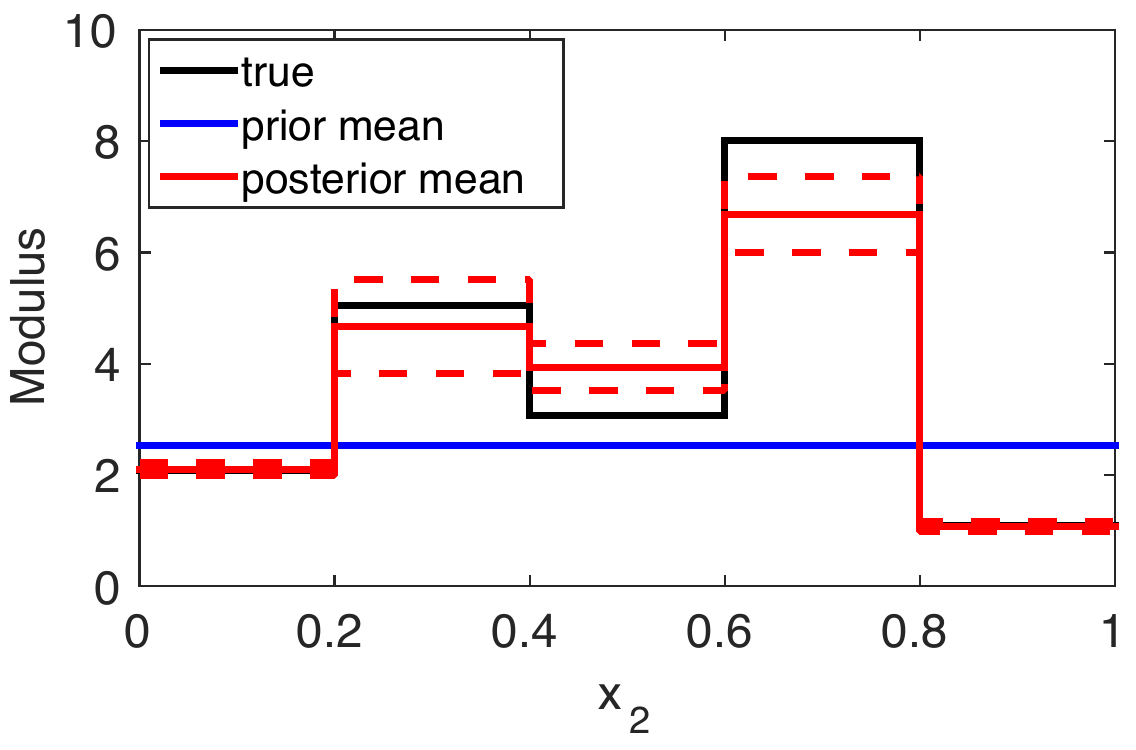}}
\hspace{0.1cm}
\subfloat[20\% noise] {\includegraphics[width=0.3\linewidth]{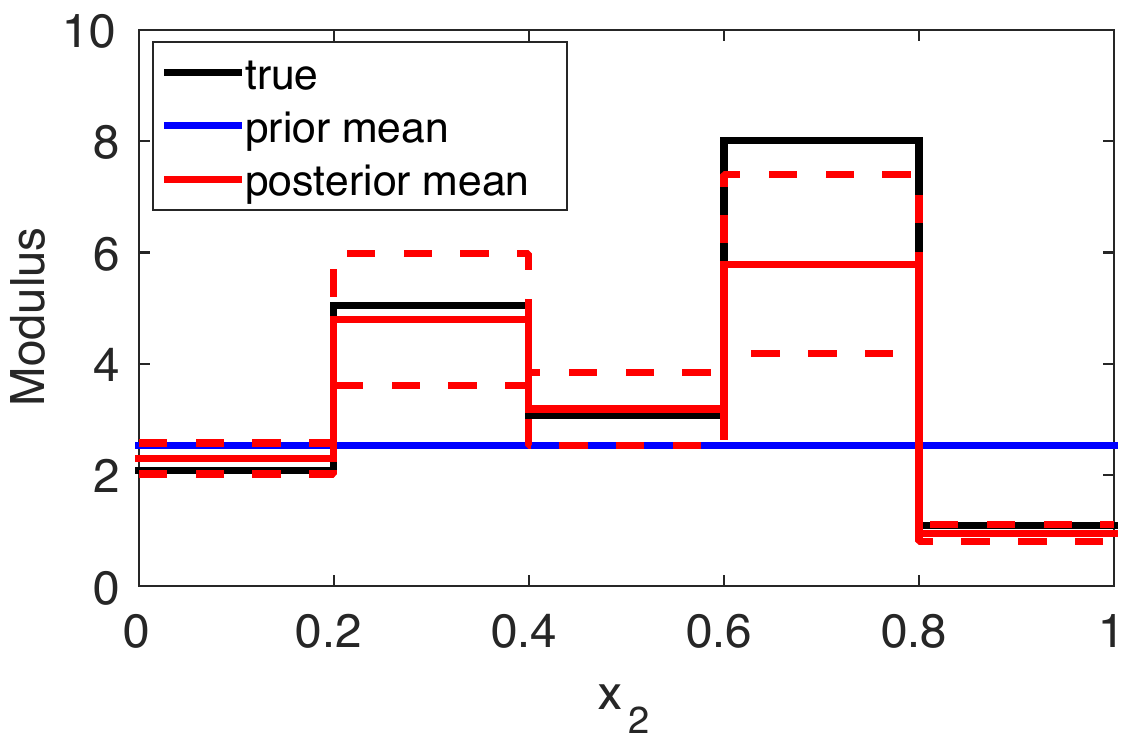}}
\caption{The mean and standard deviation of Gibbs posterior computed using data with different levels of noise for the layered material.}
\label{fig:gibbs_layered_inversion}
\end{figure}

\begin{figure}[!ht]
\centering
\subfloat[5\% noise] {\includegraphics[width=0.3\linewidth]{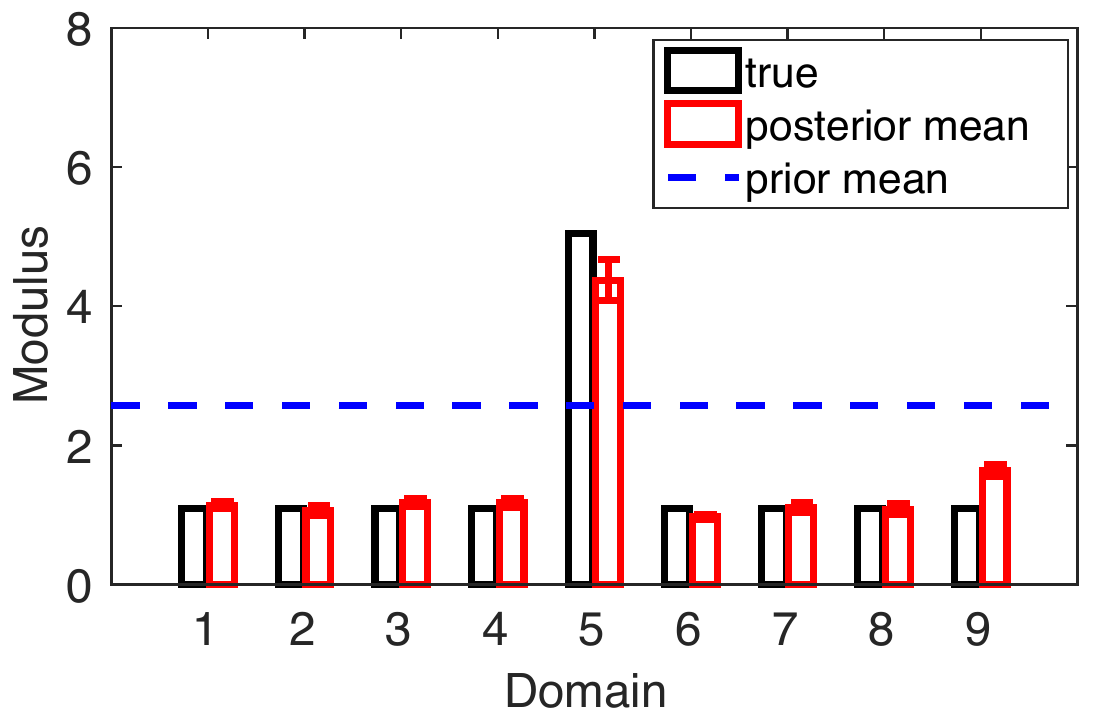}}
\hspace{0.1cm}
\subfloat[10\% noise] {\includegraphics[width=0.3\linewidth]{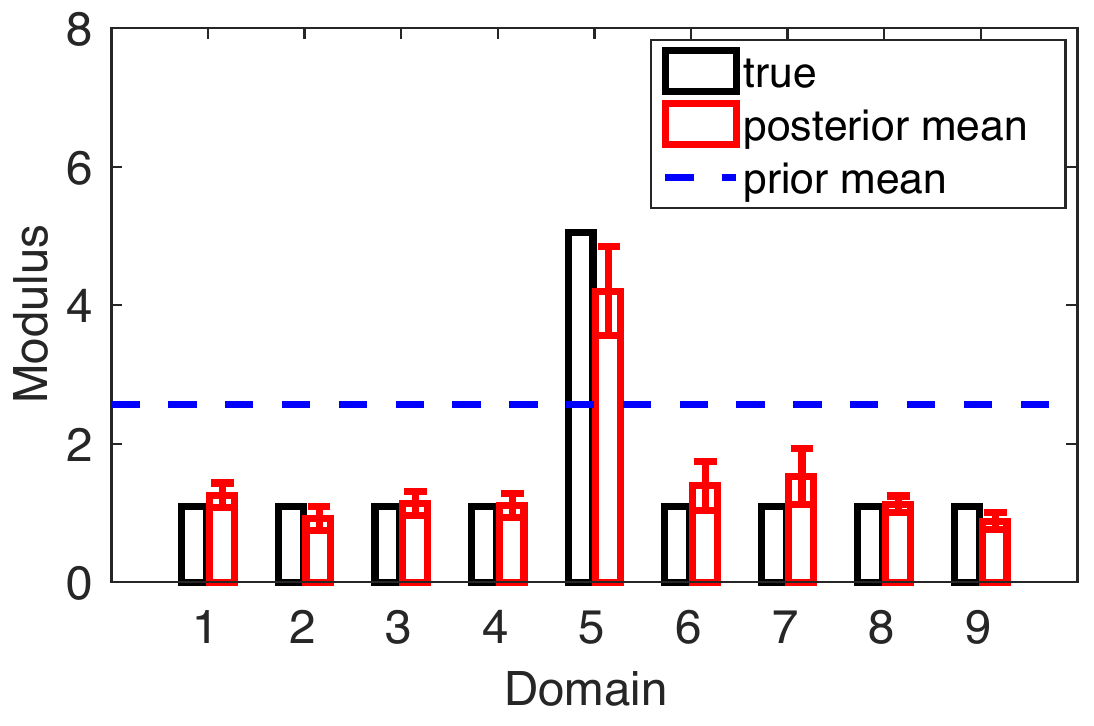}}
\hspace{0.1cm}
\subfloat[20\% noise] {\includegraphics[width=0.3\linewidth]{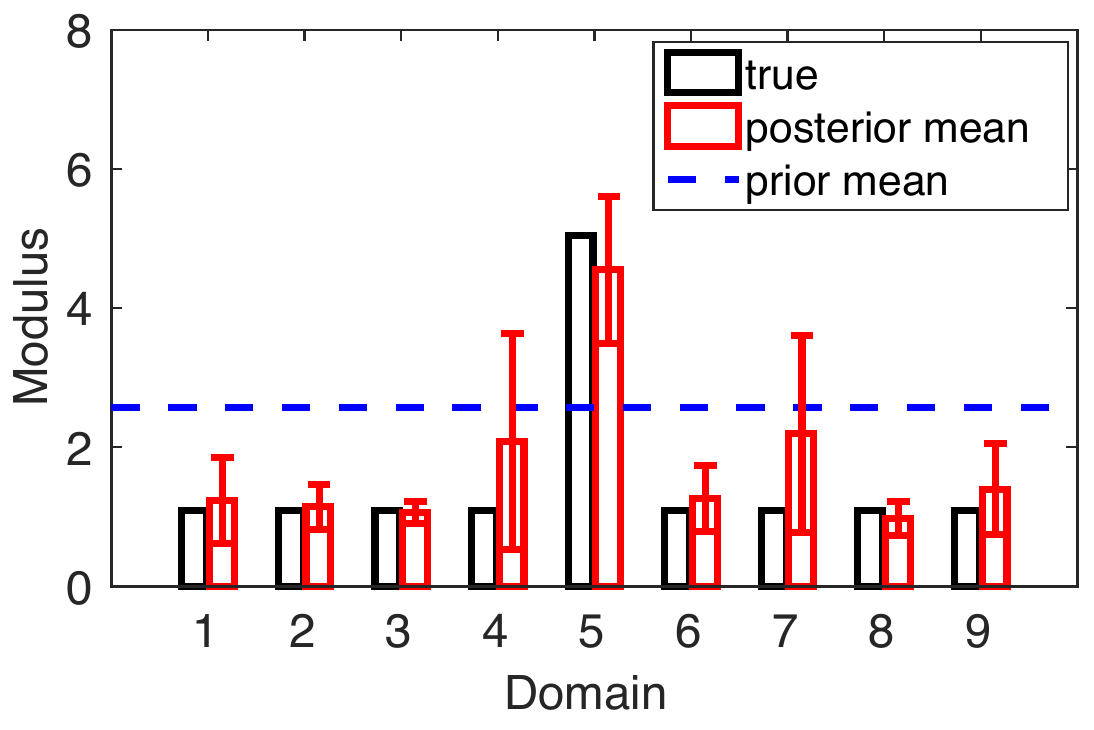}}
\caption{The mean and standard deviation of Gibbs posterior computed using data with different levels of noise for the material with a hard inclusion.}
\label{fig:gibbs_inclusion_inversion}
\end{figure}

\begin{figure}[!ht]
\centering
\subfloat[Layered material inversion] {\includegraphics[width=0.4\linewidth]{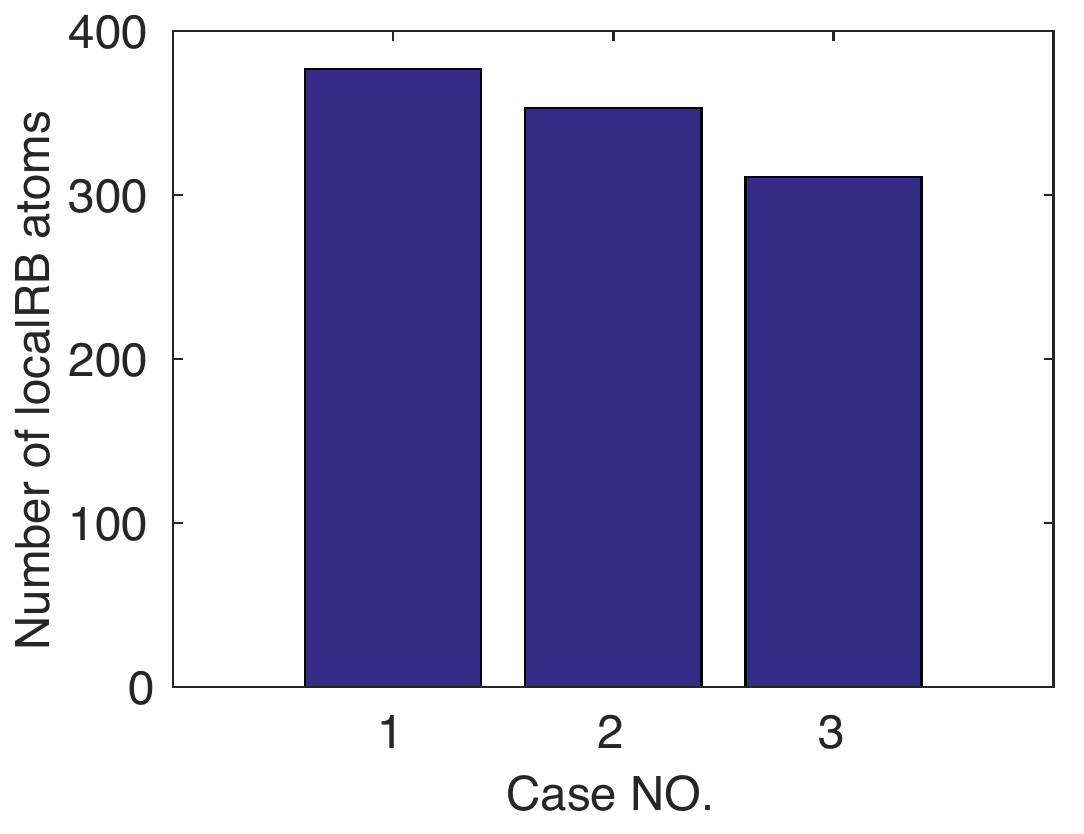}}
\hspace{0.5cm}
\subfloat[Material with a hard inclusion] {\includegraphics[width=0.39\linewidth]{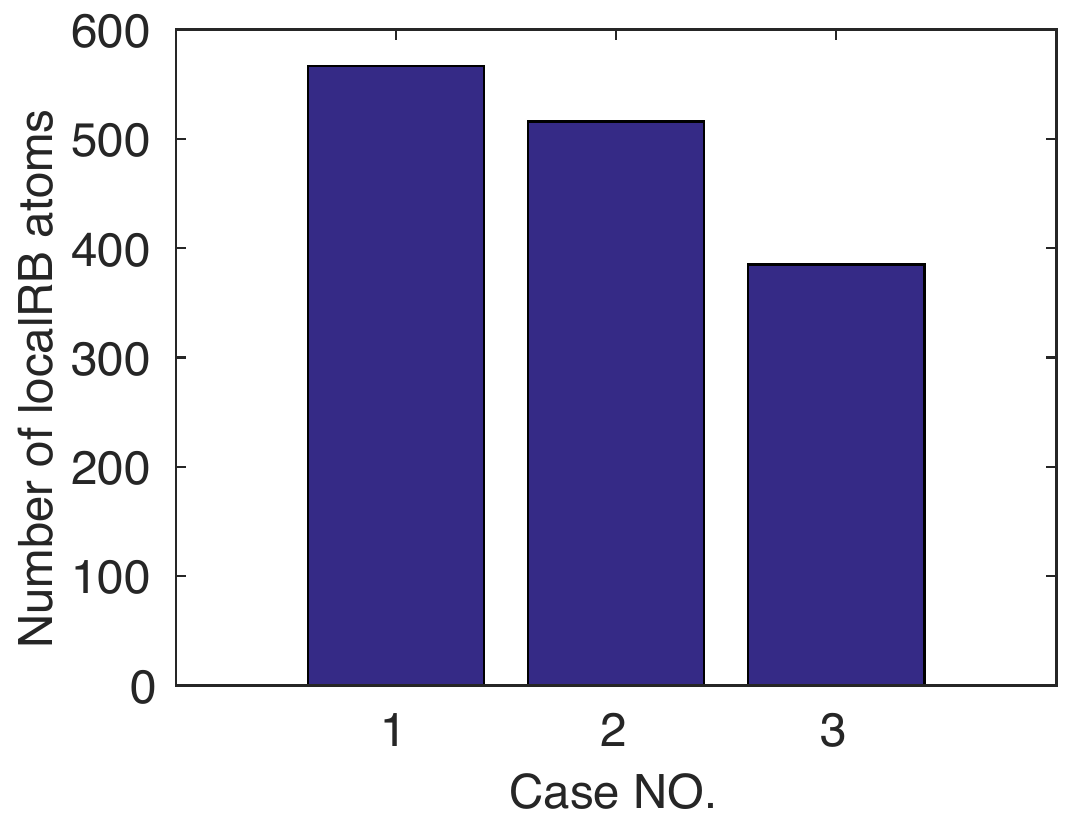}}
\caption[The total number of local RB atoms upon solving the Gibbs posterior for both material models and with different levels of noise. ]
{The total number of local RB atoms upon solving the Gibbs posterior for both material models and with different levels of noise (Case NO.).}
\label{fig:gibbs_elasticity_natom}
\end{figure}

\section{Conclusion}

In this work, we have proposed a particle-based approach with local RB surrogate model to approximate the Gibbs posterior for inverse problems. 
The Gibbs posterior has a particular advantage over the usual Bayesian approach, in the sense that it does not require an explicit model of the data generating mechanism (i.e., a likelihood function). The Gibbs posterior is applicable where the unknown parameters are connected to the data through a loss function. 
It provides a more general framework for updating belief distributions where the true data generating mechanism is unknown or difficult to specify.
We employed the local RB method to approximate the loss function in the Gibbs update formula. 
Based on a Sequential Monte Carlo (SMC) framework, we presented a method to progressive approximate the Gibbs posterior by simultaneously evolving the particles and adapting the local RB surrogate model in a sequential manner.
The emphasis of the local RB surrogate is navigated to a small fraction of the parameter space automatically by the evolving particles that progressively cluster over the support of the posterior.
Computational savings are achieved thanks to the local accuracy and the efficiency of our local RB method. 
Indeed, once the local RB surrogate becomes accurate enough (specified by a parameter representing the approximation accuracy) over the local support of the posterior, further evolution of the particles takes minimal cost.
Through several numerical examples that include advection-diffusion problems and elasticity imaging problems, we demonstrated the consistency of our method with the state-of-art Markov chain Monte Carlo (MCMC) method.
Furthermore, we showed that significant computational savings can be achieved to approximate the Gibbs posterior using our proposed method.

\bibliographystyle{plain}
\bibliography{lrb}

\end{document}